\def\jgpapersize{}
\tikzset{->-/.style={decoration={
  markings,
  mark=at position #1 with {\arrow{>}}},postaction={decorate}}}
\tikzset{-<-/.style={decoration={
  markings,
  mark=at position #1 with {\arrow{<}}},postaction={decorate}}}
\newtheorem{theorem}{Theorem}
\newtheorem{lemma}[theorem]{Lemma}
\newtheorem{corrol}[theorem]{Corollary}
\def\defn#1{{\bf #1}}
\def\Real{{\mathbb R}}
\def\innerprod(#1,#2){{\left<#1\,,\,#2\right>}}
\def\Set#1{{\left\{#1\right\}}}
\def\qquadtext#1{\qquad\textup{#1}\qquad}
\def\qquadand{\qquadtext{and}}
\def\quadtext#1{\quad\textup{#1}\quad}
\def\quadand{\quadtext{and}}
\def\dfrac#1#2{\frac{d #1}{d #2}}
\def\TSchange#1{{\color{red} #1}}
\def\TSchange{}
\def\Mman{{\cal M}}
\def\Cdot{{\dot{C}}}
\def\Hdot{{\dot{H}}}
\def\Interval{{\cal I}}
\def\DixVec{N}
\def\DixonI/{\textbf{Dixon I}}
\def\DixonII/{\textbf{Dixon II}}
\def\DixonIII/{\textbf{Dixon III}}
\def\DixonNSS/{\textbf{Dixon1977}}
\def\zetaMultiDixon{\xi}
\def\TtwoTen{{{\cal \phi}}}
\def\deltafour{\delta^{(4)}}
\def\deltaFour{\delta^{(4)}}
\newcommand{\LF}{\left(}
\newcommand{\RF}{\right)}
\def\iSa{a}
\def\iSb{b}
\def\iSc{c}
\def\DixTube{{\cal U}_{\textup{DT}}}
\def\Vmu{{\underline\mu}}
\def\Vnu{{\underline\nu}}
\def\Vrho{{\underline\rho}}
\def\Vlambda{{\underline\lambda}}
\def\Nspace{{\DixVec^\perp}}
\def\Rootg{{\omega}}
\def\DEfullstop{\,.}
\def\DEcomma{\,,}
\def\DEnone{}
\def\ArbTen{{Y}}
\def\Jalt{Y}
\def\zetaalt{\theta}
\def\iMa{{{\mu}}}
\def\iMb{{{\nu}}}
\def\iMc{{{\rho}}}
\def\iMd{{{\kappa}}}
\def\n{{{\nabla}}}
\def\jgDchange{}
\def\jgnchange{}
\def\PiIField{\overline{\boldsymbol\Pi}}
\def\FixInd#1{{\{#1\}}}
\def\Proj{{\pi}}
\begin{document}
\jgpapersize

\title{The tensorial representation of the distributional stress–energy
quadrupole and its dynamics}

\author{Jonathan Gratus$^{1,2,3,*}$, 
Spyridon Talaganis$^{1,4}$}

\maketitle

\noindent
$^1$ Physics department, Lancaster University, Lancaster LA1 4YB, \TSchange{UK},
\\
$^2$ The Cockcroft Institute Daresbury Laboratory,
Daresbury, Warrington
WA4 4AD, UK.
\\
$^3$ \texttt{j.gratus@lancaster.ac.uk, orcid:0000-0003-1597-6084}
\\
$^4$ \texttt{s.talaganis@lancaster.ac.uk, orcid:0000-0003-0113-7546}
\\
$^*$ Corresponding author.
 

\begin{abstract}
We investigate stress-energy tensors constructed from the covariant
derivatives of delta functions on a worldline. Since covariant
derivatives are used all the components transform as tensors. We
derive the dynamical equations for the components, up to quadrupole
order.  The components do, however, depend in a non-tensorial way, on
a choice of a vector along the worldline.  We also derive a number of
important results about general multipoles, including that their
components are unique, and all multipoles can be written using
covariant derivatives. We show how the components of a multipole are
related to standard moments of a tensor field, by parallelly
transporting that tensor field.

\end{abstract}



\section{Introduction}

Finite size objects can be approximated using moments. This
makes sense when the object is viewed from a distance. The usual
definition of moments involves integrating over a spatial hypersurface
of an integrand which involves coordinates. For example one may define
the quadrupole of a rank (2,0) tensor $S^{\mu\nu}$ as 
$\int_{\text{space}} z^{a}z^{b} S^{\mu\nu} d^3z$, where $a,b=1,2,3$
and $\mu,\nu=0,\ldots,3$.
Such an object will not,
in general, be tensorial. Instead it will be highly dependent on the
choice of coordinates, and there will be no simple way of
transforming the expression from one coordinate system to
another. In addition it is necessary to \TSchange{choose} a foliation of
spacelike hypersurfaces over which one can perform the
integration. By contrast one can represent an extended object by a
distribution (in the Schwartz sense) over a worldline, which
may represent the ``centre'' of the object. This distribution is
tensorial in that it acts on appropriate test tensors to give a
number. Using this action one can find the transformation rules for
the moments. In this article such distributions will be called
\defn{multipoles}. The components of the multipoles are intimately
related to the standard moments, as we show below.

There are many extended objects one may wish to approximate. For
example a scalar field concentrated at one point in space can be
represented \TSchange{by a} scalar multipole, while the
current of an extended charge may be represented by a vector valued multipole.
In this article we are primary interested in modelling, as a multipole,
the stress-energy tensor of an extended massive object. Unlike the current
which can be used directly as a source for electromagnetism, the
distributional stress-energy tensor cannot be used directly in
Einstein's equations. This is because, unlike Maxwell's equations,
Einstein's equations are not linear. Instead one can use them as a
source for linearised gravity. 

Even if one is not concerned with the effect on gravity, considering
the stress-energy multipole is useful. This is because the constraint
that it must be \TSchange{divergenceless} tells us information about the dynamics
of the moments. For the monopole, there is just a single component,
which is constant and corresponds to the total mass.  For the dipole
there are 10 components whose dynamics are completely determined by
the Mathisson-Papapetrou-Tulczyjew-Dixon ODEs
\cite{mathisson1937neue,tulczyjew1959motion}.  By contrast for the
quadrupole, as well as the 40 components which have ODEs, there are 20
free components \TSchange{\cite{gratus2020distributional}}. One may consider the dynamics of
these free components to be constitutive relations. They could either
be posited as part of the model or derived by considering the
underlying matter which makes up the extended body. For example, we
would expect different results for dust than for a neutron star.

Multipole distributions can be represented in a number of ways.  In
\cite{gratus2020distributional} we discuss the advantages and
disadvantages of using two different expressions to represent a
multipole over a worldline. These are labelled the \defn{Ellis}
\cite{Ellis:1975rp} and \defn{Dixon}
\cite{dixon1964covariant,dixon1967description,dixon1970dynamics,DixonII,DixonIII}
representations. The Ellis representation uses partial derivatives and
can be applied to multipoles over any line on any manifold. However,
it requires complicated coordinate transformation rules for the
components, involving derivatives of the Jacobi matrix and integration
over the world line
\cite{gratus2020distributional,gratus2018correct}. The Dixon
representation uses covariant derivatives and has the significant
advantage that the components are tensors.  Another advantage is that
the multipole naturally splits into different orders of
``poles''. That is, it is a sum of a monopole, a dipole, a quadrupole
and so on. We refer to this in this as the \defn{Dixon split}.

The price one pays for this is that the manifold needs to have a
\TSchange{connection, and} one must choose a vector along the worldline, which we
call the \defn{Dixon vector}. In general changing the
Dixon vector will result in a complicated transformation, which not
only mixes orders but involves higher derivatives of the components.

With regards to multipoles over timelike \TSchange{worldlines} in
general relativity, these constraints are not a problem. Spacetime is
endowed with a connection, usually the Levi-Civita connection, and there
is a preferred vector over the worldline given by it's tangent. Other possible choices of the Dixon vector are \TSchange{discussed} in the conclusion.

The key result of this article, given in section \ref{ch_DivF}, is the
derivation of the dynamics of the components of the Dixon stress-energy
quadrupole. This is the generalisation of the
Mathisson-Papapetrou-Tulczyjew-Dixon equations. 
These are derived from the \TSchange{divergenceless} of the
stress-energy tensors and shows how the quadrupole couples to the curvature and
derivatives of the curvature. Similar equations have been derived by
Steinhoff and Puetzfeld
\cite{steinhoff2010multipolar}. However their method leads to an
implicit equation for the dynamics. By contrast our equations are
clearer with the time derivatives of the relevant components given
explicitly. 
The method, which involves commuting
covariant derivatives can be extended to arbitrary order
multipole. However\TSchange{,} as noted in \cite{gratus2020distributional}\TSchange{,} the equations do not completely
determine the dynamics of the quadrupole and must be augmented with
20 constitutive relations. 

In his work, Dixon makes two conjectures for the dynamics of the
components of a quadrupole, which we summarise in section
\ref{ch_Dix}. We can compare these to our dynamical equations and see
that neither of them couple to the curvature. Thus they do not  correspond to
the \TSchange{divergenceless} condition and are not the generalisation of
the Mathisson-Papapetrou–Tulczyjew–Dixon equations for the
quadrupole.

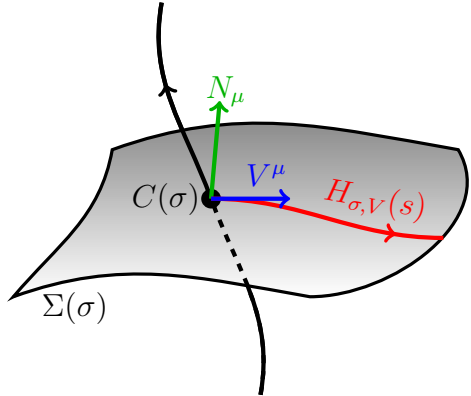
\begin{figure}
\centering
\begin{tikzpicture}[scale=1.3]
\draw [ultra thick,black,->-=0.8] (0.5,-2) to[out=80,in=-70] 
(0,0) to[out=110,in=-100] (-0.5,2) ;

\shadedraw [very thick] (-2,-1) to[out=20,in=170] 
   node [below,pos=0.2] {$\Sigma(\sigma)$} (1.,-1.) 
   to [out=0,in=-60] (2.5,.5) 
   to [out=170,in=20] (-1.,.5) 
   to [out=-100,in=50] cycle ;  

\draw [ultra thick,black,dashed] (0.5,-2) to[out=80,in=-70] (0,0) ;
\draw [ultra thick,black] (0,0) to[out=110,in=-100] (-0.5,2) ;

\fill (0,0) circle (0.1) node[left] {$C(\sigma)$}  ; 
\draw [ultra thick,black!30!green,->] (0,0) -- node
      [pos=1.1]{$\,\,\DixVec_{\iMa}$} (85:1) ;
\draw [ultra thick,red,->-=0.8] (0,0) to[out=0,in=-180] 
node[above,pos=0.7,rotate=-10] {$H_{\sigma,V}(s)$} (2.35,-.4) ;
\draw [ultra thick,blue,->] (0,0) -- node[above,pos=0.7] {$V^\mu$} (0:.8)  ;
\end{tikzpicture}
\caption{The worldline $C$ (black), with the Dixon geodesic
  hypersurface $\Sigma(\sigma)$ (grey) which intersects the worldline at
  $C(\sigma)$. Emanating from $C(\sigma)$ is the geodesic
  $H_{\sigma,V}(s)$ (red) which lies inside $\Sigma(\sigma)$. The
  tangent to $H_{\sigma,V}$ at $C(\sigma)$ is the vector $V^\mu$
  (blue), which is orthogonal to the Dixon vector $\DixVec_{\iMa}$
  (green).  All points in the Dixon geodesic hypersurface
  $\Sigma(\sigma)$ can be reached by a geodesic like $H_{\sigma,V}(s)$.
}
\label{fig_Dixon_Hyper}
\end{figure}

In addition to deriving the dynamical equations for stress-energy
quadrupole, in this article we also establish important results
about the Dixon representation of multipoles in general.  In
particular we show that all multipoles can be represented as Dixon
multipoles and that both the Dixon split and the components are
unique. We show how the components can be extracted from a
multipole by letting it act on appropriate test tensors.  All of these
results are valid for an arbitrary tensor multipole of arbitrary
order. Thus as well as the stress-energy multipole, they can also be
applied to electromagnetic current multipoles. 
The uniqueness of the splitting and the components is an
essential step in the derivation of the dynamical equations, as it is
applied to the divergence of the stress-energy tensor.

\vspace{1em}

This article is arranged as follows\TSchange{.} In section \ref{ch_SE} we give a
summary of the stress-energy tensor. In section \ref{ch_J} we give the
properties of arbitrary multipoles as described above, namely: the
Dixon split (section \ref{ch_Split}), the formula for the components
(section \ref{ch_ExtrComp}) and the demonstration that all multipoles
can be written as Dixon multipoles (section \ref{ch_Span}).  For this
we define a natural coordinate system, which is adapted to the
worldline and the Dixon vector (section \ref{ch_Hyp_R}). We also
define the Dixon geodesic hypersurface which is the hypersurface
generated from all geodesics which emanate from the worldline in a
direction orthogonal to the Dixon vector \TSchange{(figure
\ref{fig_Dixon_Hyper})}.  We make explicit the two point tensor
associated with parallel transport (section \ref{ch_Prll}) and give
the Taylor expansion of a test tensor which respects parallel
transport and the Dixon vector (section \ref{ch_Tay}).

As stated the definition of the spatial moments of a extended object
is in terms of the integral over space with respect to some coordinate
system.  Fortunately, once we have chosen a Dixon vector, one can use
Dixon \TSchange{geodesic} \TSchange{hypersurfaces, and the
  adapted coordinate system}. Alternatively \TSchange{one} can
parallel transport the tensor to the worldline. Thus the moments
defined this way are tensorial objects with \TSchange{tensorial}
transformation properties. Although they are still non-tensorially
dependent on the choice of the Dixon vector.  There is a natural way
of linking these moments with the components of the multipole
distribution. This is done by ``squeezing'' a regular
tensor. \TSchange{This is the process of} reducing a tensor's spatial
extent while keeping the quantity of matter constant. This is
demonstrated in section \ref{ch_Sqz}.

In section \ref{ch_DivF} we derive the dynamical equations for the
components of the stress-energy quadrupole. In section \ref{ch_Dix} we
compare these the equations proposed by Dixon.
Finally in section \ref{lm_phi_pll} we conclude.


\section{The stress-energy distribution}
\label{ch_SE}

We shall use the notation as defined in \cite{gratus2020distributional}. 
Let $\Mman$ be a spacetime with metric $g_{\iMa\iMb}$, signature
$(-,+,+,+)$, and the
Levi-Civita connection $\nabla_{\iMa}$ with
Christoffel symbol $\Gamma^{\iMa}_{\iMb\iMc}$. Here Greek indices 
$\iMa,\iMb=0,1,2,3$ and Latin indices $\iSa,\iSb=1,2,3$.  Let
$C:\Interval\to\Mman$ where $\Interval\subset\Real$ {is} the worldline
of the source\footnote{Even using proper time in Minkowski space, one
  cannot assume that $\Interval=\Real$ since it is possible to
  accelerate to lightlike infinity in finite proper time.}  with
components $C^{\iMa}(\sigma)$. At this point we do not assume that
$\sigma$ is proper time. Here we consider stress-energy tensors
$T^{\iMa\iMb}$ which are non-zero only on the worldline
$C^{\iMa}(\sigma)$, where it has Dirac--$\delta$ like
properties. Such stress-energy tensors are called
\defn{distributional}. 

{Since we are dealing with distributions it is most convenient to
  consider $T^{\iMa\iMb}$ as a tensor density\footnote{An integral
    over $\Mman$ must contain the measure $\Rootg$. There is therefore
    \jgDchange{the following} choice: one can choose $T^{\iMa\iMb}$ or
    $\TtwoTen_{\iMa\iMb}$ to be a density of weight 1, or put $\Rootg$
    explicitly in the integrand.  Here we have chosen to make
    $T^{\iMa\iMb}$ a density.} of weight 1. Thus $\Rootg^{-1}
  T^{\iMa\iMb}$ is a tensor, where
\begin{align}
\Rootg=\sqrt{-\det(g_{\iMa\iMb})}
\DEfullstop
\label{Intro_def_rootg}
\end{align}
  The definition of the covariant derivative of a tensor
  $\ArbTen^{\iMa\iMb\cdots}$
  density of weight 1 is given by
\begin{align}
\nabla_\iMa \ArbTen^{\iMb\iMc\cdots}
&=
\Rootg \nabla_\iMa (\Rootg^{-1} \ArbTen^{\iMb\iMc\cdots})
=
- 
\Gamma^\iMd_{\iMa\iMd}\, \ArbTen^{\iMb\iMc\cdots} 
+
\partial_\iMa \ArbTen^{\iMb\iMc\cdots} 
+
\Gamma^\iMb_{\iMa\iMd}\ArbTen^{\iMd\iMc\cdots} 
+
\Gamma^\iMc_{\iMa\iMd}\ArbTen^{\iMb\iMd\cdots} 
+\cdots
\label{Intro_ten_den}
\DEcomma
\end{align}
where $\Gamma^{\iMb}_{\iMa\iMc}$ are the Christoffel symbols.
In this article the term stress-energy tensor always refers to a
stress-energy tensor density of weight 1, even if not explicitly
stated. In addition the symbol $T^{\iMa\iMb}$ always refers to a
distributional stress-energy tensor density of weight 1 over the
worldline $C$.}

As already stated the Dixon representation depends crucially on a
choice of a vector field
$\DixVec_{\iMa}(\sigma)$ along the worldline $C$, called the
\defn{Dixon vector}. The only constraint on the choice of
$\DixVec_{\iMa}(\sigma)$ is that it is not orthogonal to the worldline
\jgnchange{$C$}, $\DixVec_{\iMa} \, \Cdot^{\iMa} \ne 0$. 
In section \ref{ch_DivF}, we need to 
project out the spatial components. Thus we scale $\DixVec_{\iMa}$ so that 
\begin{align}
\DixVec_{\iMa} \, \Cdot^{\iMa} = 1
\label{Intro_Tab_Dixon_N_NonOrth}
\DEfullstop
\end{align}
As long as the worldline $C$ is timelike, a natural choice of the
Dixon vector is $\Cdot^{\iMa}$, i.e. 
\TSchange{$\DixVec_{\iMa}
=
-g_{\iMa\iMb}\, \Cdot^{\iMb}
$}
but this is not the only choice.
{Having chosen $\DixVec_{\iMa}$, the \defn{Dixon} representation of a
multipole is given 
\cite[Equation (1.9)]{dixon1967description}\cite[Equation  (4.18),
  (7.4), (7.5)]{DixonII}} by
\begin{align}
T^{\iMa\iMb} = \sum_{r=0}^k {\frac{1}{r!}} \nabla_{\iMc_1} \cdots \nabla_{\iMc_r}
\int_\Interval \zetaMultiDixon^{\iMa\iMb \iMc_1\ldots\iMc_r}(\sigma)
\,
\deltaFour\big(x-C(\sigma)\big)\,d\sigma
\label{Intro_Tab_Dixon_Multi}
\DEfullstop
\end{align}
Tulczyjew \TSchange{\cite{tulczyjew1959motion}} calls this the
canonical \TSchange{form}, in the case when $\DixVec_{\iMa}=-\Cdot_{\iMa}$.

Since $T^{\iMa\iMb}$ is a stress-energy tensor, we have the symmetry
of the indices
\begin{align}
T^{\iMa\iMb} = T^{\iMb\iMa}
\label{SE_T_Symm}
\DEcomma
\end{align}
which leads to 
\begin{align}
\zetaMultiDixon^{\iMa\iMb \iMc_1\ldots\iMc_r}
=
\zetaMultiDixon^{\iMb\iMa \iMc_1\ldots\iMc_r}
\label{SE_Zeta_Symm}
\DEfullstop
\end{align}
We demand that the components 
$\zetaMultiDixon^{\iMa \iMb \iMc_1\ldots\iMc_k}$ are orthogonal to 
the vector $\DixVec_{\iMa}$,
\begin{align}
\DixVec_{\iMc_j}\
\zetaMultiDixon^{\iMa \iMb \iMc_1\ldots\iMc_k}
= 0
\label{Intro_Tab_Dixon_orthog}
\DEnone
\end{align}
for $j=1,\ldots,k$.
{The covariant derivatives do not commute.
Instead they give rise to curvature terms and lower the number of
derivatives. We therefore make the minimal choice and impose 
$\zetaMultiDixon^{\iMa \iMb \iMc_1\ldots\iMc_k}$ are symmetric in the
relevant indices.}
\begin{align}
\zetaMultiDixon^{\iMa \iMb \iMc_1\ldots\iMc_k} 
=
\zetaMultiDixon^{\iMa\iMb (\iMc_1\ldots\iMc_k)}
\label{Intro_Zeta_Dixon_sym}
\DEfullstop
\end{align}
{Since $T^{\iMa\iMb}$ is a tensor density this enables us to throw the
covariant derivative over onto the test tensor $\TtwoTen_{\iMa\iMb}$, giving 
\begin{align}
\int_\Mman T^{\iMa\iMb} \ \TtwoTen_{\iMa\iMb} \ d^4x 
&=
\sum_{r=0}^k (-1)^{r} {\frac{1}{r!}} \int_\Interval  
\zetaMultiDixon^{\iMa\iMb \iMc_1\ldots\iMc_r}(\sigma)\,
\big(\nabla_{\iMc_1} \cdots \nabla_{\iMc_r} \TtwoTen_{\iMa\iMb} \big)\big|_{C(\sigma)}
\ d\sigma
\label{Intro_Tab_Dixon_Multi_action}
\DEfullstop
\end{align}
This follow since if
$v^\iMa$ is a vector density of weight 1 then from (\ref{Intro_ten_den})
$\nabla_\iMa\,v^\iMa=\partial_\iMa\,v^\iMa$. 
}

At the quadrupole level the stress-energy distribution becomes
\begin{align}
T^{\mu\nu}
&=
\int_\Mman \xi^{\mu\nu}\,
\deltafour(z-C)\,d\sigma
+
\nabla_\rho\,
\int_\Mman \xi^{\mu\nu\rho}\,
\deltafour(z-C)\,d\sigma
+
\tfrac12 
\nabla_\rho\nabla_\sigma\,
\int_\Mman \xi^{\mu\nu\rho\sigma}\,
\deltafour(z-C)\,d\sigma
\label{Basic_Tmunu_Dix_quad}
\end{align}
where from (\ref{Intro_Tab_Dixon_orthog})
\begin{align}
\DixVec_\rho \xi^{\mu\nu\rho} = 0
\qquadand
\DixVec_\rho \xi^{\mu\nu\rho\sigma} = 0
\label{Basic_Dix_Cons_quaf}
\DEcomma
\end{align}
and from (\ref{SE_Zeta_Symm}) and (\ref{Intro_Zeta_Dixon_sym})
\begin{align}
\xi^{\mu\nu} = \xi^{\nu\mu}
,\quad
\xi^{\mu\nu\rho} = \xi^{\nu\mu\rho}
,\quad
\xi^{\mu\nu\rho\sigma} = \xi^{\nu\mu\rho\sigma}
\quadand
\xi^{\mu\nu\rho\sigma}
=
\xi^{\mu\nu\sigma\rho}
\label{SE_Xi_Symm_quad}
\DEfullstop
\end{align}

In this article we assume that $T^{\mu\nu}$ is \TSchange{divergenceless},
i.e.
\begin{align}
\nabla_\mu\,T^{\mu\nu} = 0
\label{SE_Div_Free}
\DEfullstop
\end{align}
This gives rise to dynamical equations for the components
$\xi^{\nu\mu}$, $\xi^{\nu\mu\rho}$ and $\xi^{\mu\nu\sigma\rho}$, which
we give below in theorem \ref{thm_DixQuad_Ten}.

\section{Properties of the Dixon representation of arbitrary
  distributions over worldlines}
\label{ch_J}

In this section, general details of multipoles are presented, which are
needed to analyse the stress-energy distribution. The most important
result we will use, is that the Dixon components are unique (section
\ref{ch_ExtrComp}).  This is needed so that when we take the
{divergence} of the stress-energy tensor and write that as a
Dixon distribution, we know that all the terms must vanish.

Since the results are true for all tensor distributions, not
simply the stress-energy tensor, we have chosen to derive the results
for an arbitrary tensor of rank $(m,0)$ and order $n$.

There are a number of concepts we need to define in order to show the
uniqueness of the components. First we need to establish the Dixon
geodesic hypersurfaces, the adapted coordinate system and the radial
vector (section \ref{ch_Hyp_R}), the notation for parallel transport
(section \ref{ch_Prll}) and the Dixon split (section \ref{ch_Split}).

The next step is to show that all multipoles are
Dixon multipoles (section \ref{ch_Span}). For this we need to be able
to take a Taylor expansion of the test tensor (section
\ref{ch_Tay}). Since we are dealing with tensors it is necessary to
transport the tensors around. There is no unique way of transporting
tensors and different choices will lead to different Taylor
expansions. The natural choice in this case is to use parallel
transport along the geodesics emanating from the worldline.

The final subsection of this section relates the moments of a regular
distribution with the components of a multipole. This is achieved by
squeezing the distribution.

\vspace{1em}

An arbitrary tensor density distribution
of rank $(m,0)$, weight 1, and order $n$
with support on $C$ is given by
\begin{align}
J^{\mu_1\cdots \mu_m} 
= 
\sum_{k=0}^{N} 
\frac{1}{k!}\, \nabla_{\rho_1}\ldots\nabla_{\rho_k}
\int_\Interval \zeta^{\mu_1\cdots \mu_m\rho_1\cdots\rho_k} 
\deltafour\big(x-C(\sigma)\big)\,d\sigma
\label{J_arb_J_form}
\DEcomma
\end{align}
where
\TSchange{
\begin{align}
\DixVec_{\rho_j} \zeta^{\mu_1\cdots\mu_m\rho_1\cdots \rho_k} = 0
\label{J_DixVec_Cons}
\end{align}
}
\TSchange{for $j=1,\dots,k$} and
\begin{align}
\zeta^{\mu_1\cdots\mu_m\rho_1\cdots \rho_k}
=
\zeta^{\mu_1\cdots\mu_m(\rho_1\cdots \rho_k)}
\label{J_Sym_J}
\DEfullstop
\end{align}
Unlike $\xi^{\mu\nu \rho_1\cdots \rho_k}$ we do not assume any symmetry
on the indices $\mu_1,\ldots,\mu_m$. 
For convenience when dealing with arbitrary tensors we replace the
indices $\mu_1\cdots\mu_m$ with the symbol $\Vmu$ so that
$J^{\mu_1\cdots\mu_m}=J^{\Vmu}$ and
$\zeta^{\mu_1\cdots\mu_m\rho_1\cdots \rho_k}
=\zeta^{\Vmu\rho_1\cdots\rho_k}$.

Let us introduce the notation for \TSchange{the symmetric sum} of 
multiple covariant derivatives
\begin{align}
\nabla^k_{\rho_1\cdots\rho_k} = \nabla_{(\rho_1}\ldots\nabla_{\rho_k)}
\label{J_def_nabla_k}
\DEfullstop
\end{align}

The result of applying a test tensor $\phi_{\Vmu}$ is given by
\begin{align}
\int_\Mman J^\Vmu\,\phi_{\Vmu}\, d^4 x
=
\sum_{k=0}^{n} 
\frac{1}{k!}\, (-1)^k 
\int_\Interval 
\zeta^{\Vmu\rho_1\cdots\rho_k} \,
(\nabla^k_{\rho_1\cdots\rho_k}\phi_{\Vmu})|_{C(\sigma)}
\,d\sigma
\label{J_J_phi}
\DEfullstop
\end{align}

Although the commutator of two covariant derivatives gives rise to
curvature terms, these are of a lower order so we can always write the
distribution using the symmetric sum of indices.

The key advantage of imposing the constraints
(\ref{J_DixVec_Cons}), (\ref{J_Sym_J}) is that they give rise
to unique components $\zeta^{\Vmu\rho_1\cdots\rho_k}$. We will see this
below is section \ref{ch_ExtrComp}.

\subsection{Notation and results for covariant derivatives}
\label{ch_Not}

Given a vector $V$ and a tensor $S^{\Vnu}_{\Vmu}$ introduce the
notation%
\footnote{We write vectors in the usual index notation as
  $V^\iMa$. However when a vector is an argument to a function or is a
  subscript we will drop the index and just write $V$.} 
\begin{align}
\nabla_V S^{\Vnu}_{\Vmu} = V^\rho \nabla_\rho S^{\Vnu}_{\Vmu}
\label{Not_def_nabla_V}
\end{align}
and the notation
\begin{align}
\nabla^r_V S^{\Vnu}_{\Vmu} = 
V^{\rho_1}\cdots V^{\rho_k}
\nabla^r_{\rho_1\cdots\rho_k} 
S^{\Vnu}_{\Vmu}
\label{Not_def_nabla_V_k}
\DEfullstop
\end{align}
Observe that in general 
$\nabla^r_V  S^{\Vnu}_{\Vmu} \ne (\nabla_V)^r
S^{\Vnu}_{\Vmu}$. However they do coincide when we have a geodesic.

\begin{lemma}
If $H(s)$ is a geodesic and $\phi_{\Vmu}$ is a tensor then
\begin{align}
\nabla^r_\Hdot \phi_{\Vmu}
=
(\nabla_\Hdot)^r\phi_{\Vmu}
\label{Not_nabla^r}
\DEfullstop
\end{align}

\end{lemma}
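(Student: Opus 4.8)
The plan is to prove the identity by induction on $r$, reading $(\nabla_\Hdot)^r\phi_{\Vmu}$ as the $r$-fold covariant derivative of $\phi_{\Vmu}$ taken along the curve $H$, which I denote $\tfrac{D}{ds}$. The base case $r=1$ is immediate: by (\ref{Not_def_nabla_V}) and (\ref{Not_def_nabla_V_k}) both sides equal $\Hdot^{\rho}\nabla_{\rho}\phi_{\Vmu}$. The two ingredients that will drive the induction are the geodesic condition $\nabla_\Hdot\Hdot=0$ and the total symmetry of a product of copies of $\Hdot$.

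For the inductive step I would assume
\begin{align}
(\nabla_\Hdot)^{r-1}\phi_{\Vmu}
=
\Hdot^{\rho_1}\cdots\Hdot^{\rho_{r-1}}\,
\nabla_{\rho_1}\cdots\nabla_{\rho_{r-1}}\phi_{\Vmu}
\nonumber
\end{align}
and apply $\nabla_\Hdot=\tfrac{D}{ds}$ once more. By the Leibniz rule for the covariant derivative along the curve, the derivative either lands on one of the tangent factors $\Hdot^{\rho_i}$ or on the field $\nabla_{\rho_1}\cdots\nabla_{\rho_{r-1}}\phi_{\Vmu}$. Each term of the first kind carries a factor $\tfrac{D\Hdot^{\rho_i}}{ds}$ and hence vanishes, since $H$ is a geodesic. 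As $\nabla_{\rho_1}\cdots\nabla_{\rho_{r-1}}\phi_{\Vmu}$ is a genuine tensor field on $\Mman$, its derivative along the curve is $\Hdot^{\rho_r}\nabla_{\rho_r}(\nabla_{\rho_1}\cdots\nabla_{\rho_{r-1}}\phi_{\Vmu})$, so
\begin{align}
(\nabla_\Hdot)^{r}\phi_{\Vmu}
=
\Hdot^{\rho_1}\cdots\Hdot^{\rho_{r}}\,
\nabla_{\rho_r}\nabla_{\rho_1}\cdots\nabla_{\rho_{r-1}}\phi_{\Vmu}
\DEfullstop
\nonumber
\end{align}

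The final step is to observe that $\Hdot^{\rho_1}\cdots\Hdot^{\rho_r}$ is totally symmetric in $\rho_1,\dots,\rho_r$, so contracting it against any ordering of the covariant derivatives automatically projects onto the symmetric part. Hence the right-hand side equals $\Hdot^{\rho_1}\cdots\Hdot^{\rho_r}\,\nabla_{(\rho_1}\cdots\nabla_{\rho_r)}\phi_{\Vmu}=\nabla^r_\Hdot\phi_{\Vmu}$ by (\ref{J_def_nabla_k}) and (\ref{Not_def_nabla_V_k}), closing the induction. It is worth noting that this symmetric contraction is exactly what renders the non-commutativity of the $\nabla_{\rho_i}$ harmless: any commutator produced by reordering is a curvature term antisymmetric in the swapped pair and is therefore annihilated by the symmetric factor $\Hdot^{\rho_i}\Hdot^{\rho_j}$, so no explicit curvature calculation is needed.

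I expect the only delicate point to be the bookkeeping in the inductive step, namely a careful statement of the Leibniz rule for the derivative along the curve acting on the product of the tangent factors with the field, together with the verification that the geodesic condition $\nabla_\Hdot\Hdot=0$ removes precisely the unwanted terms. (An alternative would be to extend $\Hdot$ to a geodesic vector field in a neighbourhood and argue with $\nabla_V V=0$, but the along-the-curve formulation avoids that extra construction.) Once this is set up the remainder is routine.
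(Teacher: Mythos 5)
Your proof is correct and follows essentially the same route as the paper's: both arguments rest on the geodesic condition $\nabla_\Hdot\Hdot=0$ annihilating the Leibniz terms on the tangent factors, and on the total symmetry of $\Hdot^{\rho_1}\cdots\Hdot^{\rho_r}$ making the symmetrization in $\nabla^r_{\rho_1\cdots\rho_r}$ harmless. The only differences are presentational: you run the induction upward from $(\nabla_\Hdot)^r$ (and are somewhat more careful in interpreting it as the derivative along the curve), whereas the paper peels derivatives off $\nabla^r_\Hdot$ one at a time with the induction left implicit in its chain of equalities.
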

\begin{proof}
\begin{align*}
\nabla^r_\Hdot \phi_{\Vmu}
&=
\Hdot^{\rho_1}\cdots\Hdot^{\rho_r}\nabla^r_{\rho_1\cdots\rho_r} \phi_{\Vmu}
=
\Hdot^{\rho_1}\cdots\Hdot^{\rho_r}\nabla_{\rho_r}\cdots\nabla_{\rho_1} \phi_{\Vmu}
=
\Hdot^{\rho_1}\cdots\Hdot^{\rho_{r-1}}
\nabla_\Hdot\big(\nabla_{\rho_{r-1}}\cdots\nabla_{\rho_1} \phi_{\Vmu}\big)
\\&=
\nabla_\Hdot\big(\Hdot^{\rho_1}\cdots\Hdot^{\rho_{r-1}}
\nabla_{\rho_{r-1}}\cdots\nabla_{\rho_1} \phi_{\Vmu}\big)
\\&\qquad
-
\Big(
(\nabla_\Hdot\Hdot^{\rho_1})\Hdot^{\rho_2}\cdots\Hdot^{\rho_{r-1}}
+\cdots+
\Hdot^{\rho_1}\cdots\Hdot^{\rho_{r-2}}
(\nabla_\Hdot\Hdot^{\rho_{r-1}})
\Big)
\nabla_{\rho_{r-1}}\cdots\nabla_{\rho_1} \phi_{\Vmu}
\\&=
\nabla_\Hdot\big(\Hdot^{\rho_1}\cdots\Hdot^{\rho_{r-1}}
\nabla_{\rho_{r-1}}\cdots\nabla_{\rho_1} \phi_{\Vmu}\big)
\\&=\cdots=
(\nabla_\Hdot)^r\phi_{\Vmu}
\DEfullstop
\end{align*}
\end{proof}


\subsection{The Dixon geodesic hypersurface, 
Dixon \TSchange{adapted} coordinate system and the radial vector}
\label{ch_Hyp_R}

Given $\sigma\in\Interval$\TSchange{,} the set of \TSchange{vectors} which are perpendicular to
$\DixVec_\mu$ are denoted 
\begin{align}
\Nspace(\sigma) = \Set{
\textup{vectors $V^\mu$ at the point $C(\sigma)$ }
\big|\,N_\mu\,V^\mu=0}
\label{Basic_def_NVec}
\DEfullstop
\end{align}
Given $V^\mu\in\Nspace(\sigma)$ let 
$H_{\sigma,V}(s)$ be the geodesic satisfying
\begin{align}
H_{\sigma,V}(0)=C(\sigma)
\qquadand
\Hdot^\mu_{\sigma,V}(0)=V^\mu
\label{DivF_HorGeo_1}
\DEcomma
\end{align}
see figure \ref{fig_Dixon_Hyper}. Note that the parameter $s$ is not
normalised, so there is no constraint on the value of 
$g_{\mu\nu}\Hdot_{\sigma,V}^\mu(s)\Hdot_{\sigma,V}^\nu(s)
= g_{\mu\nu}V^\mu V^\nu$.
The domain of $H_{\sigma,V}(s)$ is distinct from
the domain $\Interval$ of $C(\sigma)$. It will always contain the initial value
$0$. Although it may not go all the way to
$\pm\infty$ it will go to the edge of the Dixon tube which is defined below.

It is useful to label the point $P(\sigma,V)\in\Mman$ reached from $C(\sigma)$
travelling along the geodesic $H_{\sigma,V}(s)$ a parameter
distance $1$, that is
\begin{align}
P(\sigma,V) = H_{\sigma,V}(1)
\label{Basic_def_P}
\DEfullstop
\end{align}
This gives the points for the geodesic $H_{\sigma,V}$ as
\begin{align}
H_{\sigma,V}(s) = H_{\sigma,s V}(1) = P(\sigma,s\,V)
\label{Basic_H_V_s}
\DEfullstop
\end{align}
All the points $P(\sigma,V)$, which are uniquely defined by
$(\sigma,V)$ form a neighbourhood of $C$.  We call this the
\defn{Dixon tube}, $\DixTube\subset \Mman$. Clearly $C\in\DixTube$.
There may be points where two different geodesics $H_{\sigma,V}$ and
$H_{\sigma',V'}$, with $\sigma\ne\sigma'$ intersect.  However these
will be at some distance from $C$ and outside of $\DixTube$.  In
addition there may be points which are unreachable from $C$.  The
set $\Set{(\sigma,V^\mu)\,\big|\,\sigma\in\Interval,\ V^\mu\in
  \Nspace(\sigma),\ P(\sigma,V)\in\DixTube}$, is diffeomorphic to
$\DixTube$.

Since we are dealing with
Schwartz distributions we demand that all test forms have compact support
which lie in $\DixTube$. This is not a significant restriction as any
other test function can be written, using partitions of unity, as the
sum of two test tensors, one with support inside $\DixTube$ and
another test tensor with support away from $C$. This second test
tensor, when acted upon by distributions on $C$ will always give zero.

To define the {Dixon adapted coordinate system} we require a frame
along $C$, $\Set{e_0,e_1,e_2,e_3}$ where $e^\mu(\sigma)$ is a vector
at the point $C(\sigma)$. We set $(e_0)^\mu=\Cdot^{\mu}$ and require
$(e_1)^\mu,(e_2)^\mu,(e_1)^\mu\in\Nspace(\sigma)$.  Thus for any
$\hat{U}^\mu\in\Nspace(\sigma)$ we can decompose it in terms of this
basis, giving 
$\hat{U}^\mu={U}^1 (e_1)^\mu + {U}^2 (e_2)^\mu + {U}^3 (e_3)^\mu$.

The \defn{Dixon adapted coordinate system} $(\sigma,z^1,z^2,z^3)$, is
given on the Dixon tube such that
\begin{align}
\sigma|_{P(\sigma',V)}=P^0(\sigma',V) = \sigma'
\qquadand
z^\iSa|_{P(\sigma',V)}=P^a(\sigma',V) = V^\iSa
\label{DivF_DixCoord}
\DEfullstop
\end{align}
We set $z^0=\sigma$ so that we can label the coordinates of a point
$p$ by $(p^0,p^1,p^2,p^3)$. We use Latin indices $a,b,\ldots=1,2,3$ and use
the summation convention over Latin indices to sum from 1 to 3. Any
reference in the article to adapted coordinates, or whenever Latin
indices are used we always mean the Dixon adapted coordinate system.

\begin{lemma}
In the Dixon \TSchange{adapted} coordinate system
\begin{align}
H_{\sigma,V}^0(s) = \sigma, \quad
H_{\sigma,V}^a(s) = s\,V^a, \quad
\Hdot_{\sigma,V}^0 = 0
\quadand
\Hdot_{\sigma,V}^a(s) = V^a
\label{Basic_DixCords_H}
\DEfullstop
\end{align}
\end{lemma}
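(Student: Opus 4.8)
The plan is to recognise that this lemma is essentially a direct unpacking of the definition of the Dixon adapted coordinate system, once we invoke the reparametrisation identity (\ref{Basic_H_V_s}). First I would recall that (\ref{Basic_H_V_s}) gives $H_{\sigma,V}(s) = P(\sigma, sV)$. This rests on the standard affine rescaling property of geodesics: the curve $s\mapsto H_{\sigma,V}(s)$ and the curve $s\mapsto H_{\sigma,sV}(1)$ both start at $C(\sigma)$ with initial velocity $sV$ (scaling the initial velocity by $s$ is the same as rescaling the affine parameter by $s$), so uniqueness of geodesics identifies them, and evaluating $P(\sigma,sV)=H_{\sigma,sV}(1)$ via (\ref{Basic_def_P}) closes the loop.

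Next, since $V\in\Nspace(\sigma)$ and $\Nspace(\sigma)$ is the linear subspace of vectors annihilated by $\DixVec_\mu$, the scaled vector $sV$ also lies in $\Nspace(\sigma)$ for every $s$ in the domain, so the coordinate prescription (\ref{DivF_DixCoord}) is directly applicable to the point $P(\sigma, sV)$. Reading off the coordinates of that point, the zeroth coordinate is $\sigma|_{P(\sigma,sV)}=\sigma$ and the spatial coordinates are $z^a|_{P(\sigma,sV)} = (sV)^a = s\,V^a$, where I decompose $V = V^a e_a$ in the frame $\{e_1,e_2,e_3\}$ so that $sV$ has components $sV^a$. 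Combining this with $H_{\sigma,V}(s) = P(\sigma, sV)$ yields $H^0_{\sigma,V}(s)=\sigma$ and $H^a_{\sigma,V}(s)=s\,V^a$.

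Finally I would obtain the velocity components by differentiating the coordinate expressions with respect to $s$. Since in any chart the components of the tangent vector of a curve are the $s$-derivatives of the coordinate functions, we get $\Hdot^0_{\sigma,V}(s) = \tfrac{d}{ds}\sigma = 0$ and $\Hdot^a_{\sigma,V}(s) = \tfrac{d}{ds}(s\,V^a)=V^a$, which are the remaining two identities.

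I do not expect a genuine obstacle: the whole content of the lemma is that the adapted coordinates straighten the geodesics $H_{\sigma,V}$ into coordinate rays emanating from $C(\sigma)$. The only points requiring a moment of care are confirming that $sV$ remains in $\Nspace(\sigma)$ so that (\ref{DivF_DixCoord}) can be applied, and making explicit that the reparametrisation identity (\ref{Basic_H_V_s}) is exactly what converts ``travel a parameter distance $s$ along $H_{\sigma,V}$'' into ``the point $P(\sigma,sV)$'', whose coordinates are fixed by the construction of the chart.
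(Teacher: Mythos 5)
Your proposal is correct and follows essentially the same route as the paper's proof: invoke the rescaling identity $H_{\sigma,V}(s)=H_{\sigma,sV}(1)=P(\sigma,sV)$, read off the coordinates of $P(\sigma,sV)$ from the definition of the Dixon adapted coordinate system, and differentiate in $s$ for the tangent components. The extra care you take (uniqueness of geodesics behind the rescaling, $sV\in\Nspace(\sigma)$) is simply material the paper leaves implicit in equation \eqref{Basic_H_V_s}.
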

\begin{proof}
Equation (\ref{Basic_DixCords_H}.1)
follows\footnote{(\ref{Basic_DixCords_H}.1) refers to the first
  equation in (\ref{Basic_DixCords_H}).} from 
$H_{\sigma',V}^0(s) = H_{\sigma',sV}^0(1) = P(\sigma',s V)^0=\sigma'$. 
(\ref{Basic_DixCords_H}.2) follows from 
$H_{\sigma,V}^a(s) = H_{\sigma,sV}^a(1) = P(\sigma,s V)^a=s V^a$. 
For (\ref{Basic_DixCords_H}.3) and  (\ref{Basic_DixCords_H}.4) 
\begin{align*}
\Hdot_{\sigma,V}^0 
=
\dfrac{}{s} (\sigma)
=
0
\qquadand
\Hdot_{\sigma,V}^a 
=
\dfrac{}{s} ({H_{\sigma,sV}^a(s)})
=
\dfrac{}{s} (s\, V^a)
=
V^a
\DEfullstop
\end{align*}
\end{proof}

\vspace{1em}

The \defn{radial vector} field, $R^\mu$, is a vector field on $\Mman$ given by 
\begin{align}
R^\mu|_{P(\sigma,V)} = \Hdot^\mu_{\sigma,V}(1)
\label{DivF_R_1}
\DEfullstop
\end{align}
It is a key ingredient for the Dixon split. In \cite{gratus2020distributional}, the radial
vector was not defined completely. Instead, only some of the necessary
properties of the radial vector field were given, in order to give
the Dixon split up to quadrupole order. Here the specific radial
vector field is defined in order to give the Dixon split to arbitrary
order.

\begin{lemma}
The radial vector has the properties that 
\begin{align}
R^\mu|_C = 0,\qquad
\nabla_{U_1} R^\mu = U_1,\qquad
{U_1^{\rho_1} U_2^{\rho_2}\cdots U_r^{\rho_r}}\nabla^r_{\rho_1\cdots\rho_r} R^\mu = 0
\label{DivF_R_prop}
\DEcomma
\end{align}
for $r\ge 2$, $U^{\rho_i}_i\in \Nspace(\sigma)$.

In the Dixon \TSchange{adapted} coordinate system
\begin{align}
R^0 = 0 \qquadand
R^a = z^\iSa
\label{DivF_R_DixCoord}
\DEfullstop
\end{align}
\end{lemma}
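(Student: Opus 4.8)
The plan is to establish the adapted--coordinate formulas (\ref{DivF_R_DixCoord}) first, since they drop out immediately from the preceding lemma, and then to deduce the three covariant identities in (\ref{DivF_R_prop}), treating $r=0,1$ directly and the top order $r\ge2$ through a single computation along the emanating geodesics followed by a polarization argument. For the coordinates I would simply combine the definition with the previous lemma: by (\ref{DivF_R_1}) and (\ref{DivF_DixCoord}) the point $P(\sigma,V)$ carries adapted coordinates $(\sigma,V^1,V^2,V^3)$ and $R^\mu|_{P(\sigma,V)}=\Hdot^\mu_{\sigma,V}(1)$; the formulas (\ref{Basic_DixCords_H}) then give $\Hdot^0_{\sigma,V}(1)=0$ and $\Hdot^a_{\sigma,V}(1)=V^a$, so $R^0=0$ and $R^a=z^a$ throughout the Dixon tube. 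The first identity $R^\mu|_C=0$ is then immediate, because $C$ is the locus $z^a=0$ in these coordinates.

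The core of the argument is one observation along each geodesic $H_{\sigma,V}$. From the scaling relation $H_{\sigma,V}(s)=P(\sigma,sV)$ in (\ref{Basic_H_V_s}), which also yields $H_{\sigma,sV}(u)=H_{\sigma,V}(su)$, I would show that the restriction of the ambient field $R$ to the geodesic is $R^\mu|_{H_{\sigma,V}(s)}=s\,\Hdot^\mu_{\sigma,V}(s)$. Differentiating this restriction covariantly along the curve and using the geodesic equation $\nabla_\Hdot\Hdot=0$ gives $\nabla_\Hdot R=\Hdot+s\,\nabla_\Hdot\Hdot=\Hdot$, and hence $(\nabla_\Hdot)^r R=0$ for every $r\ge2$. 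Evaluating $\nabla_\Hdot R=\Hdot$ at $s=0$, where $\Hdot=V$, produces $\nabla_V R^\mu=V^\mu$ for every $V\in\Nspace(\sigma)$, which is the middle identity of (\ref{DivF_R_prop}).

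For the $r\ge2$ identity I would invoke the covariant--derivative identity (\ref{Not_nabla^r})---whose proof uses only $\nabla_\Hdot\Hdot=0$ and so applies verbatim to the contravariant field $R^\mu$---to write $V^{\rho_1}\cdots V^{\rho_r}\nabla^r_{\rho_1\cdots\rho_r}R^\mu=\nabla^r_\Hdot R^\mu=(\nabla_\Hdot)^r R^\mu$, which vanishes at $s=0$ by the previous step. This settles the case where all arguments coincide, $U_1=\cdots=U_r=V$. Since $\nabla^r_{\rho_1\cdots\rho_r}$ is totally symmetric in its lower indices, the map $(U_1,\dots,U_r)\mapsto U_1^{\rho_1}\cdots U_r^{\rho_r}\nabla^r_{\rho_1\cdots\rho_r}R^\mu$ is a symmetric $r$--linear form on $\Nspace(\sigma)$; vanishing on the diagonal forces it to vanish on all of $\Nspace(\sigma)$ by polarization over $\Real$, which is exactly the last identity of (\ref{DivF_R_prop}).

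The step I expect to need the most care is the restriction relation $R^\mu|_{H_{\sigma,V}(s)}=s\,\Hdot^\mu_{\sigma,V}(s)$ together with the covariant differentiation that follows: one must be explicit that $\nabla_\Hdot R$, for the ambient field $R$ read along the curve, coincides with the intrinsic covariant derivative of the curve field $s\mapsto s\,\Hdot_{\sigma,V}(s)$, so that the Leibniz rule and the geodesic equation may legitimately be applied. The closing polarization is routine but rests on the total symmetry built into $\nabla^r$ and on working over a field of characteristic zero, so I would state both of these explicitly.
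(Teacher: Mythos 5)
Your proposal is correct and follows essentially the same route as the paper's own proof: the restriction relation $R^\mu|_{H_{\sigma,V}(s)}=s\,\Hdot^\mu_{\sigma,V}(s)$ differentiated with the geodesic equation, the identity (\ref{Not_nabla^r}) to reduce $\nabla^r_{\Hdot}$ to $(\nabla_{\Hdot})^r$, a polarization argument (the paper's sums $V=U_1\pm U_2\pm\cdots\pm U_r$) for the $r\ge 2$ case, and (\ref{Basic_DixCords_H}) with (\ref{DivF_DixCoord}) for the coordinate formulas. Your added care about applying (\ref{Not_nabla^r}) to the contravariant field $R^\mu$ and about reading the ambient field along the curve are worthwhile clarifications but do not change the argument.
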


\begin{proof}
Clearly $R^\mu|_C=0$.

Fix $\sigma$ and $V$ and let $H=H_{\sigma,V}$ so that 
$H(s)=P(\sigma,s V)$ and $s \Hdot^\mu(s)=R^\mu|_{H(s)}$.
Then
\begin{align*}
\nabla_{\Hdot(s)} R^\mu 
=
\nabla_{\Hdot(s)} \big(s\Hdot^\mu(s)\big)
=
\Hdot^\mu(s)
\DEfullstop
\end{align*}
Hence setting $V=U_1$ and $s=0$ we get (\ref{DivF_R_prop}.2).

For $r\ge 2$ we have
from (\ref{Not_nabla^r}) 
\begin{align*}
\nabla^r_{\Hdot} R^\mu
&=
(\nabla_{\Hdot})^r R^\mu
=
(\nabla_{\Hdot})^{r-1} (\nabla_{\Hdot} R^\mu)
=
(\nabla_{\Hdot})^{r-1} \Hdot^\mu
=
0
\DEfullstop
\end{align*}
Hence setting $s=0$ we have $\nabla^r_{V} R^\mu=0$. Now setting
$V$ equal to combinations of sums
$V=U_1\pm U_2 \pm \cdots \pm U_r$, and then taking the sums, we are
left with (\ref{DivF_R_prop}.3).

From (\ref{Basic_DixCords_H}.3) we have (\ref{DivF_R_DixCoord}.1).
From (\ref{Basic_DixCords_H}.4) and (\ref{DivF_DixCoord}.2)
$R^a|_{P(\sigma,V)}=\Hdot_{\sigma,V}^a=V^a=z^a|_{P(\sigma,V)}$ giving
(\ref{DivF_R_DixCoord}.2)
\end{proof}

We use the Dixon vector to define the \defn{Dixon geodesic hypersurface}
$\Sigma(\sigma)\subset\DixTube$ as the set
\begin{align}
\Sigma(\sigma)=\Set{P(\sigma,V)\in\DixTube\,\big|\,V^\mu\in\Nspace(\sigma)}
\label{Basic_def_Dix_Hyp}
\DEfullstop
\end{align}
This is given in figure \ref{fig_Dixon_Hyper}.

\subsection{Parallel transport}
\label{ch_Prll}

The Dixon geodesic hypersurfaces (\ref{Basic_def_Dix_Hyp}) are
constructed from geodesics emanating from the worldline, with initial
tangent orthogonal to the Dixon vector. It is necessary to parallel
transport tensors along these geodesics. We therefore define a two
point tensor $\Pi^\mu_\nu|^p_q$ where $p$ and $q$ lie along the same
geodesic. This tensor can be used to parallel transport tensors
along the geodesics and hence defined on the Dixon geodesics
hypersurfaces. This is seen in lemmas \ref{lm_Prll_Trans_Vec},
\ref{lm_phi_pll} and \ref{lm_phi_Vmu_pll} below.

Given $\sigma\in\Interval$ and $V^\mu\in\Nspace(\sigma)$ and given
$s_0,s_1\in\Real$ such that $P(\sigma,s_0 V)\in\DixTube$ and 
$P(\sigma,s_1 V)\in\DixTube$,
let 
\begin{align}
\Pi^\mu_\nu\big|^{P(\sigma,s_0 V)}_{P(\sigma,s_1 V)}
:\Set{\textup{vectors at $P(\sigma,s_0 V)$}}
\longrightarrow 
\Set{\textup{vectors at $P(\sigma,s_1 V)$}}
\label{Prll_Pi_Map}
\DEnone
\end{align}
be the two point
tensor satisfying the differential equation
\begin{align}
\dfrac{}{s} \Pi^\mu_\nu\big|^{P(\sigma,s_0 V)}_{P(\sigma,s V)}
=
-
\Gamma^\mu_{\TSchange{\lambda} \rho}|_{P(\sigma,s V)}\, 
\Hdot_{\sigma,V}^\rho(s) \ \Pi^{\TSchange{\lambda}}_\nu\big|^{P(\sigma,s_0 V)}_{P(\sigma,s V)}
\qquadand
\Pi^\mu_\nu\big|^{P(\sigma,s_0 V)}_{P(\sigma,s_0 V)} = \delta^\mu_\nu 
\label{Prll_def_Pi}
\DEfullstop
\end{align}
Observe that 
\begin{align}
\Pi^\mu_\nu\big|^{P(\sigma,s_0 V)}_{P(\sigma,s_1 V)}\,
\Pi^\nu_\rho|^{P(\sigma,s_1 V)}_{P(\sigma,s_2 V)}
&=
\Pi^\mu_\rho|^{P(\sigma,s_0 V)}_{P(\sigma,s_2 V)}\,
\qquadand
\Pi^\mu_\nu\big|^{P(\sigma,s_0 V)}_{P(\sigma,s_1 V)}\,
\Pi^\nu_\rho|^{P(\sigma,s_1 V)}_{P(\sigma,s_0 V)}
=
\delta^\mu_\rho
\label{Prll_Pi_Group}
\DEnone
\end{align}
so that from (\ref{Prll_Pi_Group}.2)
\begin{align}
\Pi^\mu_\nu\big|^{P(\sigma,s_1 V)}_{P(\sigma,s_1 V)}
=
\delta^\mu_\nu
\label{Prll_Pi_Iden}
\DEfullstop
\end{align}

From the definition (\ref{Prll_def_Pi}) it may look like the parallel
transport is dependent on the choice of parameterisation of the
geodesic $H_{\sigma,V}$. We see here that this is not the
case.
\begin{lemma}
If $s_0 V^\mu=\hat{s}_0 \hat{V}^\mu$ and 
$s_1 V^\mu=\hat{s}_1 \hat{V}^\mu$ then
\begin{align}
\Pi^\mu_\nu\big|^{P(\sigma,s_0 V)}_{P(\sigma,s_1 V)}
=
\Pi^\mu_\nu\big|^{P(\sigma,\hat{s}_0 \hat{V})}_{P(\sigma,\hat{s}_1 \hat{V})}
\label{Prll_Pi_Ind}
\DEfullstop
\end{align}
I.e. given $p,q\in\DixTube$ along the same $H_{\sigma,V}$, then 
$\Pi^\mu_\nu\big|^{p}_{q}$ is
independent of the choice of parameterisation of $H_{\sigma,V}$.
\end{lemma}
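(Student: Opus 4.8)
The plan is to reduce everything to the statement that the defining initial value problem (\ref{Prll_def_Pi}) is invariant under a linear reparameterization of the geodesic, and then invoke uniqueness of solutions of a linear ODE. First I would unpack the hypotheses. In the nontrivial case (where $V^\mu\ne 0$ and the two endpoints are distinct) the relations $s_0 V^\mu=\hat s_0\hat V^\mu$ and $s_1 V^\mu=\hat s_1\hat V^\mu$ force $\hat V^\mu=c\,V^\mu$ for a single nonzero scalar $c$, with $\hat s_0=s_0/c$ and $\hat s_1=s_1/c$; subtracting the two relations and using $s_0\ne s_1$ is what pins $c$ down, and the degenerate cases ($s_0=s_1$ or $V=0$) are immediate from (\ref{Prll_Pi_Iden}). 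By (\ref{Basic_H_V_s}) we then have $H_{\sigma,\hat V}(\hat s)=P(\sigma,\hat s\hat V)=P(\sigma,c\hat s\,V)=H_{\sigma,V}(c\hat s)$, so $H_{\sigma,\hat V}$ is just the linear reparameterization $\hat s\mapsto c\hat s$ of $H_{\sigma,V}$, and in particular the endpoints $P(\sigma,s_iV)=P(\sigma,\hat s_i\hat V)$ genuinely coincide, so (\ref{Prll_Pi_Ind}) is an identity between two maps with the same domain and codomain.

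Next I would set $A(s):=\Pi^\mu_\nu|^{P(\sigma,s_0V)}_{P(\sigma,sV)}$ and $B(\hat s):=\Pi^\mu_\nu|^{P(\sigma,\hat s_0\hat V)}_{P(\sigma,\hat s\hat V)}$, the solutions of (\ref{Prll_def_Pi}) along $H_{\sigma,V}$ and $H_{\sigma,\hat V}$ respectively, and define $\tilde A(\hat s):=A(c\hat s)$; the target $A(s_1)=B(\hat s_1)$ is exactly $\tilde A(\hat s_1)=B(\hat s_1)$, so it suffices to show $\tilde A\equiv B$. At $\hat s=\hat s_0$ one has $\tilde A(\hat s_0)=A(c\hat s_0)=A(s_0)=\delta^\mu_\nu=B(\hat s_0)$, matching initial conditions. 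Differentiating $\tilde A$ with the chain rule, and using $\Hdot_{\sigma,\hat V}^\rho(\hat s)=c\,\Hdot_{\sigma,V}^\rho(c\hat s)$ (obtained by differentiating $H_{\sigma,\hat V}(\hat s)=H_{\sigma,V}(c\hat s)$) together with $P(\sigma,c\hat s V)=P(\sigma,\hat s\hat V)$, the factor $c$ produced by the chain rule cancels the factor $1/c$ incurred in rewriting $\Hdot_{\sigma,V}$ in terms of $\Hdot_{\sigma,\hat V}$, so that $\tilde A$ satisfies precisely the ODE defining $B$ along $H_{\sigma,\hat V}$.

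Since (\ref{Prll_def_Pi}) is a linear first order system, its solution for a prescribed initial value is unique, whence $\tilde A\equiv B$ and in particular $A(s_1)=\tilde A(\hat s_1)=B(\hat s_1)$, which is (\ref{Prll_Pi_Ind}). Conceptually this is nothing more than the fact that parallel transport between two fixed points along a fixed geodesic is geometric and blind to the parameterization, the ODE-uniqueness argument being the rigorous expression of that fact. I expect the only genuine care to be in the bookkeeping of the $c$ factors in the chain-rule step, ensuring the rescaling of the tangent exactly compensates the rescaling of the parameter, and in the preliminary reduction to $\hat V^\mu=c\,V^\mu$; neither is deep, but both must be carried out cleanly for the uniqueness argument to apply.
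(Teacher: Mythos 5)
Your proposal is correct and follows essentially the same route as the paper: reduce to $\hat{V}^\mu = c\,V^\mu$ (the paper writes $V^\mu = \kappa\hat{V}^\mu$), substitute the rescaled parameter into the defining ODE (\ref{Prll_def_Pi}), observe that the chain-rule factor cancels so the ODE is invariant, and conclude from the matching initial condition via uniqueness of solutions. You are somewhat more explicit than the paper in deriving the proportionality $\hat{V}^\mu = c\,V^\mu$ from the hypotheses (including degenerate cases) and in invoking ODE uniqueness, which the paper leaves implicit, but the underlying argument is identical.
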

\begin{proof}
Let $V^\mu=\kappa \hat{V}^\mu$ so that
$s_0=\hat{s}_0/\kappa$ and $s_1=\hat{s}_1/\kappa$. Setting 
$\dfrac{}{s}=\kappa \,\dfrac{}{\hat{s}}$ then
(\ref{Prll_def_Pi}.1) becomes
\begin{align*}
\kappa \dfrac{}{\hat{s}} \Pi^\mu_\nu\big|^{P(\sigma,s_0 V)}_{P(\sigma,\hat{s} V)}
=
-
\Gamma^\mu_{\TSchange{\lambda} \rho}|_{P(\sigma,\hat{s} V)}\, 
\kappa \dfrac{}{\hat{s}}H_{\sigma,V}^\rho(\hat{s}) \ 
\Pi^{\TSchange{\lambda}}_\nu\big|^{P(\sigma,s_0 V)}_{P(\sigma,\hat{s} V)}
\DEcomma
\end{align*}
and the $\kappa$'s cancel.
Setting $V^\mu\to \kappa \hat{V}^\mu$ and likewise for $s_0$ and $s_1$ we see
the ODE (\ref{Prll_def_Pi}.1) is independent of scaling of
$s,V,s_0,s_1$. The initial condition follows from (\ref{Prll_Pi_Iden}). 
\end{proof}


\begin{lemma}
\label{lm_Prll_Trans_Vec}
Given $(\sigma,V)$, a vector $\hat{U}^\mu$ the point 
$P(\sigma,s_0 V)$, and vector field $U^\mu(s)$ along $H_{\sigma,V}$ then
\begin{align}
\nabla_{\Hdot_{\sigma,V}} U^\mu = 0
\qquadand
U^\mu|_{P(\sigma,s_0 V)} = \hat{U}^\mu
\label{Prll_nabla_U}
\DEcomma
\end{align}
{if and only if}
\begin{align}
U^\mu|_{P(\sigma,s V)} = 
\Pi^\mu_\nu\big|^{P(\sigma,s_0 V)}_{P(\sigma,s V)} \ \hat{U}^\nu
\label{Prll_Tranp_U}
\DEfullstop
\end{align}
\end{lemma}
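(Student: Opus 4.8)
The plan is to observe that both characterizations of $U^\mu$ in the lemma --- the covariant-constancy condition (\ref{Prll_nabla_U}) and the explicit formula (\ref{Prll_Tranp_U}) --- describe solutions of one and the same first-order linear ordinary differential equation in the parameter $s$, subject to the same initial data at $s_0$. The equivalence then drops straight out of the existence-and-uniqueness theorem for linear ODEs, so both directions of the ``if and only if'' can be handled simultaneously.

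First I would write the condition (\ref{Prll_nabla_U}.1) out in components along the geodesic. Since $U^\mu$ is a vector field along $H_{\sigma,V}$, viewed as a function of $s$, the directional derivative $\Hdot^\rho_{\sigma,V}\partial_\rho$ reduces to $d/ds$, and (\ref{Prll_nabla_U}.1) becomes
\begin{align*}
\frac{dU^\mu}{ds}
=
-\Gamma^\mu_{\lambda\rho}\big|_{P(\sigma,sV)}\,
\Hdot^\rho_{\sigma,V}(s)\, U^\lambda
\DEfullstop
\end{align*}
Together with the initial value $U^\mu|_{s_0}=\hat U^\mu$ from (\ref{Prll_nabla_U}.2), standard linear ODE theory yields a unique solution along the segment of $H_{\sigma,V}$ contained in $\DixTube$, where the coefficients are smooth.

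Next I would check that the right-hand side of (\ref{Prll_Tranp_U}), namely $\Pi^\mu_\nu|^{P(\sigma,s_0V)}_{P(\sigma,sV)}\hat U^\nu$ regarded as a function of $s$, solves exactly this system: differentiating in $s$ and substituting the defining equation (\ref{Prll_def_Pi}.1) for $\Pi$ reproduces the same ODE, while the initial condition $\Pi^\mu_\nu|^{P(\sigma,s_0V)}_{P(\sigma,s_0V)}=\delta^\mu_\nu$ from (\ref{Prll_def_Pi}.2) gives the value $\hat U^\mu$ at $s_0$. Thus the explicit expression and any $U$ satisfying (\ref{Prll_nabla_U}) solve the identical initial-value problem.

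The two implications then follow at once by uniqueness: if $U$ satisfies (\ref{Prll_nabla_U}) it must coincide with the explicit solution, giving (\ref{Prll_Tranp_U}); conversely, if $U$ is defined by (\ref{Prll_Tranp_U}) then it solves the ODE and attains $\hat U^\mu$ at $s_0$, which is exactly (\ref{Prll_nabla_U}). I do not expect a genuine obstacle here; the only points needing care are the reduction of the covariant derivative along the curve to an ordinary $s$-derivative plus the Christoffel term, and the remark that $\Gamma^\mu_{\lambda\rho}\Hdot^\rho_{\sigma,V}$ is smooth on the geodesic inside $\DixTube$ so that the uniqueness theorem genuinely applies.
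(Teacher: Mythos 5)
Your proposal is correct and follows essentially the same route as the paper: the paper's proof likewise differentiates $\Pi^\mu_\nu\big|^{P(\sigma,s_0 V)}_{P(\sigma,s V)}\hat{U}^\nu$ along the geodesic, substitutes the defining ODE (\ref{Prll_def_Pi}) for $\Pi$ so that the Christoffel terms cancel, and concludes via the (implicit) uniqueness of solutions of the parallel-transport equation. The only difference is presentational --- you make the appeal to linear ODE existence-and-uniqueness explicit, which is exactly the step the paper compresses into ``Hence result.''
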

\begin{proof}
Let $H(s)=H_{\sigma,V}(s)$ then
\begin{align*}
\nabla_{\Hdot(s_1)} \big(\Pi^\mu_\nu\big|^{P(\sigma,s_0 V)}_{P(\sigma,s V)} \ \hat{U}^\nu\big)
&=
\dfrac{}{s} 
\big(\Pi^\mu_\nu\big|^{P(\sigma,s_0 V)}_{P(\sigma,s V)} 
\  
\hat{U}^\nu \big)\Big|_{s=s_1}
+
\Gamma^\mu_{\nu\rho}|_{H(s_1)}\ \Hdot^\rho(s_1)\ U^\nu(s_1)
\\&=
\dfrac{}{s} 
\big(\Pi^\mu_\nu\big|^{P(\sigma,s_0 V)}_{P(\sigma,s V)} \big)\Big|_{s=s_1}
\  
\hat{U}^\nu 
+
\Gamma^\mu_{\nu\rho}|_{H(s_1)}\ \Hdot^\rho(s_1)\ U^\nu(s_1)
\\&=
-
\Gamma^\mu_{\sigma\rho}|_{H(s)}\,
\Hdot^\rho(s)\ \Pi^\sigma_\nu\big|^{P(\sigma,s_0 V)}_{P(\sigma,s V)}\  \Big|_{s=s_1}
\hat{U}^\nu 
+
\Gamma^\mu_{\nu\rho}|_{H(s_1)}\ \Hdot^\rho(s_1)\ U^\nu(s_1)
\\&=
-
\Gamma^\mu_{\sigma\rho}|_{H(s_1)}\, \Hdot^\rho(s_1)\ \Pi^\sigma_\nu\big|^{P(\sigma,s_0 V)}_{P(\sigma,s_1 V)}\ 
\hat{U}^\nu 
+
\Gamma^\mu_{\nu\rho}|_{H(s_1)}\ \Hdot^\rho(s_1)\ U^\nu(s_1)
\\&=
0
\DEfullstop
\end{align*}
Hence result.
\end{proof}

\begin{lemma}
\label{lm_phi_pll}
Given a 1--form $\hat{\phi}_\mu$ at the point $P(\sigma,s_0 V)$, then
\begin{align}
\nabla_{\Hdot_{\sigma,V}} \phi_\mu = 0
\quadand
\phi_\mu|_{P(\sigma,s_0 V)}=\hat{\phi}_\mu
\label{Prll_Tranp_phi_nabla}
\DEcomma
\end{align}
if and only if
\begin{align}
\phi_\mu|_{P(\sigma,s V)} = \Pi^\nu_\mu\big|^{P(\sigma,s V)}_{P(\sigma,s_0 V)} \ \hat{\phi}_\nu
\label{Prll_Tranp_phi}
\DEfullstop
\end{align}
\end{lemma}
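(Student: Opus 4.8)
The plan is to treat this as the 1-form counterpart of Lemma~\ref{lm_Prll_Trans_Vec}: I would prove the implication (\ref{Prll_Tranp_phi})$\Rightarrow$(\ref{Prll_Tranp_phi_nabla}) by a direct computation in the spirit of that proof, and then close the equivalence by uniqueness of solutions to the parallel-transport equation along $H_{\sigma,V}$. The one genuinely new ingredient is the differential equation obeyed by the ``backward'' two-point tensor $\Pi^\nu_\mu\big|^{P(\sigma,s V)}_{P(\sigma,s_0 V)}$ appearing in (\ref{Prll_Tranp_phi}), since the defining ODE (\ref{Prll_def_Pi}.1) is written for the forward transport $\Pi^\mu_\nu\big|^{P(\sigma,s_0 V)}_{P(\sigma,s V)}$, in which it is the \emph{target} point that moves with $s$.

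First I would extract that ODE by differentiating the inverse relation (\ref{Prll_Pi_Group}.2) along the geodesic. Differentiating $\Pi^\mu_\lambda\big|^{P(\sigma,s_0 V)}_{P(\sigma,s V)}\,\Pi^\lambda_\nu\big|^{P(\sigma,s V)}_{P(\sigma,s_0 V)}=\delta^\mu_\nu$ in $s$, substituting (\ref{Prll_def_Pi}.1) for the forward factor, and contracting with a further backward transport gives
\begin{align}
\dfrac{}{s}\,\Pi^\nu_\mu\big|^{P(\sigma,s V)}_{P(\sigma,s_0 V)}
=
\Gamma^\lambda_{\mu\rho}|_{P(\sigma,s V)}\,\Hdot_{\sigma,V}^\rho(s)\,
\Pi^\nu_\lambda\big|^{P(\sigma,s V)}_{P(\sigma,s_0 V)}
\DEfullstop
\end{align}

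Second, writing $\psi_\mu(s)=\Pi^\nu_\mu\big|^{P(\sigma,s V)}_{P(\sigma,s_0 V)}\,\hat{\phi}_\nu$ for the right-hand side of (\ref{Prll_Tranp_phi}), I would compute $\nabla_{\Hdot_{\sigma,V}}\psi_\mu=\dfrac{}{s}\psi_\mu-\Gamma^\lambda_{\rho\mu}\Hdot_{\sigma,V}^\rho\psi_\lambda$. Inserting the ODE just displayed for $\dfrac{}{s}\psi_\mu$ produces $\Gamma^\lambda_{\mu\rho}\Hdot_{\sigma,V}^\rho\psi_\lambda$, which cancels the connection term by the symmetry $\Gamma^\lambda_{\mu\rho}=\Gamma^\lambda_{\rho\mu}$, so $\nabla_{\Hdot_{\sigma,V}}\psi_\mu=0$; the initial condition $\psi_\mu|_{P(\sigma,s_0 V)}=\hat{\phi}_\mu$ is immediate from (\ref{Prll_Pi_Iden}). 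This gives one direction. Conversely, any $\phi_\mu$ satisfying (\ref{Prll_Tranp_phi_nabla}) obeys the same linear first-order ODE $\dfrac{}{s}\phi_\mu=\Gamma^\lambda_{\rho\mu}\Hdot_{\sigma,V}^\rho\phi_\lambda$ along $H_{\sigma,V}$ with the same initial data, so by uniqueness $\phi_\mu=\psi_\mu$, which is (\ref{Prll_Tranp_phi}).

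I expect the only real obstacle to be bookkeeping rather than analysis: one must keep straight the direction of transport and the index placement, checking that the backward tensor in (\ref{Prll_Tranp_phi}) solves the transposed ODE above (with the Christoffel contracted on the moving lower slot) rather than the forward one (\ref{Prll_def_Pi}.1). Conceptually this is just the statement that the pairing $\phi_\mu U^\mu$ of a parallel 1-form with a parallel vector from Lemma~\ref{lm_Prll_Trans_Vec} is constant along $H_{\sigma,V}$, so dual transport is the inverse-transpose of (\ref{Prll_def_Pi}); if preferred, I would instead deduce (\ref{Prll_Tranp_phi}) directly from Lemma~\ref{lm_Prll_Trans_Vec} by imposing constancy of this pairing for all $\hat{U}^\mu$ and inverting via (\ref{Prll_Pi_Group}).
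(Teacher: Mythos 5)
Your proposal is correct, and its primary argument takes a genuinely different route from the paper's own proof. The paper never writes down an ODE for the backward transport: instead it contracts the candidate 1--form $\Pi^\nu_\mu\big|^{P(\sigma,s V)}_{P(\sigma,s_0 V)}\hat{\phi}_\nu$ with an arbitrary parallel vector field $U^\mu$ supplied by lemma \ref{lm_Prll_Trans_Vec}, uses the Leibniz rule (the extra term involving $\nabla_{\Hdot}U^\mu$ vanishes) to pull the covariant derivative onto the scalar pairing, and then uses the group property (\ref{Prll_Pi_Group}) to recognise that pairing as the constant $\hat{\phi}_\nu\hat{U}^\nu$; since parallel fields span the tangent space at every point of the geodesic, the candidate 1--form is parallel. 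That is precisely the ``pairing'' argument you relegate to your final sentence. Your main argument instead makes the inverse-transpose structure explicit: you differentiate (\ref{Prll_Pi_Group}.2) to obtain the ODE for $\Pi^\nu_\mu\big|^{P(\sigma,s V)}_{P(\sigma,s_0 V)}$ (your displayed equation is correct given the index convention of (\ref{Prll_def_Pi}.1)), verify $\nabla_{\Hdot_{\sigma,V}}\psi_\mu=0$ directly, with the cancellation resting on the symmetry $\Gamma^\lambda_{\mu\rho}=\Gamma^\lambda_{\rho\mu}$, which is legitimate here since the paper works throughout with the Levi-Civita connection, and you get the initial condition from (\ref{Prll_Pi_Iden}). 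Each approach has its merits. Yours makes the converse implication, which the paper compresses into a terse ``Hence \ldots if and only if \ldots'', fully explicit as uniqueness of solutions of a linear first-order ODE with fixed data at $s_0$, and it exposes exactly where torsion-freeness enters. The paper's version avoids computing the inverse-transport ODE altogether and needs no appeal to the symmetry of the Christoffel symbols beyond what is already inside lemma \ref{lm_Prll_Trans_Vec}; conceptually it is the statement that cotangent transport is characterised by constancy of the pairing with transported vectors, which is the same bridge between the two proofs that you articulate at the end of your proposal.
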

\begin{proof}
Given a vector $\hat{U}\in T_{H(s_0)}\Mman$ and a vector field $U(s)$
along $H$ then such that 
(\ref{Prll_Tranp_U}) then
\begin{align*}
\nabla_{\Hdot(s_1)} 
\big( \Pi^\nu_\mu\big|^{P(\sigma,s V)}_{P(\sigma,s_0 V)} \ \hat{\phi}_\nu\big)
\big|_{s_1}
U^\mu(s_1)
&=
\nabla_{\Hdot(s_1)} 
\big( \Pi^\nu_\mu\big|^{P(\sigma,s V)}_{P(\sigma,s_0 V)} \ \hat{\phi}_\nu\big)
\big|_{s_1}
U^\mu(s_1)
+
\Pi^\nu_\mu(s_1,s_0)\ \hat{\phi}_\nu
\nabla_{\Hdot(s_1)} U^\mu
\\&=
\nabla_{\Hdot(s_1)} 
\big( \Pi^\nu_\mu\big|^{P(\sigma,s V)}_{P(\sigma,s_0 V)} \ \hat{\phi}_\nu U^\mu(s) \big)
\big|_{s_1}
\\&=
\dfrac{}{s}
\big( \Pi^\nu_\mu\big|^{P(\sigma,s V)}_{P(\sigma,s_0 V)}
\ \hat{\phi}_\nu 
\ \Pi^\mu_\rho\big|^{P(\sigma,s_0 V)}_{P(\sigma,s V)} 
\ \hat{U}^\rho \big)
\big|_{s_1}
=
\dfrac{}{s}
\big( \hat{\phi}_\nu \hat{U}^\nu \big)
\big|_{s_1}
=
0
\DEfullstop
\end{align*}
Hence (\ref{Prll_Tranp_phi}) if and only if (\ref{Prll_Tranp_phi_nabla}).
\end{proof}

We need to extend the results of lemma \ref{lm_phi_pll} for tensors with
arbitrary number of indices, i.e. $\phi_\Vmu$. Let
\begin{align}
\Pi^\Vmu_\Vnu\big|^{P(\sigma,s_0 V)}_{P(\sigma,s V)}
=
\Pi^{\mu_1}_{\nu_1}\big|^{P(\sigma,s_0 V)}_{P(\sigma,s V)}
\cdots
\Pi^{\mu_m}_{\nu_m}\big|^{P(\sigma,s_0 V)}_{P(\sigma,s V)}
\label{Prll_dep_Pi_Vmu}
\DEfullstop
\end{align}
\begin{lemma}
\label{lm_phi_Vmu_pll}
Given a tensor $\hat{\phi}_\Vmu$ at the point  at the point $P(\sigma,s_0 V)$, then
\begin{align}
\nabla_{\Hdot_{\sigma,V}} \phi_\Vmu = 0
\quadand
\phi_\Vmu|_{P(\sigma,s_0 V)}=\hat{\phi}_\Vmu
\label{Prll_Tranp_phiVmu_nabla}
\DEcomma
\end{align}
if and only if
\begin{align}
\phi_\Vmu|_{P(\sigma,s V)} = \Pi^\Vnu_\Vmu\big|^{P(\sigma,s V)}_{P(\sigma,s_0 V)} \ \hat{\phi}_\Vnu
\label{Prll_Tranp_phiVmu}
\DEfullstop
\end{align}
\end{lemma}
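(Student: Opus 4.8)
The plan is to reduce this multi--index statement to the single--index case of lemma \ref{lm_phi_pll} by the same device used there: I would contract $\phi_\Vmu$ against parallelly transported vector fields so as to convert the rank--$m$ covariant--constancy condition into a statement about scalars. Concretely, choose $m$ vector fields $U_1^{\mu_1}(s),\ldots,U_m^{\mu_m}(s)$ along $H_{\sigma,V}$, each parallelly transported, i.e.\ $\nabla_{\Hdot_{\sigma,V}}U_i=0$ with prescribed initial values $\hat U_i$ at $P(\sigma,s_0V)$; by lemma \ref{lm_Prll_Trans_Vec} these are $U_i^{\mu_i}|_{P(\sigma,sV)}=\Pi^{\mu_i}_{\nu_i}\big|^{P(\sigma,s_0V)}_{P(\sigma,sV)}\hat U_i^{\nu_i}$. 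Then form the scalar $f(s)=\phi_\Vmu\,U_1^{\mu_1}\cdots U_m^{\mu_m}$ along the geodesic.

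The crux is the second step, differentiating $f$. Since $f$ is a scalar, $\dfrac{f}{s}=\nabla_{\Hdot_{\sigma,V}}f$, and the Leibniz rule together with $\nabla_{\Hdot_{\sigma,V}}U_i=0$ eliminates every term except the one hitting $\phi$, leaving $\dfrac{f}{s}=(\nabla_{\Hdot_{\sigma,V}}\phi_\Vmu)\,U_1^{\mu_1}\cdots U_m^{\mu_m}$. This identity is what turns covariant constancy of the tensor into the vanishing of the ordinary derivative of a number.

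For the forward direction I would use (\ref{Prll_Tranp_phiVmu_nabla}) to get $\nabla_{\Hdot_{\sigma,V}}\phi_\Vmu=0$, so $f$ is constant and $f(s)=\hat\phi_\Vnu\hat U_1^{\nu_1}\cdots\hat U_m^{\nu_m}$; substituting the transport formulas for the $U_i$ and stripping the $\Pi$'s with the inverse property (\ref{Prll_Pi_Group}.2), the arbitrariness of the $\hat U_i$ forces (\ref{Prll_Tranp_phiVmu}). For the converse I would insert (\ref{Prll_Tranp_phiVmu}) into $f$, contract each transport tensor $\Pi^{\nu_i}_{\mu_i}\big|^{P(\sigma,sV)}_{P(\sigma,s_0V)}$ against the corresponding $U_i^{\mu_i}$ via (\ref{Prll_Pi_Group}.2) to obtain $f=\hat\phi_\Vnu\hat U_1^{\nu_1}\cdots\hat U_m^{\nu_m}$, hence $\dfrac{f}{s}=0$; by the displayed identity $(\nabla_{\Hdot_{\sigma,V}}\phi_\Vmu)U_1^{\mu_1}\cdots U_m^{\mu_m}=0$ for every parallel choice of the $U_i$, which gives $\nabla_{\Hdot_{\sigma,V}}\phi_\Vmu=0$.

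There is no serious obstacle: the statement is lemma \ref{lm_phi_pll} applied slot by slot, made available by the product definition (\ref{Prll_dep_Pi_Vmu}) of $\Pi^\Vnu_\Vmu$. The only points needing care are the index bookkeeping and the justification that contracting against arbitrary parallel $U_1,\ldots,U_m$ in all $m$ slots recovers the full tensor equation --- which it does, because at each $s$ the maps $\Pi\big|^{P(\sigma,s_0V)}_{P(\sigma,sV)}$ are invertible, so the $U_i(s)$ range over the whole tangent space at $P(\sigma,sV)$. Alternatively one could induct on $m$, peeling off one factor $\Pi^{\mu_m}_{\nu_m}$ at a time from (\ref{Prll_dep_Pi_Vmu}) and invoking lemma \ref{lm_phi_pll} in the last slot, but the simultaneous contraction mirrors the proof of lemma \ref{lm_phi_pll} most closely.
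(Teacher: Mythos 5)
Your proof is correct, but it takes a genuinely different route from the paper. The paper's proof is a decomposition argument: it writes $\phi_\Vmu$ as an outer product of 1--forms, $\phi_\Vmu=\phi^1_{\mu_1}\cdots\phi^m_{\mu_m}$, and applies lemma \ref{lm_phi_pll} factor by factor, with the product structure (\ref{Prll_dep_Pi_Vmu}) of $\Pi^\Vnu_\Vmu$ assembling the result; handling a general (non-decomposable) tensor then implicitly requires writing $\hat\phi_\Vmu$ as a sum of outer products and invoking linearity together with uniqueness of solutions of the parallel-transport ODE (\ref{Prll_def_Pi}). Your argument is instead a duality argument: you leave $\phi_\Vmu$ intact and probe it with $m$ arbitrary parallel vector fields supplied by lemma \ref{lm_Prll_Trans_Vec}, reducing everything to the constancy of the scalar $f(s)=\phi_\Vmu U_1^{\mu_1}\cdots U_m^{\mu_m}$. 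In effect you promote the mechanism of the paper's proof of lemma \ref{lm_phi_pll} (which contracts a 1--form against a single parallel vector) to rank $m$, rather than quoting its statement. What this buys you is that decomposability never enters: no sum over simple tensors, no appeal to ODE uniqueness; the only inputs are lemma \ref{lm_Prll_Trans_Vec}, the group and inverse properties (\ref{Prll_Pi_Group}), and the spanning observation that, because $\Pi$ is invertible, the $U_i(s)$ exhaust the tangent space at each point --- which you correctly flag as the one step needing justification. What the paper's route buys is brevity, since the rank-1 lemma is already proved. One small omission on your side: in the converse you establish only $\nabla_{\Hdot_{\sigma,V}}\phi_\Vmu=0$; you should also note that the initial condition $\phi_\Vmu|_{P(\sigma,s_0V)}=\hat\phi_\Vmu$ in (\ref{Prll_Tranp_phiVmu_nabla}) follows by setting $s=s_0$ in (\ref{Prll_Tranp_phiVmu}) and using (\ref{Prll_Pi_Iden}). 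This is immediate, but it is part of the ``if and only if''.
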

\begin{proof}
Set $\phi_\Vmu$ to be the outer product of 1--forms
$\phi_\Vmu=\phi^1_{\mu_1}\cdots\phi^m_{\mu_m}$. Then apply lemma \ref{lm_phi_pll}.
\end{proof}

\begin{lemma}
Let $\phi_\Vmu$ be a 1--form field over $H_{\sigma,V}$. 
Then
\begin{align}
\frac{d^k}{ds^k} 
\Big(\Pi^\Vmu_\nu\big|^{P(\sigma,s V)}_{P(\sigma,s_0 V)}  \phi_\Vmu|_{P(\sigma,s V)}\Big)
=
\nabla_{\Hdot_{\sigma,V}}^k \phi_\nu|_{P(\sigma,s_0 V)}
\label{Prll_higher_deriv}
\DEfullstop
\end{align}
\end{lemma}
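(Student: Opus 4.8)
The plan is to read the left-hand side as the covector at the fixed point $P(\sigma,s_0V)$ obtained by parallel-transporting $\phi|_{P(\sigma,sV)}$ back along the geodesic $H_{\sigma,V}$, and then to show that each $s$-derivative brings down exactly one covariant derivative $\nabla_\Hdot$. First I would isolate the single first-order step: for any $1$-form $\psi_\mu$ along $H_{\sigma,V}$,
\[
\frac{d}{ds}\Big(\Pi^\lambda_\nu\big|^{P(\sigma,s_0V)}_{P(\sigma,sV)}\,\psi_\lambda|_{P(\sigma,sV)}\Big)
=\Pi^\lambda_\nu\big|^{P(\sigma,s_0V)}_{P(\sigma,sV)}\,(\nabla_\Hdot\psi)_\lambda|_{P(\sigma,sV)}.
\]
To prove this I would apply the product rule in $s$. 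The derivative hitting the component $\psi_\lambda|_{P(\sigma,sV)}$ produces $\Hdot^\rho\partial_\rho\psi_\lambda$, while the derivative hitting $\Pi$ is eliminated using the transport ODE (\ref{Prll_def_Pi}), which replaces it by a Christoffel factor $-\Gamma^\alpha_{\lambda\rho}\Hdot^\rho$ contracted into the remaining $\Pi^\lambda_\nu$. Collecting both contributions behind $\Pi^\lambda_\nu$ leaves $\Hdot^\rho(\partial_\rho\psi_\lambda-\Gamma^\alpha_{\rho\lambda}\psi_\alpha)=(\nabla_\Hdot\psi)_\lambda$, which is the claimed identity.

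With this step secured, the lemma follows by a short induction. Applying the identity successively with $\psi=\phi,\ \nabla_\Hdot\phi,\ (\nabla_\Hdot)^2\phi,\dots$ yields
\[
\frac{d^k}{ds^k}\Big(\Pi^\lambda_\nu\big|^{P(\sigma,s_0V)}_{P(\sigma,sV)}\,\phi_\lambda|_{P(\sigma,sV)}\Big)
=\Pi^\lambda_\nu\big|^{P(\sigma,s_0V)}_{P(\sigma,sV)}\,\big((\nabla_\Hdot)^k\phi\big)_\lambda|_{P(\sigma,sV)}.
\]
Setting $s=s_0$ and using $\Pi^\lambda_\nu\big|^{P(\sigma,s_0V)}_{P(\sigma,s_0V)}=\delta^\lambda_\nu$ from (\ref{Prll_Pi_Iden}) collapses the transport and gives $\big((\nabla_\Hdot)^k\phi\big)_\nu|_{P(\sigma,s_0V)}$. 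Finally, because $H_{\sigma,V}$ is a geodesic, the identity (\ref{Not_nabla^r}) lets me rewrite $(\nabla_\Hdot)^k$ as the symmetric sum $\nabla^k_\Hdot$, producing the right-hand side.

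The part demanding the most care is the first-order identity, specifically matching the sign and index pattern of the transport ODE (\ref{Prll_def_Pi}), whose Christoffel term acts on the upper (vector) slot, against the covariant derivative of a covector, which carries a $-\Gamma$ on a lower slot. The reconciliation hinges on the symmetry $\Gamma^\alpha_{\lambda\rho}=\Gamma^\alpha_{\rho\lambda}$ of the Levi-Civita connection, which makes the contracted direction index $\rho$ interchangeable with the covector index and produces an exact cancellation. For a general tensor $\phi_\Vmu$ the same argument applies verbatim once one writes $\phi_\Vmu$ as an outer product of $1$-forms and distributes $\Pi$ over the factors, exactly as in lemma \ref{lm_phi_Vmu_pll}; the Leibniz rule then reproduces $\nabla_\Hdot$ acting on the full tensor, so no new obstacle appears.
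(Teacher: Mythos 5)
Your proposal is correct, and it takes a route that differs from the paper's in a worthwhile way. The paper proves (\ref{Prll_higher_deriv}) by induction on the number of derivatives, where the induction hypothesis is only the coincidence-limit statement (the derivative evaluated at the base point); the inductive step therefore has to re-centre the hypothesis at a moving base point $s_2$, which requires inserting the composition property (\ref{Prll_Pi_Group}) of $\Pi$, pulling a factor $\Pi\big|^{P(\sigma,s_0V)}_{P(\sigma,s_2V)}$ through the inner derivative, and juggling two parameters $s_1,s_2$; moreover the base cases $r=0,1$ are simply declared ``clearly true.'' You instead prove the stronger invariant that for every $s$ (not just at coincidence) one has $\tfrac{d}{ds}\big(\Pi^\lambda_\nu\big|^{P(\sigma,s_0V)}_{P(\sigma,sV)}\psi_\lambda|_{P(\sigma,sV)}\big)=\Pi^\lambda_\nu\big|^{P(\sigma,s_0V)}_{P(\sigma,sV)}(\nabla_\Hdot\psi)_\lambda|_{P(\sigma,sV)}$, derived directly from the transport ODE (\ref{Prll_def_Pi}) and the product rule, with the symmetry $\Gamma^\alpha_{\lambda\rho}=\Gamma^\alpha_{\rho\lambda}$ of the Levi-Civita connection correctly identified as the point where the ODE's index pattern matches the covariant derivative of a covector. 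With that invariant the induction becomes mechanical (apply the identity to $\psi=(\nabla_\Hdot)^j\phi$), evaluation at $s=s_0$ via (\ref{Prll_Pi_Iden}) collapses the transport, and (\ref{Not_nabla^r}) converts $(\nabla_\Hdot)^k$ into $\nabla^k_\Hdot$ exactly as in the paper's last line. What your approach buys is twofold: it supplies a proof of the first-order step that the paper leaves unproven, and it removes the need for the two-parameter re-centring argument; what the paper's approach buys is that it never has to open up the ODE in the inductive step, reusing its (asserted) base case instead. Your extension to multi-index $\phi_\Vmu$ by outer products plus the Leibniz rule mirrors lemma \ref{lm_phi_Vmu_pll} and is fine, with linearity handling general tensors.
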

\begin{proof}
Proof by induction on $r$. Clearly true when $r=0$ and $r=1$, Let
$H(s)=H_{\sigma,V}(s)$. Assume
true for $r$,
\begin{align*}
\frac{d^{r+1}}{ds^{r+1}} 
\Big(\Pi^\Vmu_\Vnu\big|^{P(\sigma,s_0 V)}_{P(\sigma,s V)}  \phi_\Vmu|_{P(\sigma,s V)}\Big)
&=
\dfrac{}{s_2} \bigg(
\frac{d^{r}}{ds_1^{r}} \Big(\Pi^\Vmu_\Vnu\big|^{P(\sigma,s_0 V)}_{P(\sigma,s_1 V)} 
\phi_\Vmu|_{P(\sigma,s_1 V)}\Big)\Big|_{s_1=s_2} 
\bigg)\bigg|_{s_2=s_0}
\\&=
\dfrac{}{s_2} \bigg(
\frac{d^{r}}{ds_1^{r}} \Big(
\Pi^\Vmu_\Vrho\big|^{P(\sigma,s_0 V)}_{P(\sigma,s_2 V)} \ 
\Pi^\Vrho_\Vnu\big|^{P(\sigma,s_2 V)}_{P(\sigma,s_1 V)}\ 
\phi_\Vmu|_{P(\sigma,s_1 V)}\Big)\Big|_{s_1=s_2} 
\bigg)\bigg|_{s_2=s_0}
\\&=
\dfrac{}{s_2} \bigg(
\Pi^\Vmu_\Vrho\big|^{P(\sigma,s_0 V)}_{P(\sigma,s_2 V)} \ 
\frac{d^{r}}{ds_1^{r}} \Big(
\Pi^\Vrho_\Vnu\big|^{P(\sigma,s_2 V)}_{P(\sigma,s_1 V)}\ 
\phi_\Vmu|_{P(\sigma,s_1 V)}\Big)\Big|_{s_1=s_2} 
\bigg)\bigg|_{s_2=s_0}
\\&=
\dfrac{}{s_2} \bigg(
\Pi^\Vmu_\Vrho\big|^{P(\sigma,s_0 V)}_{P(\sigma,s_2 V)} \ 
\nabla_{\Hdot}^r \phi_\Vmu  |_{P(\sigma,s_2 V)}
\bigg)\bigg|_{s_2=s_0}
\\&=
\nabla_\Hdot \nabla^r_{\Hdot} \phi_\Vnu |_{P(\sigma,s_0 V)}
=
\nabla_\Hdot (\nabla_{\Hdot})^r \phi_\Vnu |_{P(\sigma,s_0 V)}
=
\nabla^{r+1}_\Hdot\phi_\Vnu |_{P(\sigma,s_0 V)}
\DEfullstop
\end{align*}

\end{proof}


It is necessary to define a collection of tensor fields
$\PiIField^{\FixInd\Vnu}_\Vmu$, which are used to define the test tensors
needed to extract the components $\zeta^{\Vmu\rho_1\cdots\rho_k}$ in
section \ref{ch_ExtrComp}.  These tensors are parallel and form a
basis for tensors. These will be necessary to define a Taylor
expansion of the test tensors and for extracting the components of a
multipole.
\begin{align}
\PiIField^{\FixInd\Vnu}_\Vmu|_{P(\sigma,V)} = \Pi^\Vnu_\Vmu
\big|^{P(\sigma,V)}_{C(\sigma)}
\label{Prll_def_Pi_field}
\DEfullstop
\end{align}
The reason for placing the $\Vnu$ in curly brackets is because these
are tensor indices referring to the point $C(\sigma)$. Therefore the
covariant derivative $\nabla_\rho$ does not produce the Christoffel symbols
$\Gamma^{\nu_i}_{\rho\sigma}$. I.e.
\begin{align}
\nabla_\rho \PiIField^{\FixInd\Vnu}_\Vmu
&=
\partial_\rho \PiIField^{\FixInd\Vnu}_\Vmu
-
\Gamma^{\sigma}_{\rho\mu_1}\,
\PiIField^{\FixInd\Vnu}_{\sigma\mu_2\cdots\mu_m}
- \cdots - 
\Gamma^{\sigma}_{\rho\mu_m}\,
\PiIField^{\FixInd\Vnu}_{\mu_1\cdots\mu_{m-1}\sigma}
\label{Prll_nabla_PiI}
\DEfullstop
\end{align}
Hence from lemma \ref{lm_phi_Vmu_pll},
\begin{align}
\nabla_{\Hdot_{\sigma,V}} \PiIField^{\FixInd\Vnu}_\Vmu = 0
\label{Prll_nabla_as_PiI}
\DEfullstop
\end{align}

\begin{lemma}
In adapted coordinates
\begin{align}
\nabla^k_{a_1\cdots a_k} \PiIField^{\FixInd\Vnu}_\Vmu = 0
\label{Prll_multideriv_Pi}
\DEfullstop
\end{align}
\end{lemma}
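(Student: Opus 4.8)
The plan is to reduce the claim to the two preceding results by contracting the symmetric derivative against the tangent of the radial geodesics. By (\ref{Basic_DixCords_H}) the geodesic $H_{\sigma,V}$ has, in adapted coordinates, $\Hdot^0_{\sigma,V}=0$ and $\Hdot^a_{\sigma,V}=V^a$, so when $\nabla^k_\Hdot\PiIField^{\FixInd\Vnu}_\Vmu=\Hdot^{\rho_1}\cdots\Hdot^{\rho_k}\nabla^k_{\rho_1\cdots\rho_k}\PiIField^{\FixInd\Vnu}_\Vmu$ is expanded, every term carrying a time index $\rho_i=0$ drops out and only the purely spatial part survives, giving $\nabla^k_\Hdot\PiIField^{\FixInd\Vnu}_\Vmu=V^{a_1}\cdots V^{a_k}\,\nabla^k_{a_1\cdots a_k}\PiIField^{\FixInd\Vnu}_\Vmu$. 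Hence it is enough to show that $\nabla^k_\Hdot\PiIField^{\FixInd\Vnu}_\Vmu$ vanishes along every such geodesic and then to strip off the factors $V^{a_1}\cdots V^{a_k}$.

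For the first step I would invoke the geodesic identity (\ref{Not_nabla^r}): since $H_{\sigma,V}$ is a geodesic, $\nabla^k_\Hdot\PiIField=(\nabla_\Hdot)^k\PiIField$. Now (\ref{Prll_nabla_as_PiI}) states $\nabla_\Hdot\PiIField=0$ all along $H_{\sigma,V}$, so $\nabla_\Hdot\PiIField$ is the zero field on the geodesic; applying $\nabla_\Hdot$ to it $k-1$ further times gives $(\nabla_\Hdot)^k\PiIField=0$, and therefore $\nabla^k_\Hdot\PiIField=0$ at every point of $H_{\sigma,V}$. Combined with the coordinate identity above, this yields $V^{a_1}\cdots V^{a_k}\,\nabla^k_{a_1\cdots a_k}\PiIField^{\FixInd\Vnu}_\Vmu=0$ at the point $P(\sigma,sV)$ for every $s$ and every $V\in\Nspace(\sigma)$.

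The step I expect to be the real obstacle is removing the vector factors, i.e.\ passing from this radial contraction to the full symmetric tensor. At a fixed point $P(\sigma,sV)$ with $s\neq0$ only the single radial direction $V^a\propto z^a$ is probed, so one cannot polarise there directly. The clean way around this is to exploit that the identity holds for all $s$: letting $s\to0$ sends the evaluation point to $C(\sigma)$ while leaving $V$ a free vector of $\Nspace(\sigma)$. Since $e_1,e_2,e_3$ span $\Nspace(\sigma)$ and $V^{a_1}\cdots V^{a_k}\,\nabla^k_{a_1\cdots a_k}\PiIField$ is a symmetric polynomial in the now-unconstrained $V^a$ that vanishes identically, polarisation in $V$ forces every spatial component $\nabla^k_{a_1\cdots a_k}\PiIField$ to vanish. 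It is precisely the symmetry of $\nabla^k_{a_1\cdots a_k}$ built in by (\ref{J_def_nabla_k}), together with the freedom to vary $V$, that makes this last step go through; the continuity of $\nabla^k\PiIField$ up to $C$ is what licenses the limit $s\to0$.
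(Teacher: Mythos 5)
Your proposal is correct and essentially reproduces the paper's own argument: contract with the tangent of the radial geodesic, use the geodesic identity $\nabla^k_{\Hdot}=(\nabla_{\Hdot})^k$ together with $\nabla_{\Hdot}\PiIField^{\FixInd\Vnu}_\Vmu=0$ from parallel transport, and then polarise in $V$ at $C(\sigma)$, where $V$ ranges freely over $\Nspace(\sigma)$. The only cosmetic difference is that the paper evaluates directly at $s=0$ rather than phrasing it as a limit $s\to 0$; your observation that polarisation is only available on the worldline itself matches the scope of the paper's proof.
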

\begin{proof}
Fix $\sigma$ and $V^\mu\in\Nspace$ and let $H=H_{\sigma,V}$ so that 
$H(s)=P(\sigma,s V)$. Since $H$ is a geodesic
\begin{align*}
V^{a_1}\cdots V^{a_k} \nabla^k_{a_1\cdots a_k} 
\PiIField^{\FixInd\Vnu}_\Vmu \big|_{C(\sigma)}
=
\nabla_{V}^k \PiIField^{\FixInd\Vnu}_\Vmu \big|_{C(\sigma)}
=
\nabla_{\Hdot}^k \PiIField^{\FixInd\Vnu}_\Vmu \big|_{C(\sigma)}
=
0
\DEfullstop
\end{align*}
Now setting
$V^\mu$ equal to combinations of sums
$V=U^\mu_1\pm U^\mu_2 \pm \cdots \pm U^\mu_r$, and then taking the sums, we are
left with 
\begin{align*}
U_1^{a_1}\cdots U_k^{a_k} \nabla^k_{a_1\cdots a_k} 
\PiIField^{\FixInd\Vnu}_\Vmu \big|_{C(\sigma)}
= 0
\DEcomma
\end{align*}
since $\nabla^k_{a_1\cdots a_k}$ is symmetric in $a_1,\ldots,a_k$. 
Since this is true for all vectors $V^\mu=U^\mu_1,\dots,U^\mu_r$ at $C(\sigma)$
we have
(\ref{Prll_multideriv_Pi})

\end{proof}


\subsection{The Dixon split}
\label{ch_Split}

Given an arbitrary Dixon multipole $J^\Vmu$, or order $n$,
let 
\begin{align}
J^{\Vmu} = \sum_{k=0}^{n} J^{\Vmu}_{(k)}
\label{Split_arb_J}
\DEcomma
\end{align}
where
\begin{align}
J^{\Vmu}_{(k)} 
= 
\frac{1}{k!}\, \nabla_{\rho_1\ldots\rho_k}
\int \zeta^{\Vmu,\rho_1\cdots\rho_k} 
\deltafour(z-C)\,d\sigma
\label{Split_arb_J_form}
\DEfullstop
\end{align}
I.e.
\begin{align}
J^{\Vmu}_{(k)}[\phi_{\Vmu}]
= 
(-1)^k
\frac{1}{k!}\, 
\int \zeta^{\Vmu,\rho_1\cdots\rho_k} 
\nabla^k_{\rho_1\ldots\rho_k}
\phi_{\Vmu}
\,d\sigma
\label{Split_arb_J_form_act}
\DEfullstop
\end{align}
We call the set $\Set{J^{\Vmu}_{(k)}}$ the \defn{Dixon split} of $J^{\Vmu}$. It is the
first step towards the extraction of the components.

\begin{lemma}
\label{lm_Split_J_on_Nabla}
\begin{equation}
\begin{aligned}
J^{\Vmu}_{(k)}[\nabla^r_R\,\phi_{\Vmu}]
=
\begin{cases}
\displaystyle
\frac{k!}{(k-r)!}\ 
J^{\Vmu}_{(k)}[\phi_{\Vmu}]
&\quadtext{for}
r \le k
\\
0
&\quadtext{for}
r>k
\DEfullstop
\end{cases}
\end{aligned}
\label{Split_J_on_Nabla}
\end{equation}
\end{lemma}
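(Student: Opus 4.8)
The plan is to reduce the whole statement to a pointwise identity on the worldline and then polarise. Starting from the action formula (\ref{Split_arb_J_form_act}) applied to the test tensor $\nabla^r_R\phi_\Vmu$,
$$
J^{\Vmu}_{(k)}[\nabla^r_R\phi_\Vmu]
=
\frac{(-1)^k}{k!}\int_\Interval
\zeta^{\Vmu,\rho_1\cdots\rho_k}
\big(\nabla^k_{\rho_1\cdots\rho_k}\nabla^r_R\phi_\Vmu\big)\big|_{C(\sigma)}\,d\sigma ,
$$
so everything follows once I establish, for each $\sigma$, the pointwise identity
$$
\zeta^{\Vmu,\rho_1\cdots\rho_k}
\big(\nabla^k_{\rho_1\cdots\rho_k}\nabla^r_R\phi_\Vmu\big)\big|_{C(\sigma)}
=
\frac{k!}{(k-r)!}\,
\zeta^{\Vmu,\rho_1\cdots\rho_k}
\big(\nabla^k_{\rho_1\cdots\rho_k}\phi_\Vmu\big)\big|_{C(\sigma)}
\qquad(r\le k),
$$
with the left-hand side vanishing when $r>k$.

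First I would exploit the two constraints on $\zeta$. Since $\zeta^{\Vmu,\rho_1\cdots\rho_k}$ is annihilated by $\DixVec$ in every $\rho$-slot by (\ref{J_DixVec_Cons}) and is symmetric in those slots by (\ref{J_Sym_J}), its contraction with any symmetric tensor depends only on the restriction of that tensor to $\Nspace(\sigma)$. By polarisation it therefore suffices to prove, for every $V^\mu\in\Nspace(\sigma)$,
$$
\nabla^k_V\big(\nabla^r_R\phi_\Vmu\big)\big|_{C(\sigma)}
=
\frac{k!}{(k-r)!}\,\nabla^k_V\phi_\Vmu\big|_{C(\sigma)}
\qquad(r\le k),
$$
with the left side vanishing for $r>k$, where $\nabla^k_V$ is the contraction of (\ref{Not_def_nabla_V_k}). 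This reduction is the step where the orthogonality of $\zeta$ is genuinely needed, because the radial field $R$ and the geodesics $H_{\sigma,V}$ are only defined for directions orthogonal to $\DixVec$.

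The core computation then runs along the geodesic $H=H_{\sigma,V}$ with $\Hdot(0)=V$. By the geodesic lemma (\ref{Not_nabla^r}) I may replace $\nabla^k_V|_{C(\sigma)}$ by the genuine iterated directional derivative $(\nabla_\Hdot)^k|_{s=0}$, for which the ordinary multinomial Leibniz rule is available. Writing $\nabla^r_R\phi_\Vmu=R^{\sigma_1}\cdots R^{\sigma_r}\,\nabla^r_{\sigma_1\cdots\sigma_r}\phi_\Vmu$ and expanding $(\nabla_\Hdot)^k$ over this product of $r+1$ factors, I would use the radial facts (\ref{DivF_R_prop}), in their along-geodesic form $R^\mu|_{s=0}=0$ and $\nabla_\Hdot R^\mu=\Hdot^\mu$ (established in proving (\ref{DivF_R_prop})), from which $(\nabla_\Hdot)^2R^\mu=\nabla_\Hdot\Hdot^\mu=0$ by the geodesic equation. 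Hence a factor $R^{\sigma_i}$ contributes zero unless it receives exactly one derivative, in which case it yields $\Hdot^{\sigma_i}(0)=V^{\sigma_i}$. For $r>k$ there are too few derivatives to feed every $R$-factor, so every term carries an undifferentiated $R|_{s=0}=0$ and the expression vanishes. For $r\le k$ the only surviving assignment gives one derivative to each of the $r$ radial factors and the remaining $k-r$ to $\nabla^r_{\sigma_1\cdots\sigma_r}\phi_\Vmu$, with multinomial weight $k!/(k-r)!$. Finally, using $\nabla_\Hdot\Hdot=0$ to pull the constants $\Hdot^{\sigma_i}(0)$ back through the residual derivatives, and applying (\ref{Not_nabla^r}) twice in the forms $\Hdot^{\sigma_1}\cdots\Hdot^{\sigma_r}\nabla^r_{\sigma_1\cdots\sigma_r}\phi=(\nabla_\Hdot)^r\phi$ and $(\nabla_\Hdot)^{k-r}(\nabla_\Hdot)^r\phi=(\nabla_\Hdot)^k\phi$, the surviving factor collapses to $(\nabla_\Hdot)^k\phi|_{s=0}=\nabla^k_V\phi|_{C(\sigma)}$, producing the stated coefficient.

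I expect the main obstacle to be the bookkeeping of this Leibniz expansion: making the multinomial count precise and, in particular, justifying the interchange that absorbs the constant vectors $\Hdot^{\sigma_i}(0)$ back into the directional derivatives — legitimate precisely because $H$ is a geodesic, so that $\nabla_\Hdot\Hdot=0$ kills every term in which a derivative would otherwise land on an $\Hdot$ factor. Once the pointwise identity is in hand, reinserting it into the integral and comparing with (\ref{Split_arb_J_form_act}) immediately yields $J^{\Vmu}_{(k)}[\nabla^r_R\phi_\Vmu]=\tfrac{k!}{(k-r)!}J^{\Vmu}_{(k)}[\phi_\Vmu]$ for $r\le k$ and $0$ for $r>k$, which is (\ref{Split_J_on_Nabla}).
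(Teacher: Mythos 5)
Your proposal is correct, but it takes a genuinely different route from the paper's own proof. The paper proceeds by induction on $r$, built on two identities that are proved by direct computation in the Dixon adapted coordinates (where $R^0=0$, $R^a=z^a$): the commutation identity (\ref{Split_J_k_induction}), $\nabla^{r+1}_R\phi_\Vmu = -r\,\nabla^{r}_R\phi_\Vmu + \nabla^{r}_R(\nabla_R\phi_\Vmu)$, and the eigenvalue property (\ref{Split_J_k_Del_phi}), $J^{\Vmu}_{(k)}[\nabla_R\phi_\Vmu] = k\,J^{\Vmu}_{(k)}[\phi_\Vmu]$; the evaluation at $r=k+1$ then yields the vanishing for all $r>k$. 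You instead use the symmetry (\ref{J_Sym_J}) and orthogonality (\ref{J_DixVec_Cons}) of $\zeta^{\Vmu\rho_1\cdots\rho_k}$, together with polarisation, to reduce the whole lemma to the pointwise identity $\nabla^k_V\big(\nabla^r_R\phi_\Vmu\big)\big|_{C(\sigma)} = \tfrac{k!}{(k-r)!}\,\nabla^k_V\phi_\Vmu\big|_{C(\sigma)}$ for $r\le k$ (and zero for $r>k$), valid for all $V^\mu\in\Nspace(\sigma)$, and you establish this in closed form by a multinomial Leibniz expansion of $(\nabla_\Hdot)^k$ along the geodesic $H_{\sigma,V}$, using $R^\mu|_{H_{\sigma,V}(s)}=s\,\Hdot^\mu_{\sigma,V}(s)$ (hence $\nabla_\Hdot R^\mu=\Hdot^\mu$ and $(\nabla_\Hdot)^2R^\mu=0$ identically along the geodesic) together with lemma (\ref{Not_nabla^r}). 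Your bookkeeping is sound: each $R$-factor must absorb exactly one derivative or the term dies, the parallel $\Hdot$-factors pass through the residual derivatives because $\nabla_\Hdot\Hdot=0$, and the polarisation step is legitimate because, in an adapted frame, the constraints on $\zeta$ mean only the symmetric spatial part of the contracted tensor contributes, which diagonal evaluations over $\Nspace(\sigma)$ determine. The paper's induction buys brevity and reuses the coordinate machinery needed elsewhere in the paper; your expansion buys a coordinate-free core computation, a transparent origin for the coefficient $k!/(k-r)!$ as a multinomial count, the $r>k$ case as a pigeonhole argument (some $R$-factor remains undifferentiated and vanishes on $C$), and an explicit identification of where the constraint (\ref{J_DixVec_Cons}) is genuinely used, namely to confine the polarisation directions to $\Nspace(\sigma)$, which is precisely where the radial-vector properties (\ref{DivF_R_prop}) are available.
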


\begin{proof}

This follows from two observations
\begin{align}
\nabla^{r+1}_R\phi_\Vmu 
= 
-r\nabla^{r}_R\phi_\Vmu + \nabla^{r}_R(\nabla_R\phi_\Vmu)
\label{Split_J_k_induction}
\DEcomma
\end{align}
and
\begin{align}
 J^{\Vmu}_{(k)}[\nabla_R\phi]
=
k\, J^{\Vmu}_{(k)}[\phi]
\label{Split_J_k_Del_phi}
\DEfullstop
\end{align}
Equation (\ref{Split_J_k_induction}) 
follows since in the Dixon \TSchange{adapted} coordinates.
\begin{align*}
\nabla^{r}_R(\nabla_R\phi_\Vmu) & = \nabla^{r}_R(z^b \nabla_b \phi_\Vmu) 
\\
& = z^{a_1} \dots z^{a_r} \nabla_{a_1} \dots \nabla_{a_r} (z^b \nabla_b \phi_\Vmu) 
\\
& = z^{a_1} \dots z^{a_r} \nabla_{a_1} \dots \nabla_{a_{r-1}} 
(\delta^{b}_{a_r} \nabla_b \phi_\Vmu + z^b \nabla_{a_r} \nabla_b \phi_\Vmu) 
\\
& = z^{a_1} \dots z^{a_r} \nabla_{a_1} \dots \nabla_{a_{r-1}} 
(\nabla_{a_r} \phi_\Vmu + z^b \nabla_{a_r} \nabla_b \phi_\Vmu) 
\\
& = 
z^{a_1} \dots z^{a_r} \nabla_{a_1} \dots \nabla_{a_r} 
\phi_\Vmu 
+
z^{a_1} \dots z^{a_r} \nabla_{a_1} \dots \nabla_{a_{r-1}} 
(z^b \nabla_{a_r} \nabla_b \phi_\Vmu) 
\\
& = 
\nabla^{r}_R \phi_\Vmu
+
z^{a_1} \dots z^{a_r} \nabla_{a_1} \dots \nabla_{a_{r-1}} 
(z^b \nabla_{a_r} \nabla_b \phi_\Vmu) 
\\
& = r\nabla^{r}_R\phi_\Vmu + z^{a_1} \dots z^{a_r}z^{b} \nabla_{a_1} \dots \nabla_{a_r}\nabla_{b} \phi_\Vmu 
\\
& = r\nabla^{r}_R\phi_\Vmu + \nabla^{r+1}_R\phi_\Vmu 
\DEfullstop
\end{align*}
Equation (\ref{Split_J_k_Del_phi}) follows from
\begin{align*}
J_{(k)}^{\Vmu}[\nabla_R\,\phi_\Vmu] 
& = 
J_{(k)}^{\Vmu} [z^{b} \nabla_{b} \phi_\Vmu]
= (-1)^{k}\frac{1}{k!}
\int \zeta^{\Vmu a_{1} \dots a_{k}}
\nabla^k_{a_1\cdots a_{k}} 
\Big( z^{b} \nabla_{b} \phi_\Vmu \Big) 
\,d\sigma
\\
& = (-1)^{k}\frac{1}{k!}
\int \zeta^{\Vmu a_{1} \dots a_{k}}\nabla_{a_1} \dots \nabla_{a_{k}} 
\Big( z^{b} \nabla_{b} \phi_\Vmu \Big) 
\,d\sigma
\\
& = (-1)^{k}\frac{1}{k!}
\int \zeta^{\Vmu a_{1} \dots a_{k}}\nabla_{a_1} \dots \nabla_{a_{k-1}} 
\Big(\delta_{a_k}^{b} \nabla_{b} \phi_\Vmu + z^{b} \nabla_{a_k} \nabla_{b} \phi_\Vmu \Big) 
\,d\sigma
\\
& = (-1)^{k}\frac{1}{k!}
\int \zeta^{\Vmu a_{1} \dots a_{k}}\nabla_{a_1} \dots \nabla_{a_{k-1}} 
\Big(\nabla_{a_k} \phi_\Vmu + z^{b} \nabla_{a_k} \nabla_{b} \phi_\Vmu \Big) 
\\
& = (-1)^{k}\frac{1}{k!}
\int \zeta^{\Vmu a_{1} \dots a_{k}}\Big( \nabla_{a_1} \dots \nabla_{a_{k}} \phi_\Vmu
+ \nabla_{a_1} \dots \nabla_{a_{k-1}} (z^{b} \nabla_{a_k} \nabla_{b}
\phi_\Vmu)  
\Big) 
\,d\sigma
\\
& = (-1)^{k}\frac{1}{k!}
\int \zeta^{\Vmu a_{1} \dots a_{k}}\Big(k \nabla_{a_1} \dots \nabla_{a_{k}} \phi_\Vmu
+ z^{b}\nabla_{a_1} \dots \nabla_{a_{k}}\nabla_{b} \phi_\Vmu \Big) 
\,d\sigma
\\
& = (-1)^{k}\frac{1}{(k-1)!}
\int \zeta^{\Vmu a_{1} \dots a_{k}}\nabla_{a_1} \dots \nabla_{a_{k}} \phi_\Vmu 
\,d\sigma
\\
& = (-1)^{k}\frac{1}{(k-1)!}
\int \zeta^{\Vmu a_{1} \dots a_{k}}\nabla_{a_1 \cdots a_{k}} \phi_\Vmu 
\,d\sigma
\\
& = k J_{(k)}^{\Vmu}[\phi_\Vmu]
\DEfullstop
\end{align*}

We now show (\ref{Split_J_on_Nabla}) by induction. Trivial for
$r=0$. Assume true for $r$, from
(\ref{Split_J_k_induction}) and (\ref{Split_J_k_Del_phi})
\begin{align*}
J^{\Vmu}_{(k)}[\nabla^{r+1}_R\,\phi_{\Vmu}]
&=
-r J^{\Vmu}_{(k)}[\nabla^{r}_R\phi_{\Vmu}] + J^{\Vmu}_{(k)}[\nabla^{r}_R(\nabla_R\phi_{\Vmu})]
=
\frac{k!}{(k-r)!}
\big(-r J^{\Vmu}_{(k)}[\phi_{\Vmu}] + J^{\Vmu}_{(k)}[\nabla_R\phi_{\Vmu}]\big)
\\&=
\frac{k!}{(k-r)!}
\big(-r J^{\Vmu}_{(k)}[\phi_{\Vmu}] + k J^{\Vmu}_{(k)}[\phi_{\Vmu}]\big)
=
\frac{k!(k-r)}{(k-r)!}
J^{\Vmu}_{(k)}[\phi_{\Vmu}]
=
\frac{k!}{(k-(r+1))!}
J^{\Vmu}_{(k)}[\phi_{\Vmu}]
\DEfullstop
\end{align*}

In the case when $r=k+1$ then
\begin{align*}
J^{\Vmu}_{(k)}[\nabla^{k+1}_R\,\phi_{\Vmu}]
&=
\frac{k!(k-k)}{(k-r)!}
J^{\Vmu}_{(k)}[\phi_{\Vmu}]
=
0
\DEfullstop
\end{align*}
Hence (\ref{Split_J_on_Nabla}) holds for all $r$.
\end{proof}

\begin{theorem}
\label{thm_Split_Jr}
\begin{align}
J^{\Vmu}_{(r)}[\phi_{\Vmu}] 
= 
 {\sum_{k=r}^{n}}\frac{(-1)^{k-r}}{(k-r)!\ {r!}}  
J^{\Vmu}[\nabla_{R}^{k} \phi_{\Vmu}]
\label{J_J_Jk}
\DEfullstop
\end{align}
\end{theorem}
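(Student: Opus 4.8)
The plan is to invert the upper-triangular relation between the split pieces $J^{\Vmu}_{(j)}$ and the ``raw moments'' $J^{\Vmu}[\nabla_R^k\phi_{\Vmu}]$ supplied by the Dixon split (\ref{Split_arb_J}) and Lemma \ref{lm_Split_J_on_Nabla}. First I would feed the test tensor $\nabla_R^k\phi_{\Vmu}$ into the whole multipole, write $J^{\Vmu}=\sum_{j=0}^{n}J^{\Vmu}_{(j)}$, and apply (\ref{Split_J_on_Nabla}) termwise. The terms with $j<k$ are killed by the second case of the Lemma, leaving
\[
J^{\Vmu}[\nabla_R^k\phi_{\Vmu}]
=\sum_{j=k}^{n}\frac{j!}{(j-k)!}\,J^{\Vmu}_{(j)}[\phi_{\Vmu}].
\]
This is the forward map expressing each raw moment as a triangular combination of the split pieces; the theorem (\ref{J_J_Jk}) is precisely its inverse.

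Next I would substitute this expression into the right-hand side of (\ref{J_J_Jk}) and interchange the two finite sums. The index constraints $r\le k\le n$ and $k\le j\le n$ combine to $r\le k\le j\le n$, so collecting the coefficient of each $J^{\Vmu}_{(j)}[\phi_{\Vmu}]$ gives
\[
\sum_{k=r}^{n}\frac{(-1)^{k-r}}{(k-r)!\,r!}\,J^{\Vmu}[\nabla_R^k\phi_{\Vmu}]
=\sum_{j=r}^{n}\left(\sum_{k=r}^{j}\frac{(-1)^{k-r}}{(k-r)!\,r!}\,\frac{j!}{(j-k)!}\right)J^{\Vmu}_{(j)}[\phi_{\Vmu}].
\]

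The crux is to show the inner bracket equals $\delta_{jr}$. Setting $i=k-r$ and $m=j-r$, the bracket becomes $\frac{j!}{r!\,m!}\sum_{i=0}^{m}\binom{m}{i}(-1)^i=\frac{j!}{r!\,m!}(1-1)^m$, which vanishes for $m\ge 1$ and equals $1$ for $m=0$. Hence only the $j=r$ term survives and the right-hand side collapses to $J^{\Vmu}_{(r)}[\phi_{\Vmu}]$, establishing (\ref{J_J_Jk}). The only genuine obstacle is the alternating binomial identity $\sum_{i}\binom{m}{i}(-1)^i=0^m$, which is the standard inclusion–exclusion evaluation; the remaining work is bookkeeping of the triangular double sum and the reindexing, both routine.
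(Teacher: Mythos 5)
Your proposal is correct and follows essentially the same route as the paper: expand $J^{\Vmu}$ into its split pieces, apply Lemma \ref{lm_Split_J_on_Nabla} to kill the off-triangular terms, interchange the finite sums, and collapse the inner sum with the alternating binomial identity. The only cosmetic difference is that the paper packages the coefficients as $(-1)^{k-r}\binom{k}{r}\binom{\ell}{k}$ and cites the standard binomial-product identity, whereas you prove the same identity directly via the reindexing $i=k-r$, $m=j-r$ and $(1-1)^m=0$.
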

\begin{proof}

\begin{align*}
 {\sum_{k=r}^{n}}\frac{(-1)^{k-r}}{(k-r)!\ {r!}}  
J^{\Vmu}[\nabla_{R}^{k} \phi_{\Vmu}]
&=
{\sum_{k=r}^{n}}\frac{(-1)^{k-r}}{(k-r)!\ {r!}}  
\sum_{\ell=0}^n
J^{\Vmu}_{(\ell)}[\nabla_{R}^{k} \phi_{\Vmu}]
=
{\sum_{k=r}^{n}}
\sum_{\ell=0}^n
\frac{(-1)^{k-r}}{(k-r)!\ {r!}}  
\frac{\ell!}{(\ell-k)!}
J^{\Vmu}_{(\ell)}[\phi_{\Vmu}]
\\&=
\sum_{\ell=0}^n
{\sum_{k=r}^{n}}
(-1)^{k-r}
\binom{k}{r}
\binom{\ell}{k}
J^{\Vmu}_{(\ell)}[\phi_{\Vmu}]
=
\sum_{\ell=0}^n
\Bigg({\sum_{k=0}^{n}}
(-1)^{k-r}
\binom{k}{r}
\binom{\ell}{k}
\Bigg)
J^{\Vmu}_{(\ell)}[\phi_{\Vmu}]
\\&= 
\sum_{\ell=0}^n \delta_{r\ell}
J^{\Vmu}_{(\ell)}[\phi_{\Vmu}]
=
J^{\Vmu}_{(r)}[\phi_{\Vmu}]
\DEcomma
\end{align*}
where we have used the standard result for the sums of products of
binomial coefficients.
\end{proof}

Although the Dixon split, given by (\ref{Split_arb_J}), is defined via
the coordinate system, theorem \ref{thm_Split_Jr} shows that the
Dixon split is actually independent of the coordinate system, once the
Dixon vector is chosen.

\begin{corrol}
\label{lm_J_0}
If $J^{\Vmu}=0$ then $J^{\Vmu}[\nabla^r_R\phi_{\Vmu}]=0$ for all $r$.
\end{corrol}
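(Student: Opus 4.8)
The plan is to read the hypothesis $J^{\Vmu}=0$ at the level of the distribution: it asserts that $J^{\Vmu}[\psi_{\Vmu}]=0$ for \emph{every} admissible test tensor $\psi_{\Vmu}$. The entire content of the corollary then reduces to the observation that $\nabla^r_R\phi_{\Vmu}$ is itself an admissible test tensor, so that it may be fed into the zero distribution. No appeal to Theorem \ref{thm_Split_Jr} is actually required; the corollary is placed here only because it is combined with that theorem later (to conclude that each split component $J^{\Vmu}_{(r)}$, and ultimately each $\zeta^{\Vmu\rho_1\cdots\rho_k}$, vanishes).

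First I would fix the class of admissible test tensors, which by the discussion in section \ref{ch_Hyp_R} are smooth tensor fields whose compact support lies inside the Dixon tube $\DixTube$. The operator $\nabla^r_R$ is a linear differential operator of order $r$ whose only non-constant ingredient is the radial vector field $R^\mu$. By its definition (\ref{DivF_R_1}), equivalently $R^0=0$ and $R^a=z^a$ in the adapted coordinates by (\ref{DivF_R_DixCoord}), the field $R^\mu$ is single-valued and smooth on all of $\DixTube$. Hence $\nabla^r_R\phi_{\Vmu}$ is again smooth, and since differentiation cannot enlarge support, $\operatorname{supp}(\nabla^r_R\phi_{\Vmu})\subseteq\operatorname{supp}(\phi_{\Vmu})\subseteq\DixTube$. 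Therefore $\nabla^r_R\phi_{\Vmu}$ is admissible.

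Second, I would simply substitute. Since $J^{\Vmu}=0$ annihilates every admissible test tensor, taking $\psi_{\Vmu}=\nabla^r_R\phi_{\Vmu}$ gives $J^{\Vmu}[\nabla^r_R\phi_{\Vmu}]=0$ for each $r$, which is the claim.

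The only point requiring care, and the place I would expect to be the main (though minor) obstacle, is confirming that the radial field $R^\mu$, and hence $\nabla^r_R$, is globally well-defined and smooth throughout $\DixTube$ rather than merely along individual geodesics. This is secured by the construction of the Dixon tube: each point of $\DixTube$ is reached by a \emph{unique} geodesic $H_{\sigma,V}$ with $\sigma\in\Interval$ and $V^\mu\in\Nspace(\sigma)$, so that the prescription $R^\mu|_{P(\sigma,V)}=\Hdot^\mu_{\sigma,V}(1)$ is unambiguous and smooth in the adapted coordinates. Given this, the substitution in the previous paragraph is legitimate and the corollary follows at once.
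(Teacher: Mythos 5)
Your proof is correct, and it takes a genuinely more elementary route than the paper's. The paper's one-line argument runs through the Dixon split: it asserts that $J^{\Vmu}=0$ forces $J^{\Vmu}_{(k)}=0$ for all $k$, and then (implicitly, via the decomposition $J^{\Vmu}=\sum_k J^{\Vmu}_{(k)}$ and Lemma \ref{lm_Split_J_on_Nabla}) concludes that $J^{\Vmu}[\nabla^r_R\phi_{\Vmu}]$ vanishes. You bypass the split entirely: you read $J^{\Vmu}=0$ as the zero functional and check the single substantive point, namely that $\nabla^r_R\phi_{\Vmu}$ is itself an admissible test tensor --- smooth, with compact support contained in $\DixTube$, because $R^\mu$ is single-valued and smooth on the tube and differentiation cannot enlarge supports. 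This is not merely simpler; it is arguably the logically primitive fact. To justify the paper's intermediate claim $J^{\Vmu}_{(k)}=0$ at this point in the development one would invoke Theorem \ref{thm_Split_Jr}, whose right-hand side consists of exactly the quantities $J^{\Vmu}[\nabla_{R}^{k}\phi_{\Vmu}]$ that the corollary is about, so your substitution observation is needed in any case; the deduction $J^{\Vmu}_{(k)}=0$ (as used later in Corollary \ref{thm_ExtrComp_unique}) then flows from the corollary rather than the other way around. What each approach buys: the paper's phrasing keeps the corollary embedded in the split machinery it is about to exploit, while yours is self-contained, avoids any appearance of circularity, and makes explicit the one hypothesis that actually needs verification (global smoothness of $R^\mu$ on $\DixTube$), which the paper leaves implicit.
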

\begin{proof}
Follows trivially since if $J^{\Vmu}=0$, then $J^{\Vmu}_{(k)}=0$ 
for all $0\le k\le n$.
\end{proof}

\subsection{Extraction of the Dixon components}
\label{ch_ExtrComp}

In this section we show how we can extract the Dixon components
$\zeta^{\Vmu a_{1} \dots a_{k}}$, in an adapted coordinate system, by
applying the distribution to particular test tensors.

Let $\psi_0:\Real\to\Real$ be any test function such that 
\begin{align*}
\int_\Real \psi_0(z)\,dz=1
,\qquad
\psi_0(0)=1
\DEcomma
\end{align*}
and $\psi_0(z)$ is flat in an interval about $0$. 

In Dixon adapted coordinates, given $\sigma_0\in\Interval$ and
$\epsilon>0$, choose a set of indices $\Vnu$. Let the tensor
$(\phi_{\sigma_0,\epsilon}^{\Vnu,a_1\cdots a_r})_\Vmu$ be given by
\begin{align}
(\phi_{\sigma_0,\epsilon}^{\Vnu,a_1\cdots a_r})_\Vmu|_{(\sigma,z^1,\ldots,z^3)}
=
\frac{(-1)^k}{\epsilon} z^{a_1}\cdots z^{a_r}
\psi_0\Big(\frac{\sigma-\sigma_0}{\epsilon}\Big)
\psi_0(z^1)\psi_0(z^2)\psi_0(z^3)
\PiIField^{\FixInd\Vnu}_{\Vmu}
\label{Split_phi_extract}
\DEfullstop
\end{align}

\begin{theorem}
\label{thm_ExtrComp_zeta}
\begin{align}
\zeta^{\Vnu a_1\cdots a_k}(\sigma_0)
=
\lim_{\epsilon\to 0}
J^{\Vmu}_{(k)}[
(\phi_{\sigma_0,\epsilon}^{\Vnu,a_1\cdots a_r})_\Vmu 
]
\label{Split_zeta_extract}
\DEfullstop
\end{align}
\end{theorem}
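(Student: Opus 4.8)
The plan is to feed the extraction test tensor (\ref{Split_phi_extract}) into the action (\ref{Split_arb_J_form_act}) of the split piece $J^{\Vmu}_{(k)}$ and show that the right-hand side collapses to a mollified integral of $\zeta^{\Vnu a_1\cdots a_k}$ in $\sigma$. Write the fixed labels of the test tensor as $a_1,\dots,a_k$ and the contracted derivative indices as $b_1,\dots,b_k$; by the orthogonality (\ref{J_DixVec_Cons}) the latter run only over spatial values in adapted coordinates. Then
\begin{align*}
J^{\Vmu}_{(k)}\big[(\phi_{\sigma_0,\epsilon}^{\Vnu,a_1\cdots a_k})_\Vmu\big]
=
\frac{(-1)^k}{k!}\int_\Interval \zeta^{\Vmu b_1\cdots b_k}(\sigma)\,
\big(\nabla^k_{b_1\cdots b_k}(\phi_{\sigma_0,\epsilon}^{\Vnu,a_1\cdots a_k})_\Vmu\big)\big|_{C(\sigma)}\,d\sigma
\DEcomma
\end{align*}
so the whole computation reduces to evaluating the symmetric covariant derivative of the test tensor on the worldline.

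Write $\phi_\Vmu=h\,\PiIField^{\FixInd\Vnu}_\Vmu$, where $h$ is the scalar prefactor of (\ref{Split_phi_extract}). The hard part is $\nabla^k_{b_1\cdots b_k}\phi_\Vmu|_{C(\sigma)}$: done naively the symmetric Leibniz rule produces a proliferation of Christoffel terms and mixed derivatives of $\PiIField$ and of the monomial, so instead I would restrict to the geodesics $H_{\sigma,V}$ with $V\in\Nspace(\sigma)$ and use lemma (\ref{Not_nabla^r}) to replace $V^{b_1}\cdots V^{b_k}\nabla^k_{b_1\cdots b_k}$ by $(\nabla_{\Hdot})^k$ at $s=0$. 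Since $\nabla_{\Hdot}\PiIField=0$ all along $H_{\sigma,V}$ by (\ref{Prll_nabla_as_PiI}), an easy induction gives $(\nabla_{\Hdot})^k(h\,\PiIField)=\big(\tfrac{d^k}{ds^k}(h\circ H_{\sigma,V})\big)\,\PiIField$, and at $s=0$ the factor $\PiIField|_{C(\sigma)}=\delta^\Vnu_\Vmu$ by (\ref{Prll_Pi_Iden}). Using $H^0_{\sigma,V}=\sigma$, $H^a_{\sigma,V}(s)=sV^a$ from (\ref{Basic_DixCords_H}), together with $\psi_0(0)=1$ and the flatness of $\psi_0$ near $0$, the spatial factor is exactly $z^{a_1}\cdots z^{a_k}=s^k V^{a_1}\cdots V^{a_k}$ near the worldline, whence $\tfrac{d^k}{ds^k}(h\circ H_{\sigma,V})|_0=\tfrac{(-1)^k}{\epsilon}\psi_0\big(\tfrac{\sigma-\sigma_0}{\epsilon}\big)\,k!\,V^{a_1}\cdots V^{a_k}$. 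Polarising in $V$ yields
\begin{align*}
\big(\nabla^k_{b_1\cdots b_k}\phi_\Vmu\big)\big|_{C(\sigma)}
=
\frac{(-1)^k}{\epsilon}\,\psi_0\Big(\frac{\sigma-\sigma_0}{\epsilon}\Big)\,
k!\,\delta^{(a_1}_{b_1}\cdots\delta^{a_k)}_{b_k}\,\delta^\Vnu_\Vmu
\DEfullstop
\end{align*}
This is just the statement $z^a=R^a$ of (\ref{DivF_R_DixCoord}) combined with the radial-vector derivatives (\ref{DivF_R_prop}).

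Substituting this back, the two factors of $(-1)^k$ and of $k!$ cancel against the $(-1)^k/k!$ in the action; contracting $\zeta^{\Vmu b_1\cdots b_k}$ with $\delta^{(a_1}_{b_1}\cdots\delta^{a_k)}_{b_k}$ is harmless because $\zeta$ is symmetric in these indices by (\ref{J_Sym_J}), and $\zeta^{\Vmu b_1\cdots b_k}\delta^\Vnu_\Vmu=\zeta^{\Vnu b_1\cdots b_k}$, so
\begin{align*}
J^{\Vmu}_{(k)}\big[(\phi_{\sigma_0,\epsilon}^{\Vnu,a_1\cdots a_k})_\Vmu\big]
=
\int_\Interval \zeta^{\Vnu a_1\cdots a_k}(\sigma)\,
\frac{1}{\epsilon}\,\psi_0\Big(\frac{\sigma-\sigma_0}{\epsilon}\Big)\,d\sigma
\DEfullstop
\end{align*}
Since $\int_\Real\psi_0=1$, the kernel $\tfrac1\epsilon\psi_0(\tfrac{\cdot-\sigma_0}{\epsilon})$ is an approximate identity, so letting $\epsilon\to0$ and using continuity of $\zeta^{\Vnu a_1\cdots a_k}$ returns $\zeta^{\Vnu a_1\cdots a_k}(\sigma_0)$, which is (\ref{Split_zeta_extract}).

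The only real obstacle is the middle step; the device that makes it tractable is restricting to the geodesics $H_{\sigma,V}$, where lemma (\ref{Not_nabla^r}) turns the symmetric derivative into an iterated directional one and the purpose-built vanishing properties (\ref{Prll_nabla_as_PiI}) and (\ref{DivF_R_prop}) of $\PiIField$ and $R$ kill every term except the intended one, with polarisation recovering the full symmetric tensor. The concluding $\epsilon\to0$ limit is a routine mollifier argument.
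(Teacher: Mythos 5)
Your proposal is correct, and its overall skeleton is the same as the paper's: feed the extraction tensor (\ref{Split_phi_extract}) into the action (\ref{Split_arb_J_form_act}), evaluate the symmetric covariant derivative of the test tensor on the worldline, contract against $\zeta$ using the symmetry (\ref{J_Sym_J}), and finish with a routine mollifier limit. Where you genuinely differ is in the central evaluation of $\nabla^k_{b_1\cdots b_k}\phi_\Vmu|_{C(\sigma)}$. The paper stays in adapted coordinates and expands by Leibniz, killing every unwanted term by combining the flatness of the $\psi_0$ factors, the packaged lemma (\ref{Prll_multideriv_Pi}) stating $\nabla^k_{a_1\cdots a_k}\PiIField^{\FixInd\Vnu}_\Vmu=0$, and $\nabla_b z^a=\delta^a_b$, so that only the term with all $k$ derivatives on the monomial $z^{a_1}\cdots z^{a_k}$ survives. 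You instead contract with $V^{b_1}\cdots V^{b_k}$, restrict to the geodesics $H_{\sigma,V}$, use lemma (\ref{Not_nabla^r}) together with the parallelism (\ref{Prll_nabla_as_PiI}) to reduce everything to an ordinary scalar derivative $\tfrac{d^k}{ds^k}(h\circ H_{\sigma,V})|_{s=0}$, and then polarise in $V$. This is precisely the device the paper itself uses to prove (\ref{Prll_multideriv_Pi}) and (\ref{DivF_R_prop}), so in effect you inline that machinery rather than invoke the lemma. What your route buys: the vanishing of all cross terms is automatic, with no need to justify a Leibniz expansion of symmetrised covariant derivatives of a triple product (a step the paper treats rather tersely), and parallelism along each geodesic is a more primitive input than the vanishing of full symmetric derivatives. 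What it costs: an extra polarisation step, which moreover only pins down the components of $\nabla^k_{(b_1\cdots b_k)}\phi_\Vmu|_{C(\sigma)}$ with all $b_i$ spatial; this suffices only because the orthogonality (\ref{J_DixVec_Cons}) forces the contracted indices of $\zeta^{\Vmu b_1\cdots b_k}$ to be spatial in adapted coordinates, a point you correctly flag at the outset and without which the final contraction would have a gap.
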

\begin{proof}
Since $\psi_0(z^a)$ is flat about $z^a=0$, then 
$\nabla_b\psi_0(z^a)=0$. Since $\zeta^{\Vnu a_1\cdots a_k}$ is totally
symmetric in $a_1\ldots a_k$ we have from (\ref{Prll_multideriv_Pi})
$\nabla^k_{a_1\cdots a_k} \PiIField^{\FixInd\Vnu}_\Vmu = 0$. Also
$\nabla_b z^a=\partial_b z^a=\delta^a_b$.
\begin{align*}
\lim_{\epsilon \to 0} J^{\Vmu}_{(k)} &
[(\phi_{\sigma_0,\epsilon}^{\Vnu,a_1\cdots a_r})_\Vmu] 
= 
\lim_{\epsilon \to 0} 
\frac{1}{k!\,\epsilon} 
\int_\Interval \zeta^{\Vmu b_1\cdots b_k} 
\nabla^k_{b_1\cdots b_k}
\bigg( z^{a_1}\cdots z^{a_k}
\psi_0\Big(\frac{\sigma-\sigma_0}{\epsilon}\Big)
\psi_0(z^1)\psi_0(z^2)\psi_0(z^3)
\PiIField^{\FixInd\Vnu}_{\Vmu}\, d\sigma 
\bigg)
\\&=
\lim_{\epsilon \to 0} 
\frac{1}{k!\,\epsilon} 
\int_\Interval \zeta^{\Vmu b_1\cdots b_k} 
\nabla^k_{b_1\cdots b_k}
\big( z^{a_1}\cdots z^{a_k}\big)
\psi_0\Big(\frac{\sigma-\sigma_0}{\epsilon}\Big)
\psi_0(z^1)\psi_0(z^2)\psi_0(z^3)
\PiIField^{\FixInd\Vnu}_{\Vmu}\big|_{z=0}\, d\sigma 
\\&=
\lim_{\epsilon \to 0} 
\frac{1}{k!\,\epsilon} 
\int_\Interval \zeta^{\Vmu b_1\cdots b_k} 
\partial_{b_1}\ldots\partial_{b_k}
\big( z^{a_1}\cdots z^{a_k}\big)
\psi_0\Big(\frac{\sigma-\sigma_0}{\epsilon}\Big)
\psi_0(0)^3
\delta^{\Vnu}_{\Vmu}\big|_{z=0}\, d\sigma 
\\&=
\lim_{\epsilon \to 0} 
\frac{1}{\epsilon} 
\int_\Interval \zeta^{\Vnu a_1\cdots a_k} 
\psi_0\Big(\frac{\sigma-\sigma_0}{\epsilon}\Big)
\, d\sigma 
=
\zeta^{\Vnu a_1\cdots a_k} (\sigma_0)
\DEcomma
\end{align*}
as $\epsilon^{-1}\psi_0(\epsilon^{-1}(\sigma-\sigma_0))\to
\delta(\sigma-\sigma_0)$ as $\epsilon\to0$.
\end{proof}

\begin{corrol}
\label{thm_ExtrComp_unique}
\begin{align}
J^{\Vmu}=0 
\qquadtext{if and only if}
\zeta^{\Vmu a_1\cdots a_k} = 0
\label{J_zeta_unique}
\DEcomma
\end{align}
and hence the Dixon components are unique.
\end{corrol}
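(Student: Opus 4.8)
The plan is to read off both implications from the extraction formula of Theorem~\ref{thm_ExtrComp_zeta}, treating the reverse implication as immediate and spending the effort on the forward one together with the uniqueness claim.

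The reverse direction is trivial: if every component $\zeta^{\Vmu a_1\cdots a_k}$ vanishes, then each summand in the defining expression (\ref{J_arb_J_form}) for $J^{\Vmu}$ vanishes, so $J^{\Vmu}=0$. For the forward direction I would argue as follows. Assuming $J^{\Vmu}=0$, the proof of Corollary~\ref{lm_J_0} already observes that every piece of the Dixon split vanishes, $J^{\Vmu}_{(k)}=0$ for $0\le k\le n$. I would then feed the specific extraction test tensors $(\phi_{\sigma_0,\epsilon}^{\Vnu,a_1\cdots a_r})_\Vmu$ of (\ref{Split_phi_extract}) into these vanishing functionals. By Theorem~\ref{thm_ExtrComp_zeta} the $\epsilon\to0$ limit of $J^{\Vmu}_{(k)}[(\phi_{\sigma_0,\epsilon}^{\Vnu,a_1\cdots a_r})_\Vmu]$ is exactly $\zeta^{\Vnu a_1\cdots a_k}(\sigma_0)$; since each $J^{\Vmu}_{(k)}$ is identically zero, this limit is $0$. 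As $\sigma_0\in\Interval$, the indices $\Vnu$, and the order $k$ were arbitrary, this yields $\zeta^{\Vmu a_1\cdots a_k}=0$ throughout, completing the equivalence.

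Finally, uniqueness follows by linearity. Suppose a single multipole $J^{\Vmu}$ admitted two Dixon representations with component sets $\zeta_{(1)}$ and $\zeta_{(2)}$, both obeying the orthogonality (\ref{J_DixVec_Cons}) and symmetry (\ref{J_Sym_J}) constraints. Their difference $\zeta_{(1)}-\zeta_{(2)}$ again satisfies these linear constraints and, inserted into (\ref{J_arb_J_form}), produces the multipole $J^{\Vmu}-J^{\Vmu}=0$. Applying the forward implication just established to this zero multipole forces $\zeta_{(1)}-\zeta_{(2)}=0$, i.e.\ $\zeta_{(1)}=\zeta_{(2)}$.

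I do not expect any genuine obstacle here, since the substantive work is carried entirely by Theorem~\ref{thm_ExtrComp_zeta}: the corollary is essentially the statement that the extraction map $J^{\Vmu}\mapsto\zeta^{\Vmu a_1\cdots a_k}$ is well defined and has trivial kernel. The only point requiring a moment's care is verifying that the defining constraints are preserved under taking differences, so that the extraction theorem legitimately applies to $\zeta_{(1)}-\zeta_{(2)}$; this is automatic because (\ref{J_DixVec_Cons}) and (\ref{J_Sym_J}) are linear in the components.
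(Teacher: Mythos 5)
Your proposal is correct and follows essentially the same route as the paper: deduce $J^{\Vmu}_{(k)}=0$ from Corollary~\ref{lm_J_0} and then apply the extraction formula \eqref{Split_zeta_extract} of Theorem~\ref{thm_ExtrComp_zeta} to conclude that all components vanish. Your explicit linearity argument for uniqueness (taking the difference of two admissible component sets, which still satisfies the linear constraints \eqref{J_DixVec_Cons} and \eqref{J_Sym_J}) is exactly the step the paper leaves implicit in the phrase ``and hence the Dixon components are unique.''
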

\begin{proof}
Assuming $J^{\Vmu}=0$ then from corollary \ref{lm_J_0} we have
$J^{\Vmu}_{(k)}=0$, hence from (\ref{Split_zeta_extract}) all the
components are zero.
\end{proof}


\subsection{The Taylor expansion \TSchange{with respect to} Dixon geodesic hypersurfaces}
\label{ch_Tay}
Given a test tensor $\phi_{\Vmu}$, we need to take its Taylor
expansion, in a way that respects the Dixon geodesic hypersurfaces. 
We define the tensor field
\begin{align}
\phi^{(k)}_\Vmu|_{P(\sigma,V)}
&=
\Pi^\Vnu_\Vmu\big|_{C(\sigma)}^{P(\sigma,V)} (\nabla_{V}^k \phi_\Vnu|_{C(\sigma)})
\label{Tay_def_phik}
\DEcomma
\end{align}
by replacing $V$ with $sV$ this becomes
\begin{align}
\phi^{(k)}_\Vmu|_{P(\sigma,s V)}
&=
\Pi^\Vnu_\Vmu\big|_{C(\sigma)}^{P(\sigma,s V)} (\nabla_{s V}^k \phi_\Vnu|_{C(\sigma)})
=
s^k\, \Pi^\Vnu_\Vmu\big|_{C(\sigma)}^{P(\sigma,s V)} (\nabla_{V}^k \phi_\Vnu|_{C(\sigma)})
\label{Tay_phik_res1}
\DEfullstop
\end{align}
In \TSchange{adapted} coordinates $H_{\sigma,V}(1) = P(\sigma,V) =
(\sigma,z^1,\ldots,z^3)$ we have
\begin{align}
\phi^{(k)}_\Vmu |_{(\sigma,z^1,\ldots,z^3)}
=
z^{a_1}\ldots z^{a_k}
\PiIField^{\FixInd\Vnu}_\Vmu 
\big(\nabla_{a_1\cdots a_k}^k \phi_\Vnu|_{C(\sigma)}\big)
\label{Tay_phik_adap}
\DEcomma
\end{align}
since
\begin{align*}
\phi^{(k)}_\Vmu |_{(\sigma,z^1,\ldots,z^3)}
&=
\phi^{(k)}_\Vmu |_{P(\sigma,V)}
=
\Pi^\Vnu_\Vmu\big|_{C(\sigma)}^{P(\sigma,V)} \nabla_{{V}}^k \phi_\Vnu|_{C(\sigma)}
=
\PiIField^{\FixInd\Vnu}_\Vmu \big(\nabla_{{V}}^k \phi_\Vnu|_{C(\sigma)}\big)
=
V^{a_1}\ldots V^{a_k}
\PiIField^{\FixInd\Vnu}_\Vmu \big(\nabla_{a_1\cdots a_k}^k \phi_\Vnu|_{C(\sigma)}\big)
\\&=
z^{a_1}\ldots z^{a_k}
\PiIField^{\FixInd\Vnu}_\Vmu \big(\nabla_{a_1\cdots a_k}^k \phi_\Vnu|_{C(\sigma)}\big)
\DEfullstop
\end{align*}
The Taylor expansion, \TSchange{with respect to} the Dixon geodesic hypersurfaces, of a test
tensor $\phi_\Vmu$ is given by 
\begin{align}
\phi_\Vmu|_{P(\sigma,s V)}
&=
\sum_{k=0}^{N} 
\frac{1}{k!}\,
\phi^{(k)}_\Vmu|_{P(\sigma,s V)}
+O(s^{N+1})
\label{Tay_phi_sum}
\DEfullstop
\end{align}
\begin{lemma}
The order $O(s^{N+1})$, as $s\to0$, in \eqref{Tay_phi_sum} is correct.
\end{lemma}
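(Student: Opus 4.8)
The difficulty is that $\phi_\Vmu|_{P(\sigma,sV)}$ is a tensor based at the \emph{moving} point $P(\sigma,sV)=H_{\sigma,V}(s)$, so its difference quotients in $s$ compare tensors in different spaces and a naive Taylor expansion is meaningless. The plan is to turn it into a genuine curve in one fixed finite-dimensional vector space by parallel transporting it back to $C(\sigma)$ along the geodesic $H_{\sigma,V}$. Fix $\sigma\in\Interval$ and $V^\mu\in\Nspace(\sigma)$, and define
\[
g_\Vnu(s)=\Pi^\Vmu_\Vnu\big|^{C(\sigma)}_{P(\sigma,sV)}\,\phi_\Vmu|_{P(\sigma,sV)}
\DEcomma
\]
which takes values in the single tensor space at $C(\sigma)$. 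Because $\phi$ is smooth, $s\mapsto P(\sigma,sV)$ is a geodesic, and $\Pi$ solves the linear ODE (\ref{Prll_def_Pi}) with smooth coefficients, the curve $g_\Vnu$ is $C^\infty$ near $s=0$.

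I would then apply the ordinary Taylor theorem with remainder to $g_\Vnu$, component by component in a fixed basis at $C(\sigma)$:
\[
g_\Vnu(s)=\sum_{k=0}^{N}\frac{s^k}{k!}\,g_\Vnu^{(k)}(0)+R_\Vnu(s)
\DEcomma\qquad
R_\Vnu(s)=\frac{s^{N+1}}{(N+1)!}\,g_\Vnu^{(N+1)}(\xi_s)
\DEcomma
\]
with $\xi_s$ between $0$ and $s$; continuity of $g^{(N+1)}_\Vnu$ makes $R_\Vnu(s)=O(s^{N+1})$. The derivatives are supplied by lemma (\ref{Prll_higher_deriv}) taken at base point $s_0=0$, giving $g_\Vnu^{(k)}(0)=\nabla^k_{\Hdot_{\sigma,V}}\phi_\Vnu|_{C(\sigma)}$, and since $\Hdot_{\sigma,V}(0)=V$ this equals $\nabla^k_V\phi_\Vnu|_{C(\sigma)}$ by (\ref{Not_def_nabla_V_k}). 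For $\phi_\Vmu$ of rank greater than one the same identification follows by writing $\phi_\Vmu$ as an outer product of one-forms, exactly as in lemma \ref{lm_phi_Vmu_pll}.

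Finally I would transport back to $P(\sigma,sV)$: inverting the map defining $g$ via (\ref{Prll_Pi_Group}) gives
\[
\phi_\Vmu|_{P(\sigma,sV)}
=\sum_{k=0}^{N}\frac{1}{k!}\,s^k\,\Pi^\Vnu_\Vmu\big|^{P(\sigma,sV)}_{C(\sigma)}\nabla^k_V\phi_\Vnu|_{C(\sigma)}
+\Pi^\Vnu_\Vmu\big|^{P(\sigma,sV)}_{C(\sigma)}\,R_\Vnu(s)
\DEfullstop
\]
By (\ref{Tay_phik_res1}) each summand is exactly $\tfrac1{k!}\phi^{(k)}_\Vmu|_{P(\sigma,sV)}$, so the sum is precisely the right-hand side of (\ref{Tay_phi_sum}); and since $\Pi^\Vnu_\Vmu\big|^{P(\sigma,sV)}_{C(\sigma)}$ is continuous in $s$ (indeed tending to $\delta^\Vnu_\Vmu$ as $s\to0$) hence bounded near $s=0$, multiplying the $O(s^{N+1})$ remainder by it leaves it $O(s^{N+1})$. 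This establishes the claimed order.

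The step I expect to require the most care is the bookkeeping of the two-point tensor $\Pi^\mu_\nu|^p_q$: one must check that $g_\Vnu(s)$ genuinely lands in the fixed space at $C(\sigma)$, that the forward transport returns exactly $\phi^{(k)}$ rather than some reindexed variant, and that the bounded transport does not degrade the order of the remainder. Everything else is the classical Taylor theorem combined with lemma (\ref{Prll_higher_deriv}). Should a statement uniform in $(\sigma,V)$ over compact sets be wanted, it follows from the same argument using the joint smoothness of $P$, $\Pi$ and $\phi$ together with the compact support of $\phi$, but the pointwise order statement needed for (\ref{Tay_phi_sum}) does not require it.
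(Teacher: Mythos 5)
Your proposal is correct and follows essentially the same route as the paper's proof: parallel transport $\phi_\Vmu$ back to $C(\sigma)$ along $H_{\sigma,V}$, apply the one-dimensional Taylor theorem to the resulting curve in the fixed tensor space at $C(\sigma)$, identify the derivatives via (\ref{Prll_higher_deriv}) as $\nabla^k_V\phi_\Vnu|_{C(\sigma)}$, and transport forward to recover (\ref{Tay_phi_sum}). The only difference is that you make explicit the details the paper leaves implicit (smoothness of the transported curve, the Lagrange remainder, and boundedness of $\Pi^\Vnu_\Vmu\big|^{P(\sigma,sV)}_{C(\sigma)}$ preserving the $O(s^{N+1})$ order), which is a welcome tightening rather than a departure.
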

\begin{proof}
Taking the Taylor expansion about $C(\sigma)$ along each geodesic
$H_{\sigma,V}(s)$ and using (\ref{Prll_higher_deriv}) we have
\begin{align*}
\Pi^\Vmu_\Vnu\big|^{C(\sigma)}_{P(\sigma,s V)} \phi_\Vmu|_{P(\sigma,s V)}
&=
\sum_{k=0}^{N} 
\frac{s^k}{k!}\,
\frac{d^k}{ds^k}\Big|_{s=0} 
\Big(\Pi^\Vmu_\Vnu\big|^{C(\sigma)}_{P(\sigma,s  V)} 
\phi_\Vmu|_{P(\sigma,s V)}\Big)
+O(s^{N+1})
\\&=
\sum_{k=0}^{N} 
\frac{s^k}{k!}\,
\nabla_{{V}}^k \phi_\Vnu|_{C(\sigma)}
+O(s^{N+1})
\DEfullstop
\end{align*}
Hence
\begin{align*}
\phi_\Vmu|_{P(\sigma,s V)}
&=
\sum_{k=0}^{N} 
\frac{s^k}{k!}\, \Pi^\Vnu_\Vmu\big|_{C(\sigma)}^{P(\sigma,s V)} \nabla_{{V}}^k \phi_\Vnu|_{C(\sigma)}
+O(s^{N+1})
=
\sum_{k=0}^{N} 
\frac{1}{k!}\,
\phi^{(k)}_\Vmu|_{P(\sigma,s V)}
+O(s^{N+1})
\DEfullstop
\end{align*}
\end{proof}

\begin{lemma}
\begin{align}
\nabla^r_R
\phi^{(k)}_\Vmu
=
\frac{k!}{(k-r)!}\, \phi^{(k)}_\Vmu
\label{Tay_nabla_r_phi_k=0}
\DEfullstop
\end{align}
\end{lemma}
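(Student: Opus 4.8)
The plan is to prove \eqref{Tay_nabla_r_phi_k=0} by establishing the single-derivative ``base case'' $\nabla_R \phi^{(k)}_\Vmu = k\,\phi^{(k)}_\Vmu$ and then inducting on $r$, in exact parallel with the proof of lemma \ref{lm_Split_J_on_Nabla}. The reason this works is that $\phi^{(k)}_\Vmu$ is built to be homogeneous of degree $k$ in the radial/spatial directions, so $\nabla_R$ acts on it as the Euler scaling operator; this is already visible in \eqref{Tay_phik_res1}, where replacing $V$ by $sV$ produces the factor $s^k$.

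For the base case I would work in the Dixon adapted coordinate system and use the explicit form \eqref{Tay_phik_adap}, writing $\phi^{(k)}_\Vmu = z^{a_1}\cdots z^{a_k}\,W_{\Vmu,a_1\cdots a_k}$ with $W_{\Vmu,a_1\cdots a_k}=\PiIField^{\FixInd\Vnu}_\Vmu\big(\nabla^k_{a_1\cdots a_k}\phi_\Vnu|_{C(\sigma)}\big)$. Using \eqref{DivF_R_DixCoord} we have $R^0=0$ and $R^a=z^a$, so $\nabla_R = z^b\nabla_b$ with $b$ spatial. Leibniz then splits the computation into two pieces: $\nabla_R$ hitting the monomial, and $\nabla_R$ hitting $W$. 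Since $\nabla_b z^a=\partial_b z^a=\delta^a_b$, the first piece gives $z^b\sum_i\delta^{a_i}_b\prod_{j\ne i}z^{a_j}=k\,z^{a_1}\cdots z^{a_k}$, the Euler relation for a degree-$k$ monomial. I would then argue that $\nabla_R W=0$: the factor $\nabla^k_{a_1\cdots a_k}\phi_\Vnu|_{C(\sigma)}$ depends only on $\sigma=z^0$ and carries frozen indices $\FixInd\Vnu$ (so, as in \eqref{Prll_nabla_PiI}, the spatial $\nabla_b$ produces no Christoffel symbols on those indices and annihilates it), while $\nabla_R\PiIField^{\FixInd\Vnu}_\Vmu=0$ because at $P(\sigma,V)$ the radial vector coincides with $\Hdot_{\sigma,V}(1)$ and $\nabla_{\Hdot_{\sigma,V}}\PiIField=0$ by \eqref{Prll_nabla_as_PiI}. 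Hence $\nabla_R\phi^{(k)}_\Vmu=k\,\phi^{(k)}_\Vmu$.

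For the inductive step I would invoke the algebraic identity \eqref{Split_J_k_induction}, namely $\nabla^{r+1}_R\phi_\Vmu = -r\,\nabla^r_R\phi_\Vmu + \nabla^r_R(\nabla_R\phi_\Vmu)$; its derivation used only $R^a=z^a$, $R^0=0$ and $\nabla_b z^a=\delta^a_b$, so it holds verbatim with $\phi^{(k)}_\Vmu$ in place of $\phi_\Vmu$. Assuming $\nabla^r_R\phi^{(k)}_\Vmu=\tfrac{k!}{(k-r)!}\phi^{(k)}_\Vmu$ and feeding in the base case $\nabla_R\phi^{(k)}_\Vmu=k\,\phi^{(k)}_\Vmu$ yields $\nabla^{r+1}_R\phi^{(k)}_\Vmu = \tfrac{k!}{(k-r)!}(-r+k)\,\phi^{(k)}_\Vmu = \tfrac{k!}{(k-(r+1))!}\phi^{(k)}_\Vmu$, closing the induction (and correctly giving $0$ once $r>k$, consistent with the convention used in lemma \ref{lm_Split_J_on_Nabla}).

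I expect the only real obstacle to be the careful verification that $\nabla_R W=0$. Everything hinges on distinguishing the live lower indices $\Vmu$ of $\PiIField^{\FixInd\Vnu}_\Vmu$ (which do couple to the connection) from the frozen indices $\FixInd\Vnu$ and the $\sigma$-dependence of $\nabla^k\phi|_{C(\sigma)}$ (which do not), and on correctly identifying $R$ with the geodesic tangent $\Hdot_{\sigma,V}(1)$ so that \eqref{Prll_nabla_as_PiI} applies. Once that bookkeeping is handled, the remaining steps are the routine Euler-relation count and the induction, both of which follow the template already set by lemma \ref{lm_Split_J_on_Nabla}.
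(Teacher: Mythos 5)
Your proof is correct, but it is organised differently from the paper's. The paper proves \eqref{Tay_nabla_r_phi_k=0} in one stroke for all $r$: working in adapted coordinates, it substitutes \eqref{Tay_phik_adap}, uses \eqref{Prll_multideriv_Pi} (together with the fact that $\nabla^k_{a_1\cdots a_k}\phi_\Vnu|_{C(\sigma)}$ depends only on $\sigma$ and carries frozen indices) to push all $r$ covariant derivatives onto the monomial $z^{a_1}\cdots z^{a_k}$ as partial derivatives, and finishes with the Euler identity $z^{b_1}\cdots z^{b_r}\,\partial_{b_1}\cdots\partial_{b_r}\big(z^{a_1}\cdots z^{a_k}\big)=\tfrac{k!}{(k-r)!}\,z^{a_1}\cdots z^{a_k}$. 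You instead prove only the $r=1$ case by this direct method --- where Leibniz is unambiguous and the needed annihilation $\nabla_R\big(\PiIField^{\FixInd\Vnu}_\Vmu\,\nabla^k_{a_1\cdots a_k}\phi_\Vnu|_{C(\sigma)}\big)=0$ follows from \eqref{DivF_R_1} and \eqref{Prll_nabla_as_PiI}, or equivalently from the $k=1$ instance of \eqref{Prll_multideriv_Pi} --- and then induct on $r$ via the operator identity \eqref{Split_J_k_induction}, correctly observing that its derivation is pointwise and coordinate-based, so it applies to the field $\phi^{(k)}_\Vmu$ and not only to test tensors. The trade-off: the paper's computation is shorter but leaves the multi-derivative Leibniz bookkeeping implicit (the cross terms in which some of the $r$ derivatives hit $\PiIField^{\FixInd\Vnu}_\Vmu$ are killed only after the contraction with the $z$'s symmetrises them so that \eqref{Prll_multideriv_Pi} applies), whereas your induction sidesteps that bookkeeping entirely at the cost of importing the machinery of lemma \ref{lm_Split_J_on_Nabla}, and it hands you the vanishing for $r>k$ automatically. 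Both arguments rest on the same underlying fact, the degree-$k$ homogeneity of $\phi^{(k)}_\Vmu$ already visible in \eqref{Tay_phik_res1}.
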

\begin{proof}
From (\ref{Prll_multideriv_Pi}) and (\ref{Tay_phik_adap})
\TSchange{
\begin{align*}
\nabla^r_R \phi^{(k)}_\Vmu 
&=
z^{b_1}\ldots z^{b_r} \nabla^r_{b_1\cdots b_r} \phi^{(k)}_\Vmu 
\\&=
z^{b_1}\ldots z^{b_r} \nabla^r_{b_1\cdots b_r} \Big(z^{a_1}\ldots z^{a_k}
\PiIField^{\FixInd\Vnu}_\Vmu 
\big(\nabla_{a_1\cdots a_k}^k \phi_\Vnu|_{C(\sigma)}\big)\Big)
\\&=
z^{b_1}\cdots z^{b_r}
\partial_{b_1}\ldots \partial_{b_r} \big(z^{a_1}\ldots z^{a_k}\big)
\PiIField^{\FixInd\Vnu}_\Vmu 
\big(\nabla_{a_1\cdots a_k}^k \phi_\Vnu|_{C(\sigma)}\big)
\\&=
\frac{k!}{(k-r)!}\, \big(z^{a_1}\ldots z^{a_k}\big)
\PiIField^{\FixInd\Vnu}_\Vmu 
\big(\nabla_{a_1\cdots a_k}^k \phi_\Vnu|_{C(\sigma)}\big)
=
\frac{k!}{(k-r)!}\, \phi^{(k)}_\Vmu
\DEfullstop
\end{align*}
}
\end{proof}

\subsection{All multipoles are Dixon multipoles}
\label{ch_Span}

In the previous subsection we showed that the components of a Dixon
multipole are unique and can be extracted using particular text
functions given by (\ref{J_J_Jk}).  In this section we show that the
all multipoles can be written as a Dixon multipole. Thus we see that
the Dixon multipole is merely a representation of a multipole.  Thus
if a multipole is defined via (\ref{J_arb_J_form}) with respect to one
Dixon vector, $\DixVec_\mu$, then it is guaranteed that it can be
written with respect to another Dixon vector $\hat\DixVec_\mu$.  A
similar result is available in \cite{pinto2022thesis} but here an
explicit formula is given, which in addition respects the constraint
(\ref{Intro_Tab_Dixon_N_NonOrth}).

We assume we are given an arbitrary multipole $\Jalt^{\Vmu}$, i.e. an operator
which takes test tensors $\phi_{\Vmu}$ to give a number
$\Jalt^{\Vmu}[\phi_{\Vmu}]$. In \cite{gratus2020distributional} we show how to construct an
arbitrary distribution. This is by taking a collection of monopoles,
and allowing finite sums, derivatives, contractions and products with
scalar fields. In \cite{gratus2020distributional} we showed that all multipoles could be
expressed as Ellis multipoles. Thus another interpretation is that
$\Jalt^{\Vmu}$ is given in the Ellis representation. An alternative
is that $\Jalt^{\Vmu}$ is given to us in \TSchange{the} Dixon representation, but
with a different Dixon vector $\DixVec_\mu$. Even in this case it is
not obvious that the multipole can be represented as a Dixon multipole
with the new Dixon vector.

Our goal is to establish $\Jalt^{\Vmu}$ can be
written as a Dixon multipole, but the only information we can use are
the values $\Jalt^{\Vmu}[\phi_{\Vmu}]$ for particular $\phi_{\Vmu}$'s.

The result that all multipoles can be written as Dixon multipoles,
should not be surprising as if one \TSchange{were} to count the number of
components in the Ellis representation, it is the same as the
Dixon representation. However one should be careful, as the components
in the Ellis representation, also have the same number of components,
but they are not unique. So simply counting number
of components is not sufficient.

We can deduce the order of $\Jalt^{\Vmu}$ by
the statement \cite[eqn. (115)]{gratus2020distributional}: The order of $\Jalt^{\Vmu}$ is the smallest $n$
such that
\begin{align}
\Jalt^{\Vmu}[\lambda^{n+1}\phi_{\Vmu}]=0
\quad\text{for all tensors $\phi_{\Vmu}$ and all 
scalar fields $\lambda$ such that $\lambda|_{C(\sigma)}=0$.}
\label{Span_def_order}
\end{align}
Having established the order of $\Jalt^{\Vmu}$ we can now perform the
Dixon split. I.e. using (\ref{J_J_Jk}) we can create the multipoles 
$\Jalt^{\Vmu}_{(k)}$
\begin{align}
\Jalt^{\Vmu}_{(r)}[\phi_{\Vmu}] 
= 
 {\sum_{k=r}^{n}}\frac{(-1)^{k-r}}{(k-r)!\ {r!}}  
\Jalt^{\Vmu}[\nabla_{R}^{k} \phi_{\Vmu}]
\label{Span_J_Jk}
\DEfullstop
\end{align}

We can now use (\ref{Split_zeta_extract}) to calculate the components
of $\Jalt^{\Vmu}$,
\begin{align}
\zetaalt^{\Vnu a_1\cdots a_k}(\sigma_0)
=
\lim_{\epsilon\to 0}
\Jalt^{\Vmu}_{(k)}[
(\phi_{\sigma_0,\epsilon}^{\Vnu,a_1\cdots a_r})_\Vmu 
]
\label{Span_zetaalt}
\DEfullstop
\end{align}
We emphasise here that $\Jalt^{\Vmu}$ was not
defined using $\zetaalt^{\Vnu a_1\cdots a_k}$, via (\ref{J_arb_J_form}),
instead $\Jalt^{\Vmu}$ was given to us and we have used various test
functions to calculate the components.

It is now necessary to establish that (\ref{J_arb_J_form}) gives us
the correct distribution. To do this we define a new distribution as
\begin{align}
J^{\Vmu}
= 
\sum_{k=0}^{N} 
\frac{1}{k!}\, \nabla^k_{\rho_1\ldots\rho_k}
\int_\Interval \zetaalt^{\Vmu\rho_1\cdots\rho_k} 
\deltafour\big(x-C(\sigma)\big)\,d\sigma
\label{Span_Jalt}
\DEfullstop
\end{align}
It is then necessary to show $J^{\Vmu}=\Jalt^{\Vmu}$.

\begin{theorem}
\label{thm_Span}
\begin{align}
J^{\Vmu} = \Jalt^{\Vmu}
\label{Span_J=Jalt}
\DEfullstop
\end{align}
Hence all all
multipoles over $C$ are Dixon multipoles over $C$.
\end{theorem}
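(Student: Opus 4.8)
The plan is to prove the equality of the two distributions by showing $J^{\Vmu}[\phi_{\Vmu}]=\Jalt^{\Vmu}[\phi_{\Vmu}]$ for every test tensor $\phi_{\Vmu}$. Both operators have order at most $n$: this is given for $\Jalt^{\Vmu}$ by \eqref{Span_def_order}, and for $J^{\Vmu}$ it follows because the components $\zetaalt^{\Vmu\rho_1\cdots\rho_k}$ extracted via \eqref{Span_zetaalt} vanish for $k>n$ (the sum in \eqref{Span_J_Jk} defining $\Jalt^{\Vmu}_{(k)}$ is then empty). First I would apply the Taylor expansion with respect to the Dixon geodesic hypersurfaces, \eqref{Tay_phi_sum}, to write $\phi_{\Vmu}=\sum_{k=0}^{n}\tfrac1{k!}\phi^{(k)}_{\Vmu}+r_{\Vmu}$ with $r_{\Vmu}=O(s^{n+1})$. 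Both distributions annihilate the remainder: for $J^{\Vmu}$ this is immediate from the action formula \eqref{J_J_phi}, since $r_{\Vmu}$ vanishing to order $n+1$ on $C$ forces $\nabla^k_{\rho_1\cdots\rho_k}r_{\Vmu}|_{C}=0$ for all $k\le n$; for $\Jalt^{\Vmu}$ one writes $r_{\Vmu}$ in adapted coordinates as a sum of monomials carrying $n+1$ factors of the coordinate functions $z^{a}$, each of which vanishes on $C$, and invokes \eqref{Span_def_order} after polarising its single-$\lambda$ form into a multilinear one. This reduces the claim to showing $J^{\Vmu}[\phi^{(k)}_{\Vmu}]=\Jalt^{\Vmu}[\phi^{(k)}_{\Vmu}]$ for each $k$.

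Next I would exploit that each Taylor piece is an eigentensor of the radial derivations, $\nabla^{r}_{R}\phi^{(k)}_{\Vmu}=\tfrac{k!}{(k-r)!}\phi^{(k)}_{\Vmu}$ by \eqref{Tay_nabla_r_phi_k=0} (vanishing for $r>k$). Writing each operator as the sum of its Dixon-split pieces — an identity valid for any linear functional, since the double sum $\sum_{r}\sum_{k\ge r}\tfrac{(-1)^{k-r}}{(k-r)!r!}$ telescopes by the binomial theorem to the identity — and using the diagonalisation \eqref{Split_J_on_Nabla} of lemma \ref{lm_Split_J_on_Nabla} (whose key relation $J^{\Vmu}_{(k)}[\nabla_R\phi]=k\,J^{\Vmu}_{(k)}[\phi]$ is in fact a purely combinatorial consequence of the split formula, hence also holds for the abstract split of $\Jalt^{\Vmu}$), I would show $J^{\Vmu}_{(\ell)}[\phi^{(k)}_{\Vmu}]=0=\Jalt^{\Vmu}_{(\ell)}[\phi^{(k)}_{\Vmu}]$ whenever $\ell\neq k$: comparing the eigenvalues $\tfrac{k!}{(k-r)!}$ and $\tfrac{\ell!}{(\ell-r)!}$ at a suitably chosen $r$ forces the off-diagonal pieces to vanish. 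Hence $J^{\Vmu}[\phi^{(k)}_{\Vmu}]=J^{\Vmu}_{(k)}[\phi^{(k)}_{\Vmu}]$ and $\Jalt^{\Vmu}[\phi^{(k)}_{\Vmu}]=\Jalt^{\Vmu}_{(k)}[\phi^{(k)}_{\Vmu}]$, reducing everything to the single diagonal identity $J^{\Vmu}_{(k)}[\phi^{(k)}_{\Vmu}]=\Jalt^{\Vmu}_{(k)}[\phi^{(k)}_{\Vmu}]$.

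For the diagonal term I would evaluate the left-hand side directly from \eqref{Split_arb_J_form_act} and \eqref{Tay_phik_adap}: using $\nabla_a z^c=\delta^c_a$ and $\nabla^{k}_{a_1\cdots a_k}\PiIField^{\FixInd\Vnu}_{\Vmu}=0$, the action collapses to the moment $\int_\Interval\zetaalt^{\Vnu a_1\cdots a_k}\,(\nabla^k_{a_1\cdots a_k}\phi_{\Vnu})|_{C(\sigma)}\,d\sigma$ up to a fixed factor. It then remains to identify $\Jalt^{\Vmu}_{(k)}[\phi^{(k)}_{\Vmu}]$ with the same moment. Here I would note that $\Jalt^{\Vmu}_{(k)}$, restricted to the $\phi^{(k)}_{\Vmu}$, depends only on the field $(\nabla^{k}_{a_1\cdots a_k}\phi_{\Vnu})|_{C(\sigma)}$ along the worldline and defines a compactly supported, finite-order distribution in $\sigma$ whose kernel is exactly what the extraction limit \eqref{Span_zetaalt}, via theorem \ref{thm_ExtrComp_zeta}, computes, namely $\zetaalt^{\Vnu a_1\cdots a_k}$; the localising test tensors $(\phi^{\Vnu,a_1\cdots a_k}_{\sigma_0,\epsilon})_{\Vmu}$ differ from $\phi^{(k)}_{\Vmu}$ only through the flat cut-offs $\psi_0(z^a)$, which do not affect the $k$-th spatial jet at the worldline.

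The main obstacle is precisely this last identification: unlike the uniqueness statement of corollary \ref{thm_ExtrComp_unique}, which presupposes a genuine Dixon multipole, here $\Jalt^{\Vmu}_{(k)}$ is only an abstract functional, and one must argue that its extracted kernel determines its action on every $\phi^{(k)}_{\Vmu}$. I expect this to require a density argument: the map $f\mapsto\Jalt^{\Vmu}_{(k)}[\phi^{(k)}_{\Vmu}]$ with $f=\nabla^k\phi|_{C}$ is a continuous functional supported on $C$, so its values against the localising bumps pin it down as integration against $\zetaalt^{\Vnu a_1\cdots a_k}$. Equivalently, one may run the whole argument on the difference $\Jalt^{\Vmu}-J^{\Vmu}$, whose extracted components all vanish, and conclude it is the zero operator. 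Once the diagonal identity holds, summing over $k$ gives $J^{\Vmu}[\phi_{\Vmu}]=\Jalt^{\Vmu}[\phi_{\Vmu}]$ for all $\phi_{\Vmu}$, which is \eqref{Span_J=Jalt}.
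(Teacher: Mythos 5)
Your reduction scheme is sound as far as it goes, and it reassembles the paper's own ingredients in a different order: the Taylor expansion \eqref{Tay_phi_sum} with the remainder annihilated through \eqref{Span_def_order}, the observation that the split pieces \eqref{Span_J_Jk} of \emph{any} linear functional sum back to that functional (the binomial telescope), and the off-diagonal vanishing $\Jalt^{\Vmu}_{(\ell)}[\phi^{(k)}_{\Vmu}]=0$ for $\ell\ne k$ by comparing eigenvalues. (One caveat there: promoting lemma \ref{lm_Split_J_on_Nabla} to the abstract split of $\Jalt^{\Vmu}$ is not purely combinatorial; besides the pointwise identity \eqref{Split_J_k_induction} it needs $\Jalt^{\Vmu}[\nabla_R^{n+1}\phi_{\Vmu}]=0$, which you must extract from \eqref{Span_def_order} by polarisation.) The genuine gap sits exactly where you planted your flag: the diagonal identity. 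Your justification --- that a continuous, finite-order functional supported on $C$ is ``pinned down'' by its limits against the localising bumps --- is false as a general principle. In the one-dimensional model take $L[f]=f'(\sigma_1)$ for a fixed $\sigma_1$: it is continuous, of finite order, and supported on the worldline, but on the bumps $\epsilon^{-1}\psi_0\big((\sigma-\sigma_0)/\epsilon\big)$ it evaluates to $\epsilon^{-2}\psi_0'\big((\sigma_1-\sigma_0)/\epsilon\big)$, which tends to $0$ for \emph{every} $\sigma_0$ (compact support when $\sigma_0\ne\sigma_1$, flatness of $\psi_0$ at $0$ when $\sigma_0=\sigma_1$), even though $L\ne0$. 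So an extracted kernel does not by itself determine the functional; and for the same reason your ``equivalent'' variant --- apply everything to $\Jalt^{\Vmu}-J^{\Vmu}$ and conclude from its vanishing extracted components that it is the zero operator --- is circular, since corollary \ref{thm_ExtrComp_unique} is available only for functionals already known to be Dixon multipoles, which is precisely what is in question for the difference.

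What closes this gap, and is the analytic heart of the paper's proof, is an interchange argument through \emph{smeared} bumps rather than pointwise limits of bumps. One starts from $J^{\Vmu}_{(k)}[\phi_{\Vmu}]=\frac{(-1)^k}{k!}\int_\Interval\zetaalt^{\Vmu a_1\cdots a_k}(\sigma_0)\,(\nabla^k_{a_1\cdots a_k}\phi_{\Vmu})|_{C(\sigma_0)}\,d\sigma_0$, substitutes the extraction formula \eqref{Span_zetaalt}, and then moves both the $\sigma_0$-integral and the limit $\epsilon\to0$ inside $\Jalt^{\Vmu}_{(k)}$. The superposition $\int_\Interval(\phi_{\sigma_0,\epsilon}^{\Vnu,a_1\cdots a_k})_{\Vmu}\,(\nabla^k_{a_1\cdots a_k}\phi_{\Vnu})|_{C(\sigma_0)}\,d\sigma_0$ is a mollification in $\sigma$ of $\phi^{(k)}_{\Vmu}$ (your correct remark about the flat cut-offs enters here), it converges to $\phi^{(k)}_{\Vmu}$ in the LF topology, and $\Jalt^{\Vmu}_{(k)}$ is continuous because it involves only finitely many derivatives; this yields $J^{\Vmu}_{(k)}[\phi_{\Vmu}]=\frac{1}{k!}\Jalt^{\Vmu}_{(k)}[\phi^{(k)}_{\Vmu}]$, which, since $J^{\Vmu}_{(k)}[\phi^{(k)}_{\Vmu}]=k!\,J^{\Vmu}_{(k)}[\phi_{\Vmu}]$, is your diagonal identity. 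Note that in the counterexample above it is precisely the swap of $\lim_{\epsilon\to0}$ with the $\sigma_0$-integral that fails (the bump values do not converge uniformly in $\sigma_0$), so this interchange --- which rests on the multipole structure of $\Jalt^{\Vmu}$ and not merely on continuity plus support --- cannot be waved through as a routine density statement; it is the step that has to be proved. With it supplied, your skeleton assembles into a proof equivalent to the paper's.
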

\begin{proof}
\begin{align*}
J^{\Vmu}_{(k)}[\phi_\Vmu]
&=
\frac{(-1)^k}{k!} \int_\Interval 
\zetaalt^{\Vmu a_1\cdots a_k}(\sigma_0) 
(\nabla_{a_1\cdots a_k}^k \phi_\Vnu)|_{C(\sigma_0)}\ 
d\sigma_0
\\&=
\frac{(-1)^k}{k!} \int_\Interval 
\lim_{\epsilon\to 0}
\Jalt^{\Vmu}_{(k)}[
(\phi_{\sigma_0,\epsilon}^{\Vnu,a_1\cdots a_r})_\Vmu 
]
(\nabla_{a_1\cdots a_k}^k \phi_\Vnu)|_{C(\sigma_0)}\ 
d\sigma_0
\\&=
\frac{(-1)^k}{k!} \lim_{\epsilon\to 0}
\Jalt^{\Vmu}_{(k)}\Big[
\int_\Interval 
(\phi_{\sigma_0,\epsilon}^{\Vnu,a_1\cdots a_r})_\Vmu 
(\nabla_{a_1\cdots a_k}^k \phi_\Vnu)|_{C(\sigma_0)}\ 
d\sigma_0
\Big]
\\&=
\frac{1}{k!}\lim_{\epsilon\to 0}
\Jalt^{\Vmu}_{(k)}\Big[
\int_\Interval 
\frac{1}{\epsilon} z^{a_1}\cdots z^{a_r}
\Big(\psi_0\Big(\frac{\sigma-\sigma_0}{\epsilon}\Big)
\psi_0(z^1)\psi_0(z^2)\psi_0(z^3)
\PiIField^{\FixInd\Vnu}_{\Vmu}\Big) 
(\nabla_{a_1\cdots a_k}^k \phi_\Vnu)|_{C(\sigma_0)}\ 
d\sigma_0
\Big]
\\&=
\lim_{\epsilon\to 0}
\frac{1}{k!\,\epsilon} 
\Jalt^{\Vmu}_{(k)}\Big[
\int_\Interval \psi_0\Big(\frac{\sigma-\sigma_0}{\epsilon}\Big)
\psi_0(z^1)\psi_0(z^2)\psi_0(z^3)
z^{a_1}\cdots z^{a_r} \PiIField^{\FixInd\Vnu}_{\Vmu}
(\nabla_{a_1\cdots a_k}^k \phi_\Vnu)|_{C(\sigma_0)}\ 
d\sigma_0
\Big]
\\&=
\lim_{\epsilon\to 0}
\frac{1}{k!\,\epsilon} 
\Jalt^{\Vmu}_{(k)}\Big[
\int_\Interval \psi_0\Big(\frac{\sigma-\sigma_0}{\epsilon}\Big)
\psi_0(z^1)\psi_0(z^2)\psi_0(z^3)
\phi^{(k)}_\Vmu|_{C(\sigma_0)}
d\sigma_0
\Big]
\\&=
\lim_{\epsilon\to 0}
\frac{1}{k!\,\epsilon} 
\Jalt^{\Vmu}_{(k)}\Big[
\int_\Interval \psi_0\Big(\frac{\sigma-\sigma_0}{\epsilon}\Big)
\phi^{(k)}_\Vmu|_{C(\sigma_0)}
d\sigma_0
\Big]
\\&=
\frac{1}{k!} 
\Jalt^{\Vmu}_{(k)}\Big[
\lim_{\epsilon\to 0} \epsilon^{-1}
\int_\Interval \psi_0\Big(\frac{\sigma-\sigma_0}{\epsilon}\Big)
\phi^{(k)}_\Vmu|_{C(\sigma_0)}
d\sigma_0
\Big]
\\&=
\frac{1}{k!} 
\Jalt^{\Vmu}_{(k)}[
\phi^{(k)}_\Vmu
]
=
\frac{1}{k!} 
 {\sum_{\ell=k}^{n}}\frac{(-1)^{\ell-k}}{(\ell-k)!\ {k!}}  
\Jalt^{\Vmu}[\nabla_{R}^{\ell} \phi^{(k)}_{\Vmu}]
\\&=
\frac{1}{k!} 
 {\sum_{\ell=k}^{k}}\frac{(-1)^{\ell-k}}{(\ell-k)!\ {k!}}\,
\frac{k!}{(k-\ell)!}\, 
\Jalt^{\Vmu}[\phi^{(k)}_{\Vmu}]
=
\frac{1}{k!} 
\Jalt^{\Vmu}[\phi^{(k)}_{\Vmu}]
\DEfullstop
\end{align*}
Hence
\begin{align*}
J^{\Vmu}[\phi_\Vmu]
=
\sum_{k=0}^N 
J^{\Vmu}_{(k)}[\phi_\Vmu]
=
\sum_{k=0}^N
\frac{1}{k!} 
\Jalt^{\Vmu}\Big[
\phi^{(k)}_\Vmu
\Big]
=
\Jalt^{\Vmu}\Big[
\sum_{k=0}^N
\frac{1}{k!} 
\phi^{(k)}_\Vmu
\Big]
=
\Jalt^{\Vmu}[\phi_\Vmu]
\DEfullstop
\end{align*}

Since for any arbitrary distribution $\Jalt^{\Vmu}$ 
we can construct using (\ref{Span_J_Jk}), (\ref{Span_zetaalt}) and
(\ref{Span_Jalt}) $J^{\Vmu}$, then  (\ref{Span_J=Jalt}) implies all
multipoles over $C$ are Dixon multipoles over $C$.

Observe that between lines 7 and 8 of this derivation we swapped
$\lim_{\epsilon\to 0}$ and $\Jalt^{\Vmu}_{(k)}$. This is permitted
since $\Jalt^{\Vmu}_{(k)}$ consists of only a finite number of
derivatives and is
hence continuous and
\begin{align*}
\lim_{\epsilon\to 0} \epsilon^{-1}
\int_\Interval \psi_0\Big(\frac{\sigma-\sigma_0}{\epsilon}\Big)
\phi^{(k)}_\Vmu|_{C(\sigma_0)}
d\sigma_0
\to
\phi^{(k)}_\Vmu|_{C(\sigma)}
\DEcomma
\end{align*}
in the  LF-topology.
\end{proof}

\begin{corrol}
Given a multipole $J^{\Vmu}$ defined by \eqref{J_arb_J_form} with respect to one
Dixon vector $\DixVec_\mu$, and given a different Dixon vector
$\hat\DixVec_\mu$, then there exist unique components
$\hat{\zeta}^{\Vmu a_1\cdots a_k}$ so that we can write $J^{\Vmu}$
with respect to $\hat\DixVec_\mu$.
\end{corrol}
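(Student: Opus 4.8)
The plan is to recognise that the two preceding results, Theorem \ref{thm_Span} and Corollary \ref{thm_ExtrComp_unique}, already contain everything needed, provided one is willing to rebuild the geometric scaffolding with respect to the new Dixon vector. The essential observation is that a multipole is fundamentally an operator on test tensors: the value $J^{\Vmu}[\phi_\Vmu]$ is determined by the distribution itself and is wholly independent of any representation used to write it down. Thus, although $J^{\Vmu}$ is originally presented by \eqref{J_arb_J_form} using $\DixVec_\mu$, I would discard that representation and treat $J^{\Vmu}$ simply as the abstract operator $\Jalt^{\Vmu}$ appearing in Theorem \ref{thm_Span}.

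First I would note that, by hypothesis, $\hat\DixVec_\mu$ is an admissible Dixon vector, so that $\hat\DixVec_\mu\,\Cdot^\mu\ne 0$ and, after rescaling, \eqref{Intro_Tab_Dixon_N_NonOrth} holds. With this in hand I would rebuild the entire apparatus of Section \ref{ch_Span} using $\hat\DixVec_\mu$ in place of $\DixVec_\mu$: the space $\Nspace(\sigma)$ of vectors orthogonal to $\hat\DixVec_\mu$, the Dixon geodesic hypersurfaces, the adapted coordinate system, the radial vector $R^\mu$, the parallel-transport tensor $\Pi^\mu_\nu$ and the parallel tensors $\PiIField^{\FixInd\Vnu}_\Vmu$. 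Crucially, the order $n$ of $J^{\Vmu}$ is intrinsic, being characterised by \eqref{Span_def_order} with no reference to any Dixon vector, so the same finite $n$ governs the construction in either case.

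I would then run the machine exactly as in Theorem \ref{thm_Span}. Performing the Dixon split \eqref{Span_J_Jk} and extracting components via \eqref{Span_zetaalt} yields tensors $\hat{\zeta}^{\Vmu a_1\cdots a_k}$; because these are the spatial components in the $\hat\DixVec$-adapted coordinates, they automatically satisfy the orthogonality constraint \eqref{J_DixVec_Cons} with respect to $\hat\DixVec_\mu$ together with the symmetry \eqref{J_Sym_J}. Reassembling them through \eqref{Span_Jalt} produces a Dixon multipole, and Theorem \ref{thm_Span} guarantees that this multipole equals $J^{\Vmu}$, establishing existence. Uniqueness is then immediate from Corollary \ref{thm_ExtrComp_unique}, which asserts precisely that the Dixon components relative to a fixed Dixon vector are unique.

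The main obstacle, I expect, is one of bookkeeping rather than substance: one must be scrupulous that every structure invoked by Theorem \ref{thm_Span}—the coordinates, the transport, and the extracting test tensors $(\phi_{\sigma_0,\epsilon}^{\Vnu,a_1\cdots a_r})_\Vmu$ of \eqref{Split_phi_extract}—is consistently the $\hat\DixVec$-version, and that nowhere does the proof of Theorem \ref{thm_Span} secretly rely on the representation's original Dixon vector coinciding with the construction's. Inspecting that proof confirms it treats $\Jalt^{\Vmu}$ as a black-box operator throughout, so this consistency check is the only delicate point.
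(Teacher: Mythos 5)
Your proposal is correct and follows exactly the paper's route: the paper's own proof is the one-line invocation of Theorem \ref{thm_Span} (for existence, applied with the apparatus rebuilt relative to $\hat\DixVec_\mu$) together with Corollary \ref{thm_ExtrComp_unique} (for uniqueness), which is precisely what you spell out. Your additional checks---that the order $n$ is intrinsic via \eqref{Span_def_order} and that Theorem \ref{thm_Span} treats the multipole as a black-box operator---are exactly the implicit bookkeeping the paper leaves to the reader.
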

\begin{proof}
Using theorem \ref{thm_Span} and corollary \ref{thm_ExtrComp_unique}.
\end{proof}

\subsection{Parallel squeezed tensors}
\label{ch_Sqz}

So far, moments have been discussed as components of the
distribution. There is also the intuitive notion of moments in terms
of integrals over space, multiplied by one or more coordinates.
 
In general relativity such objects are not covariant as they depend on both
the choice of the spatial hypersurface and the choice of the
coordinates on that hypersurface. The other issue is that
the tensor \TSchange{has} to be integrated, and this requires
transporting the tensors to the same point. 

One choice, given in \cite{gratus2020distributional} is to use an adapted coordinate system. This
is coordinate dependent. However by squeezing the tensor field, one
can produce a well defined distribution. This \TSchange{corresponds} to the
Ellis representation of the distribution.

Dixon \cite{DixonII} \TSchange{gives} a natural choice of moments. This uses the
\TSchange{Dixon geodesic hypersurfaces}, the geodesic coordinates and parallel
transport. Thus all the necessary structure is given to us uniquely,
once we have chosen the Dixon vector.

In this section we give a method of squeezing a tensor field, such
that the coefficients of the expansion are the Dixon components of the
distribution.

Let $U^\Vmu$ be a tensor density (of weight 1) on $\Mman$. From this we
can construct the squeezed tensor density
$U_\epsilon^\Vmu \in \Gamma TM$ given by
\begin{align}
U_\epsilon^\Vmu |_{P(\sigma,V)}
=
\epsilon^{-3}\,\Pi^\Vmu_\Vnu\big|^{P(\sigma,\epsilon^{-1}V}_{P(\sigma,V)}
\,U^\Vnu|_{P(\sigma,\epsilon^{-1}V)}
\label{Sqz_Dix_def_U_eps_Mman}
\DEfullstop
\end{align}
Let
\begin{align}
\xi^{\Vmu \rho_1\cdots \rho_k}(\sigma)
=
(-1)^k \int_{\Nspace(\sigma)}\, V^{\rho_1}\cdots V^{\rho_k}\,
\Pi^\Vmu_\Vnu\big|^{P(\sigma,V)}_{C(\sigma)}\
 U^\Vnu|_{P(\sigma,V)}\, d^3 V
\label{Prll_def_xi_N}
\DEfullstop
\end{align}
These components clearly satisfy (\ref{J_arb_J_form}) and (\ref{J_DixVec_Cons}).
In Dixon geodesics coordinates this becomes
\TSchange{
\begin{align}
\xi^{\Vmu \rho_1\cdots \rho_k}(\sigma)
=
(-1)^k \int_{\Sigma(\sigma)}\, z^{\rho_1}\cdots z^{\rho_k}\,
 U^\Vmu|_{(\sigma,z^1,z^2,z^3)}\, d^3 z
\label{Prll_def_xi_Sig}
\DEfullstop
\end{align}
}

\begin{lemma}
\begin{align}
\int_\Mman U_\epsilon^\Vmu\, \phi_\Vmu\, d^4x
&=
\sum_{k=0}^N
(-1)^k \frac{\epsilon^k}{k!}
\int_\Interval
\, \xi^{\Vmu \rho_1\cdots \rho_k}\,
\nabla^k_{\rho_1\cdots \rho_k}\phi_\Vmu 
\, d\sigma
+O(\epsilon^{N+1})
\label{Prll_expansion_Sig}
\DEfullstop
\end{align}
Thus
\TSchange{
\begin{align}
U^\Vmu_\epsilon
=
\sum_{k=0}^N
\frac{\epsilon^k}{k!}
\nabla^k_{\rho_1\cdots\rho_k}
\int_\Interval d\sigma \,
\xi^{\Vmu \rho_1\cdots \rho_k}
\,
\deltaFour (x-C)
+O(\epsilon^{N+1})
\label{Prll_expansion_delta}
\DEfullstop
\end{align}
}
\end{lemma}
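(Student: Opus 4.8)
The plan is to evaluate the left-hand side of \eqref{Prll_expansion_Sig} in the Dixon adapted coordinate system. There, by \eqref{DivF_DixCoord}, a point is $P(\sigma,z)$ with the coordinates $z^a$ playing the role of the vector components $V^a$, and since $U_\epsilon^\Vmu\phi_\Vmu$ is a scalar density of weight $1$ the integral is $\int_\Interval d\sigma\int d^3z\,(U_\epsilon^\Vmu\phi_\Vmu)|_{P(\sigma,z)}$. First I would insert the definition \eqref{Sqz_Dix_def_U_eps_Mman} and rescale $z=\epsilon W$, so that $\epsilon^{-1}z=W$ and $d^3z=\epsilon^3\,d^3W$. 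The explicit $\epsilon^{-3}$ in \eqref{Sqz_Dix_def_U_eps_Mman} cancels the Jacobian $\epsilon^3$, leaving $\int_\Interval d\sigma\int d^3W\,\Pi^\Vmu_\Vnu|^{P(\sigma,W)}_{P(\sigma,\epsilon W)}\,U^\Vnu|_{P(\sigma,W)}\,\phi_\Vmu|_{P(\sigma,\epsilon W)}$, in which $U^\Vnu|_{P(\sigma,W)}$ is now independent of $\epsilon$ while the test tensor is evaluated at $P(\sigma,\epsilon W)\to C(\sigma)$.

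Next I would Taylor expand $\phi_\Vmu|_{P(\sigma,\epsilon W)}$ along the geodesic $H_{\sigma,W}$ by \eqref{Tay_phi_sum} and \eqref{Tay_phik_res1}, obtaining $\phi_\Vmu|_{P(\sigma,\epsilon W)}=\sum_{k=0}^{N}\frac{\epsilon^k}{k!}\,\Pi^\Vrho_\Vmu|^{P(\sigma,\epsilon W)}_{C(\sigma)}(\nabla_W^k\phi_\Vrho|_{C(\sigma)})+O(\epsilon^{N+1})$. The key step is then that the three points $C(\sigma)$, $P(\sigma,\epsilon W)$, $P(\sigma,W)$ are the values at parameters $0,\epsilon,1$ of the single geodesic $H_{\sigma,W}$, so the transport of $U^\Vnu$ from $P(\sigma,W)$ to $P(\sigma,\epsilon W)$ and that of the covector $\nabla_W^k\phi$ from $C(\sigma)$ to $P(\sigma,\epsilon W)$ are contracted at the common point $P(\sigma,\epsilon W)$. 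Because parallel transport preserves the vector--covector pairing --- equivalently, by the group law \eqref{Prll_Pi_Group} together with the reparametrisation-independence \eqref{Prll_Pi_Ind} along the common geodesic --- the intermediate point drops out and the integrand reduces to $\Pi^\Vrho_\Vnu|^{P(\sigma,W)}_{C(\sigma)}\,U^\Vnu|_{P(\sigma,W)}\,(\nabla_W^k\phi_\Vrho|_{C(\sigma)})$.

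Expanding $\nabla_W^k\phi=W^{\rho_1}\cdots W^{\rho_k}\nabla^k_{\rho_1\cdots\rho_k}\phi$ through \eqref{Not_def_nabla_V_k}, the inner integral becomes $\int d^3W\,W^{\rho_1}\cdots W^{\rho_k}\,\Pi^\Vrho_\Vnu|^{P(\sigma,W)}_{C(\sigma)}\,U^\Vnu|_{P(\sigma,W)}$, which is exactly $(-1)^k\,\xi^{\Vrho\rho_1\cdots\rho_k}(\sigma)$ by the definition \eqref{Prll_def_xi_N}. Collecting powers of $\epsilon$ and restoring the passive $\sigma$-integration then gives \eqref{Prll_expansion_Sig}. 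For the second claim I would compare the result with the action formula \eqref{J_J_phi}: matching term by term identifies the order-$k$ component as $\epsilon^k\xi^{\Vmu\rho_1\cdots\rho_k}$, so the right-hand side of \eqref{Prll_expansion_Sig} is precisely the pairing of the distribution on the right of \eqref{Prll_expansion_delta} with $\phi_\Vmu$; as this holds for every test tensor, the two distributions agree modulo $O(\epsilon^{N+1})$, which is \eqref{Prll_expansion_delta}.

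I expect the main obstacle to be twofold. The delicate algebraic point is the collapse of the three parallel-transport factors to a single transport to $C(\sigma)$: a direct use of \eqref{Prll_Pi_Group} demands a shared intermediate point, so one must argue carefully from the collinearity of the three points on $H_{\sigma,W}$ and from \eqref{Prll_Pi_Ind} that the composition is genuinely path-independent. The analytic point is the control of the Taylor remainder after integration: one needs $U^\Vnu$ to have compact support inside the Dixon tube so that, after the rescaling, $W$ ranges over a fixed compact set as $\epsilon\to0$; the remainder then has $W$-dependence bounded by a power $|W|^{N+1}$ on that set and so stays uniformly $O(\epsilon^{N+1})$, allowing the $\sigma$- and $W$-integrations to preserve the order estimate.
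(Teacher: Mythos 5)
Your proof is correct and follows essentially the same route as the paper's: insert the definition \eqref{Sqz_Dix_def_U_eps_Mman}, pass to Dixon adapted coordinates, rescale by $\epsilon$, collapse the parallel-transport factors through $C(\sigma)$ using the group property \eqref{Prll_Pi_Group} along the common geodesic, Taylor expand via \eqref{Tay_phi_sum} and \eqref{Tay_phik_res1}, and identify the moments \eqref{Prll_def_xi_N}; the paper merely performs the transport decomposition through $C(\sigma)$ before the change of variables $V=\epsilon W$ rather than after, which is an immaterial reordering. Your concluding comparison with the action formula \eqref{J_J_phi} to pass from \eqref{Prll_expansion_Sig} to \eqref{Prll_expansion_delta} is exactly the intended reading of the paper's ``Thus''.
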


\begin{proof}
From (\ref{Tay_phik_res1}) and (\ref{Tay_phi_sum}) we have
\begin{align*}
\Pi^\Vmu_\Vnu\big|^{C(\sigma)}_{P(\sigma,s V)}\phi_\Vmu|_{P(\sigma,s V)}
&=
\sum_{k=0}^{N} 
\frac{s^k}{k!}\,
\nabla_{V}^k \phi_\Vnu|_{C(\sigma)}
+O(s^{N+1})
\DEfullstop
\end{align*}
In Dixon geodesic coordinates $(\sigma,\TSchange{z^1},z^2,z^3)$. We set the
vector field $V^\mu=\epsilon\,W^\mu$ and do a Taylor expansion in
$\epsilon$. 
\TSchange{
\begin{align*}
\int_\Mman & U_\epsilon^\Vmu\, \phi_\Vmu\, d\sigma\,d^3 z
\\&=
\int_\Interval d\sigma 
\int_{\Sigma(\sigma)} U_\epsilon^\Vmu\, \phi_\Vmu\, d^3 z
\\&=
\int_\Interval d\sigma 
\int_{\Nspace(\sigma)} 
U_\epsilon^\Vmu|_{P(\sigma,V)}\, \phi_\Vmu|_{P(\sigma,V)}\,
d^3V
\\&=
\epsilon^{-3}
\int_\Interval d\sigma 
\int_{\Nspace(\sigma)} \Pi^\Vmu_\Vnu\big|^{P(\sigma,\epsilon^{-1}V)}_{P(\sigma,V)}
\,U^\Vnu|_{P(\sigma,\epsilon^{-1}V)}\ 
\phi_\Vmu|_{P(\sigma,V)}\, d^3V
\\&=
\epsilon^{-3}
\int_\Interval d\sigma 
\int_{\Nspace(\sigma)} 
\Pi^\Vmu_\Vlambda\big|^{C(\sigma)}_{P(\sigma,V)}\ 
\Pi^\Vlambda_\Vnu\big|^{P(\sigma,\epsilon^{-1}V)}_{C(\sigma)}\ 
\,U^\Vnu|_{P(\sigma,\epsilon^{-1}V)}\ 
\phi_\Vmu|_{P(\sigma,V)}\, d^3V
\\&=
\int_\Interval d\sigma 
\int_{\Nspace(\sigma)} 
\Pi^\Vlambda_\Vnu\big|^{P(\sigma,W)}_{C(\sigma)}
\,U^\Vnu|_{P(\sigma,W)}\, 
\Pi^\Vmu_\Vlambda\big|^{C(\sigma)}_{P(\sigma,\epsilon W)}\ 
\phi_\Vmu|_{P(\sigma,\epsilon W)}\, d^3W
\\&=
\int_\Interval d\sigma 
\int_{\Nspace(\sigma)} 
\Pi^\Vlambda_\Vnu\big|^{P(\sigma,W)}_{C(\sigma)}
\,U^\Vnu|_{P(\sigma,W)}\, 
d^3W
\bigg(
\sum_{k=0}^N
\frac{\epsilon^k}{k!}
\nabla_W^k
\phi_\Vlambda|_{C(\sigma)} 
\bigg)
+O(\epsilon^{N+1})
\\&=
\sum_{k=0}^N
\frac{\epsilon^k}{k!}
\int_\Interval d\sigma 
\int_{\Nspace(\sigma)} 
\Pi^\Vlambda_\Vnu\big|^{P(\sigma,W)}_{C(\sigma)}
\,U^\Vnu|_{P(\sigma,W)}\, 
W^{\rho_1}\cdots W^{\rho_k}
\,
d^3W
\big(
\nabla^k_{\rho_1\cdots\rho_k}
\phi_\Vlambda|_{C(\sigma)} 
\big)
+O(\epsilon^{N+1})
\\&=
\sum_{k=0}^N
(-1)^k\frac{\epsilon^k}{k!}
\int_\Interval d\sigma \,
\xi^{\Vlambda \rho_1\cdots \rho_k}(\sigma)
\big(
\nabla^k_{\rho_1\cdots\rho_k}
\phi_\Vlambda|_{C(\sigma)} 
\big)
+O(\epsilon^{N+1})
\DEfullstop
\end{align*}
}
\end{proof}

Thus we have shown the link between the components of the multipoles
and the moments of a regular distribution.

\section{Dynamical equations in Dixon language}
\label{ch_DivF}

In this section we derive the key result of this article, namely the
dynamical equations of the stress-energy quadrupole, in the Dixon
representation. These follow from the \TSchange{divergenceless} condition
(\ref{SE_Div_Free}). As stated this has the advantage that they
are tensorial. The tensorial expression is given below in theorem
\ref{thm_DixQuad_Ten}. We first derive the equation in Dixon adapted
\TSchange{coordinates}.

\begin{theorem}
In Dixon adapted coordinates the \TSchange{divergenceless} condition
(\ref{SE_Div_Free}) \TSchange{corresponds} to the following dynamical equations for the
components
\begin{align}
\nabla_0\xi^{\mu0\iSb\iSc}
&=
-2\xi^{\mu(\iSb\iSc)}
\label{DivF_xi4dot_adapt}
\DEcomma
\\
\nabla_0\xi^{\mu0\iSa}
&=
- \xi^{\mu a}\,
+\tfrac12
\xi^{\mu0\iSb\iSc}\,
 R^{a}{}_{\iSb 0\iSc} 
+
\xi^{\rho0\iSa\iSc}\,R^{\mu}{}_{\rho0\iSc} 
+
\tfrac12\xi^{\rho c \iSb\iSa}\,R^\mu{}_{\rho c \iSb}
+
\tfrac16 \xi^{\mu d \iSb\iSc}\,R^{a}{}_{\iSb d \iSc} 
\label{DivF_xi3dot_adapt}
\DEcomma
\\
\nabla_0\xi^{\mu 0}
&=
\xi^{\rho0\iSb}\,
R^\mu{}_{\rho0\iSb}\,
+
\tfrac12
\xi^{\rho a \iSb}\,
R^\mu{}_{\rho a \iSb}\,
+\tfrac12\nabla_{0} \big( \xi^{\mu 0\iSb\iSc}\,
R^{0}{}_{\iSb 0\iSc} \big) \,
+\tfrac12
\xi^{\rho0\iSb\iSc}\,
(\nabla_\iSc R^\mu{}_{\rho0\iSb})
+\tfrac16\nabla_{0} \big( \xi^{\mu a \iSb\iSc}\,
R^{0}{}_{\iSb a \iSc} \big) \,
\notag
\\&\qquad
-\tfrac13
\xi^{\rho a \iSb\iSc}\,
(\nabla_\iSc R^\mu{}_{\rho a \iSb})
\label{DivF_xi2dot_adapt}
\DEcomma
\end{align}
together with the constraint
\begin{align}
\xi^{\mu(\iSa\iSb\iSc)} 
&= 0
\label{DivF_xi_constraint_adapt}
\DEfullstop
\end{align}

\end{theorem}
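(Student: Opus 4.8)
The plan is to run the divergenceless condition (\ref{SE_Div_Free}) through the uniqueness of Dixon components. Because $\nabla_\mu T^{\mu\nu}$ is again a distributional tensor density of rank $(1,0)$ supported on $C$, Theorem \ref{thm_Span} guarantees it is a Dixon multipole, and by Corollary \ref{thm_ExtrComp_unique} it vanishes if and only if every one of its canonical components does. So the whole task reduces to computing those components and setting them to zero. To expose them I would pair $\nabla_\mu T^{\mu\nu}$ with an arbitrary test covector $\phi_\nu$ and throw the divergence derivative onto the test tensor,
\begin{align*}
\int_\Mman (\nabla_\mu T^{\mu\nu})\,\phi_\nu\,d^4x
=
-\int_\Mman T^{\mu\nu}\,(\nabla_\mu\phi_\nu)\,d^4x
\DEfullstop
\end{align*}
which is legitimate for a weight-$1$ density. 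The right-hand side is just the established action (\ref{Intro_Tab_Dixon_Multi_action}) of the quadrupole $T^{\mu\nu}$ on the tensor $\nabla_\mu\phi_\nu$, and so becomes a sum of $\sigma$-integrals of $\xi^{\mu\nu}$, $\xi^{\mu\nu\rho}$ and $\xi^{\mu\nu\rho\sigma}$ contracted against $\nabla_\mu\phi_\nu$, $\nabla_\rho\nabla_\mu\phi_\nu$ and $\nabla_\rho\nabla_\sigma\nabla_\mu\phi_\nu$ evaluated on $C$.

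Working in adapted coordinates, where (\ref{Basic_Dix_Cons_quaf}) forces the derivative slots of every $\xi$ to be spatial and the Dixon vector has $N_0=1$, $N_a=0$, the core step is to rewrite these derivative strings on $\phi_\nu$ in the canonical rank-$(1,0)$ form, i.e. as symmetric spatial derivatives $\nabla^k$ of (\ref{J_def_nabla_k}) acting on $\phi_\nu$. I would split the contracted index $\mu$ into its worldline value $\mu=0$ and its spatial values $\mu=a$. A factor $\nabla_a\phi_\nu$ is a genuine spatial derivative that I fold into the symmetric string, commuting derivatives into symmetric order at the cost of curvature terms which drop the derivative count. A factor $\nabla_0\phi_\nu$ evaluated on $C$ is a derivative along the worldline; integrating by parts in $\sigma$ transfers it onto the component, producing the $\nabla_0\xi$ terms while leaving the spatial order unchanged.

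Collecting the result by spatial order $k=3,2,1,0$ and invoking uniqueness, each canonical component must vanish. The $k=3$ part is fed only by the purely spatial piece of the quadrupole term; using the symmetries (\ref{SE_Xi_Symm_quad}) its vanishing is the constraint (\ref{DivF_xi_constraint_adapt}). The $k=2$ part matches $\nabla_0\xi^{\mu0bc}$ against the spatial dipole contribution and, needing no commutator, is curvature-free, giving (\ref{DivF_xi4dot_adapt}). The $k=1$ and $k=0$ parts collect the Riemann terms generated when the worldline derivative is commuted past the spatial ones, yielding (\ref{DivF_xi3dot_adapt}) and (\ref{DivF_xi2dot_adapt}); the latter also acquires $\nabla R$ terms from commuting a derivative past a curvature factor.

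The conceptual steps — the pairing and integration by parts, the $0$/spatial split, and the appeal to uniqueness — are immediate from the results already established. The real work, and the place most likely to go wrong, is the curvature bookkeeping: commuting the worldline derivative repeatedly past the spatial ones, keeping straight which index combinations are spatial versus along $\Cdot$, and extracting the exact numerical coefficients of each symmetrised Riemann and $\nabla R$ term so that the fractions $\tfrac12$, $\tfrac16$ and $-\tfrac13$ in (\ref{DivF_xi3dot_adapt})--(\ref{DivF_xi2dot_adapt}) come out correctly.
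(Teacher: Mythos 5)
Your proposal is correct and takes essentially the same route as the paper: treat $\nabla_\mu T^{\mu\nu}$ as a Dixon multipole over $C$, reduce it in adapted coordinates to canonical form (splitting the contracted index into $0$ and spatial values, commuting covariant derivatives at the cost of curvature terms that drop to lower order, and integrating $\nabla_0$ by parts in $\sigma$ onto the components), and then invoke uniqueness of the Dixon representation to set each spatial order $k=3,2,1,0$ to zero. The only organizational difference is that the paper isolates the orders by evaluating the vanishing distribution on the test tensors $\nabla_R^r\phi_\mu$ via Lemma \ref{lm_Split_J_on_Nabla} and Corollary \ref{lm_J_0}, rather than citing Theorem \ref{thm_Span} and Corollary \ref{thm_ExtrComp_unique} directly, and the explicit computation it carries out is precisely the curvature bookkeeping you defer.
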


\begin{proof}
For this proof, all integrals are implicitly over $\Interval$.
Let
\begin{align*}
F^\mu = 
\nabla_\sigma T^{\sigma\mu} 
=
F_{(1)}^\mu + 
F_{(2)}^\mu + 
F_{(3)}^\mu 
\DEcomma
\end{align*}
where
\begin{align}
F_{(1)}^\mu
=
\nabla_\nu \int \xi^{\mu\nu} \delta(x-C) d\sigma
,\quad
F_{(2)}^\mu
=
\nabla_\nu \nabla_\rho \int \xi^{\mu\nu\rho} \delta(x-C) d\sigma
,\quad
F_{(3)}^\mu
=
\tfrac12\nabla_\nu \nabla^2_{\rho\sigma}
\int \xi^{\mu\nu\rho\sigma} 
\delta(x-C) d\sigma
\label{DivF_lm_def_f123}
\DEfullstop
\end{align}
Manipulating $F^\mu_{(1)}$, $F^\mu_{(2)}$ and $F^\mu_{(3)}$ in turn.
\begin{align*}
F^\mu_{(1)}[\phi_\mu]
& =
-
\int \xi^{\mu\nu}\,
\nabla_\nu \,
\phi_\mu
\,d\sigma 
= -\int \xi^{\mu a}\,
\nabla_a \,
\phi_\mu \,d\sigma 
-
\int \xi^{\mu 0}\,
\nabla_0 \phi_\mu
\,d\sigma
\\&= -\int \xi^{\mu a}\,
\nabla_a \phi_\mu \,d\sigma 
+ \int (\nabla_0\xi^{\mu 0})\,
\phi_\mu
\,d\sigma
\DEcomma
\end{align*}
which using (\ref{Split_J_on_Nabla}) in lemma \ref{lm_Split_J_on_Nabla} gives
\begin{align}
F^\mu_{(1)}[\nabla^3_R \phi_\mu]
&=
F^\mu_{(1)}[\nabla^2_R \phi_\mu]
=
0
,
\label{DivF_lm_F1_Del2}
\\
F^\mu_{(1)}[\nabla_R \phi_\mu]
&=
-
\int \xi^{\mu a}\,
\nabla_a\phi_\mu \,d\sigma 
,
\label{DivF_lm_F1_Del1}
\\
\qquadand
F^\mu_{(1)}[\phi_\mu]
&=
F^\mu_{(1)}[\nabla_R \phi_\mu]
+
 \int (\nabla_0\xi^{\mu 0})\,
\phi_\mu
\,d\sigma
\label{DivF_lm_F1_Del0}
\DEfullstop
\end{align}

For $F^\mu_{(2)}[\phi_\mu]$ we have
\begin{align*}
F_{(2)}^\mu\big[\phi_\mu\big]
&= 
\int \xi^{\mu\nu\rho}\,
\nabla_\rho\nabla_\nu\,
\phi_\mu
\,d\sigma
=
\int \xi^{\mu0\iSb}\,
\nabla_\iSb\nabla_0\,
\phi_\mu
\,d\sigma
+
\int \xi^{\mu\iSa\iSb}\,
\nabla_\iSb\nabla_\iSa \,
\phi_\mu
\,d\sigma
\\
& = \int \xi^{\mu0\iSb}\,
\nabla_0\nabla_\iSb\,
\phi_\mu
\,d\sigma
-
\int \xi^{\mu0\iSb}\,
R^\rho{}_{\mu0\iSb}\,
\phi_\rho
\,d\sigma 
+\int \xi^{\mu (a \iSb)}\,
\nabla_a \nabla_\iSb\,
\phi_\mu
\,d\sigma
-\tfrac12
\int \xi^{\mu a \iSb}\,
R^\rho{}_{\mu a \iSb}\,
\phi_\rho
\,d\sigma 
\\
& = 
-\int (\nabla_0\xi^{\mu0\iSb})
\nabla_\iSb\,
\phi_\mu
\,d\sigma
-
\int \xi^{\mu0\iSb}\,
R^\rho{}_{\mu0\iSb}\,
\phi_\rho
\,d\sigma 
+\int \xi^{\mu (a \iSb)}\,
\nabla_{ab}
\phi_\mu
\,d\sigma
-\tfrac12
\int \xi^{\mu a \iSb}\,
R^\rho{}_{\mu a \iSb}\,
\phi_\rho
\,d\sigma 
\DEfullstop
\end{align*}
Using equation (\ref{Split_J_on_Nabla}) we can split $F^\mu_{(2)}[\phi_\mu]$
to give
\begin{align}
F_{(2)}^\mu\big[\nabla^3_R\phi_\mu\big] &= 0 
\label{DivF_lm_F2_Del3}
\\
F_{(2)}^\mu\big[\nabla^2_R\phi_\mu\big] 
&=
2 \int \xi^{\mu (a \iSb)}\,
\nabla_{ab}
\phi_\mu
\,d\sigma
\label{DivF_lm_F2_Del2}
\\
F_{(2)}^\mu\big[\nabla_R\phi_\mu\big] 
&= 
\tfrac12 F_{(2)}^\mu\big[\nabla^2_R\phi_\mu\big] 
-
\int (\nabla_0\xi^{\mu0\iSa})
\nabla_\iSa\,
\phi_\mu
\,d\sigma
\label{DivF_lm_F2_Del1}
\\
F_{(2)}^\mu\big[\phi_\mu\big] 
&=
F_{(2)}^\mu\big[\nabla_R\phi_\mu\big]
-
\int \xi^{\rho0\iSb}\,
R^\mu{}_{\rho0\iSb}\,
\phi_\mu
\,d\sigma 
-
\tfrac12
\int \xi^{\rho a \iSb}\,
R^\mu{}_{\rho a \iSb}\,
\phi_\mu
\,d\sigma 
\label{DivF_lm_F2_Del0}
\DEfullstop
\end{align}

Now consider $F^\mu_{(3)}[\phi_\mu]$.
\begin{align}
F_{(3)}^\mu\big[\phi_\mu\big]
&= -\frac{1}{2}
\int \xi^{\mu\nu\rho\sigma}\,
\nabla_\sigma\nabla_\rho\nabla_\nu\,
\phi_\mu
\,d\sigma
= -\frac{1}{2}
\int \xi^{\mu\iSa\iSb\iSc}\,
\nabla_\iSc\nabla_\iSb\nabla_\iSa\,
\phi_\mu
\,d\sigma
- \frac{1}{2}
\int \xi^{\mu0\iSb\iSc}\,
\nabla_\iSc\nabla_\iSb\nabla_0\,
\phi_\mu
\,d\sigma
\label{DivF_lm_F3_step1}
\DEfullstop
\end{align}
The second term on the right hand side of (\ref{DivF_lm_F3_step1}) gives
\begin{align*}
&\int \xi^{\mu0\iSb\iSc}\,
\nabla_\iSc\nabla_\iSb\nabla_0\,
\phi_\mu
\,d\sigma
\\&=
\int \xi^{\mu0\iSb\iSc}\,
\nabla_\iSc\nabla_0\nabla_\iSb\,
\phi_\mu
\,d\sigma
-
\int \xi^{\mu0\iSb\iSc}\,
\nabla_\iSc (R^\rho{}_{\mu0\iSb}\,
\phi_\rho)
\,d\sigma \\
&=
\int \xi^{\mu0\iSb\iSc}\,
\nabla_0\nabla_\iSc\nabla_\iSb\,
\phi_\mu
\,d\sigma
-
\int \xi^{\mu0\iSb\iSc}\,
\Big(
R^\rho{}_{\iSb 0\iSc} \nabla_\rho\,
\phi_\mu
+
R^\rho{}_{\mu0\iSc} \nabla_\iSb\,
\phi_\rho
+
\nabla_\iSc R^\rho{}_{\mu0\iSb}\,
\phi_\rho 
+
R^\rho{}_{\mu0\iSb}\,
\nabla_\iSc \phi_\rho
\Big)\,d\sigma 
\\&=
\int \xi^{\mu0\iSb\iSc}\,
\nabla_0\nabla_\iSc\nabla_\iSb\,
\phi_\mu
\,d\sigma
-
\int \xi^{\mu0\iSb\iSc}\,
\Big(
R^{0}{}_{\iSb 0\iSc} \nabla_{0}\,
\phi_\mu 
+ R^{d}{}_{\iSb 0\iSc} \nabla_{d}\,
\phi_\mu
+
R^{\rho}{}_{\mu0\iSc} \nabla_\iSb\,
\phi_\rho 
\\&\qquad\qquad
\qquad\qquad
\qquad\qquad
\qquad\qquad
\qquad\qquad\qquad\qquad
+
\nabla_\iSc R^\rho{}_{\mu0\iSb}\,
\phi_\rho 
 +
R^\rho{}_{\mu0\iSb}\,
\nabla_\iSc \phi_\rho
\Big)\,d\sigma 
\\&=
-\int \nabla_0(\xi^{\mu0\iSb\iSc})\,
\nabla^2_{\iSc\iSb}\,
\phi_\mu
\,d\sigma
+
\int \nabla_{0} \Big( \xi^{\mu0\iSb\iSc}\,
R^{0}{}_{\iSb 0\iSc} \Big) \,
\phi_\mu \, d\sigma
\\&\qquad
-
\int \xi^{\mu0\iSb\iSc}\,
\Big( R^{d}{}_{\iSb 0\iSc} \nabla_{d}\,
\phi_\mu 
+
R^{\rho}{}_{\mu0\iSc} \nabla_\iSb\,
\phi_\rho +
\nabla_\iSc R^\rho{}_{\mu0\iSb}\,
\phi_\rho 
 +
R^\rho{}_{\mu0\iSb}\,
\nabla_\iSc \phi_\rho
\Big)\,d\sigma
\\&=
-\int \nabla_0(\xi^{\mu 0\iSb\iSc})\,
\nabla^2_{\iSc\iSb}\,
\phi_\mu
\,d\sigma
+
\int 
\Big(
\nabla_{0} \big( \xi^{\mu 0\iSb\iSc}\,
R^{0}{}_{\iSb 0\iSc} \big) \,
+
\xi^{\rho0\iSb\iSc}\,
(\nabla_\iSc R^\mu{}_{\rho0\iSb})
\Big)
\phi_\mu \, d\sigma
\\&\qquad
-
\int 
\Big( 
\xi^{\mu 0\iSb\iSc}\,
R^{a}{}_{\iSb 0\iSc} 
+
2 \xi^{\rho0\iSa\iSc}\, R^{\mu}{}_{\rho0\iSc} 
\Big)\nabla_\iSa\phi_\mu 
\,d\sigma
\DEfullstop
\end{align*}
For the first term on the right hand side of (\ref{DivF_lm_F3_step1}),
it is necessary to symmetrise \TSchange{the $\iSa\iSb\iSc$ of
$\xi^{\iSa\iSb\iSc}$}. 
\begin{align*}
\int \xi^{\mu a \iSb\iSc}\,
\nabla_\iSc\nabla_\iSb\nabla_a\,
\phi_\mu
\,d\sigma
&=
\int \xi^{\mu a \iSb\iSc}\,
\nabla_\iSc\nabla_a\nabla_\iSb\,
\phi_\mu
\,d\sigma
-
\int \xi^{\mu a \iSb\iSc}\,
\nabla_\iSc (R^\rho{}_{\mu a \iSb}\,
\phi_\rho)
\,d\sigma 
\\&=
\int \xi^{\mu a \iSb\iSc}\,
\nabla_\iSc\nabla_a\nabla_\iSb\,
\phi_\mu
\,d\sigma
-
\int \xi^{\mu a \iSb\iSc}\,
\Big(
(\nabla_\iSc R^\rho{}_{\mu a \iSb})
\phi_\rho 
 +
R^\rho{}_{\mu a \iSb}\,
\nabla_\iSc \phi_\rho
\Big)\,d\sigma 
\\&=
\int \xi^{\mu a \iSb\iSc}\,
\nabla_\iSc\nabla_a\nabla_\iSb\,
\phi_\mu
\,d\sigma
-
\int \xi^{\rho c \iSb\iSa}\,
\Big(
(\nabla_\iSa R^\mu{}_{\rho c \iSb})
\phi_\mu
 +
R^\mu{}_{\rho c \iSb}\,
\nabla_\iSa \phi_\mu
\Big)\,d\sigma 
\DEfullstop
\end{align*}
Continuing
\begin{align*}
&\int \xi^{\mu a \iSb\iSc}\,
\nabla_\iSc\nabla_\iSb\nabla_a\,
\phi_\mu
\,d\sigma
\\&=
\int \xi^{\mu a \iSb\iSc}\,
\nabla_a \nabla_\iSc\nabla_\iSb\,
\phi_\mu
\,d\sigma
-
\int \xi^{\mu a \iSb\iSc}\,
\Big(
R^\rho{}_{\iSb a \iSc} \nabla_\rho\,
\phi_\mu
+
R^\rho{}_{\mu a \iSc} \nabla_\iSb\,
\phi_\rho
+
(\nabla_\iSc R^\rho{}_{\mu a \iSb})
\phi_\rho 
 +
R^\rho{}_{\mu a \iSb}\,
\nabla_\iSc \phi_\rho
\Big)\,d\sigma 
\\&=
\int \xi^{\mu a \iSb\iSc}\,
\nabla_a \nabla_\iSc\nabla_\iSb\,
\phi_\mu
\,d\sigma
-
\int \xi^{\mu a \iSb\iSc}\,
\Big(
R^\rho{}_{\iSb a \iSc} \nabla_\rho\,
\phi_\mu
+
2R^\rho{}_{\mu a \iSc} \nabla_\iSb\,
\phi_\rho
+
(\nabla_\iSc R^\rho{}_{\mu a \iSb})
\phi_\rho 
\Big)\,d\sigma 
\\&=
\int \xi^{\mu a \iSb\iSc}\,
\nabla_a \nabla_\iSc\nabla_\iSb\,
\phi_\mu
\,d\sigma
-
\int \xi^{\mu a \iSb\iSc}\,
\Big(
R^{0}{}_{\iSb a \iSc} \nabla_{0}\,
\phi_\mu 
+ R^{d}{}_{\iSb a \iSc} \nabla_{d}\,
\phi_\mu
+
2 R^{\rho}{}_{\mu a \iSc} \nabla_\iSb\,
\phi_\rho 
 +
(\nabla_\iSc R^\rho{}_{\mu a \iSb})
\phi_\rho 
\Big)\,d\sigma 
\\&=
\int \xi^{\mu a \iSb\iSc}\,
\nabla_a \nabla_\iSc\nabla_\iSb\,
\phi_\mu
\,d\sigma
+
\int \nabla_{0} \Big( \xi^{\mu a \iSb\iSc}\,
R^{0}{}_{\iSb a \iSc} \Big) \,
\phi_\mu \, d\sigma
\\&\quad
-
\int \xi^{\mu a \iSb\iSc}\,
\Big( R^{d}{}_{\iSb a \iSc} \nabla_{d}\,
\phi_\mu 
 +
2 R^{\rho}{}_{\mu a \iSc} \nabla_\iSb\,
\phi_\rho 
+
(\nabla_\iSc R^\rho{}_{\mu a \iSb})
\phi_\rho 
\Big)\,d\sigma
\\&=
\int \xi^{\mu a \iSb\iSc}\,
\nabla_a \nabla_\iSc\nabla_\iSb\,
\phi_\mu
\,d\sigma
+
\int 
\Big(
-\xi^{\mu d \iSb\iSc}\,
R^{a}{}_{\iSb d \iSc} 
-
2 \xi^{\rho b \iSa\iSc}\,
R^{\mu}{}_{\rho b \iSc} 
\Big)
\nabla_\iSa\,\phi_\mu 
\,d\sigma
\\&\qquad
+
\int\Big(
\nabla_{0} \big( \xi^{\mu a \iSb\iSc}\,
R^{0}{}_{\iSb a \iSc} \big) \,
-
\xi^{\rho a \iSb\iSc}\,
(\nabla_\iSc R^\mu{}_{\rho a \iSb})
\Big)
\phi_\mu \,d\sigma
\DEfullstop
\end{align*}
Hence
\begin{align*}
3  \int \xi^{\mu a \iSb\iSc}\,
&\nabla_\iSc\nabla_\iSb\nabla_a\,
\phi_\mu
\,d\sigma
\\&=
\int \xi^{\mu a \iSb\iSc}\,
\nabla_\iSc\nabla_\iSb\nabla_a\,
\phi_\mu
\,d\sigma
\\&\qquad
+
\int \xi^{\mu a \iSb\iSc}\,
\nabla_\iSc\nabla_a\nabla_\iSb\,
\phi_\mu
\,d\sigma
-
\int \xi^{\rho c \iSb\iSa}\,
\Big(
(\nabla_\iSa R^\mu{}_{\rho c \iSb})
\phi_\mu
 +
R^\mu{}_{\rho c \iSb}\,
\nabla_\iSa \phi_\mu
\Big)\,d\sigma 
\\&\qquad+
\int \xi^{\mu a \iSb\iSc}\,
\nabla_a \nabla_\iSc\nabla_\iSb\,
\phi_\mu
\,d\sigma
+
\int 
\Big(
-\xi^{\mu d \iSb\iSc}\,
R^{a}{}_{\iSb d \iSc} 
-
2 \xi^{\rho b \iSa\iSc}\,
R^{\mu}{}_{\rho b \iSc} 
\Big)
\nabla_\iSa\,\phi_\mu 
\,d\sigma
\\&\qquad
+
\int\Big(
\nabla_{0} \big( \xi^{\mu a \iSb\iSc}\,
R^{0}{}_{\iSb a \iSc} \big) \,
-
\xi^{\rho a \iSb\iSc}\,
(\nabla_\iSc R^\mu{}_{\rho a \iSb})
\Big)
\phi_\mu \,d\sigma
\\&=
3 \int \xi^{\mu (a \iSb\iSc)}\,
\nabla_{(a b c)}
\phi_\mu
\,d\sigma
+
\int 
\Big(
-
\xi^{\rho c \iSb\iSa}\,
R^\mu{}_{\rho c \iSb}
-\xi^{\mu d \iSb\iSc}\,
R^{a}{}_{\iSb d \iSc} 
-
2 \xi^{\rho b \iSa\iSc}\,
R^{\mu}{}_{\rho b \iSc} 
\Big)
\nabla_\iSa\,\phi_\mu 
\,d\sigma
\\&\qquad
+
\int\Big(
-\xi^{\rho c \iSb\iSa}\, (\nabla_\iSa R^\mu{}_{\rho c \iSb})
+\nabla_{0} \big( \xi^{\mu a \iSb\iSc}\,
R^{0}{}_{\iSb a \iSc} \big) \,
-
\xi^{\rho a \iSb\iSc}\,
(\nabla_\iSc R^\mu{}_{\rho a \iSb})
\Big)
\phi_\mu \,d\sigma
\\&=
3 \int \xi^{\mu (a \iSb\iSc)}\,
\nabla_{(a b c)}
\phi_\mu
\,d\sigma
+
\int 
\Big(
-\xi^{\mu d \iSb\iSc}\,
R^{a}{}_{\iSb d \iSc} 
-
3 \xi^{\rho b \iSa\iSc}\,
R^{\mu}{}_{\rho b \iSc} 
\Big)
\nabla_\iSa\,\phi_\mu 
\,d\sigma
\\&\qquad
+
\int\Big(
\nabla_{0} \big( \xi^{\mu a \iSb\iSc}\,
R^{0}{}_{\iSb a \iSc} \big) \,
-
2\xi^{\rho a \iSb\iSc}\,
(\nabla_\iSc R^\mu{}_{\rho a \iSb})
\Big)
\phi_\mu \,d\sigma
\DEfullstop
\end{align*}
Hence (\ref{DivF_lm_F3_step1}) becomes
\begin{align*}
F_{(3)}^\mu\big[\phi_\mu\big]
&=
\tfrac12\int \nabla_0(\xi^{\mu 0\iSb\iSc})\,
\nabla^2_{\iSc\iSb}\,
\phi_\mu
\,d\sigma
-\tfrac12
\int 
\Big(
\nabla_{0} \big( \xi^{\mu 0\iSb\iSc}\,
R^{0}{}_{\iSb 0\iSc} \big) \,
+
\xi^{\rho0\iSb\iSc}\,
(\nabla_\iSc R^\mu{}_{\rho0\iSb})
\Big)
\phi_\mu \, d\sigma
\\&\qquad
+\tfrac12
\int 
\Big( 
\xi^{\mu 0\iSb\iSc}\,
R^{a}{}_{\iSb 0\iSc} 
+
2 \xi^{\rho0\iSa\iSc}\, R^{\mu}{}_{\rho0\iSc} 
\Big)\nabla_\iSa\phi_\mu 
\,d\sigma
\\&\qquad
-\tfrac12 \int \xi^{\mu (a \iSb\iSc)}\,
\nabla^3_{a b c}
\phi_\mu
\,d\sigma
+\tfrac16 
\int 
\Big(
\xi^{\mu d \iSb\iSc}\,
R^{a}{}_{\iSb d \iSc} 
+
3 \xi^{\rho b \iSa\iSc}\,
R^{\mu}{}_{\rho b \iSc} 
\Big)
\nabla_\iSa\,\phi_\mu 
\,d\sigma
\\&\qquad
+\tfrac16
\int\Big(
-\nabla_{0} \big( \xi^{\mu a \iSb\iSc}\,
R^{0}{}_{\iSb a \iSc} \big) \,
+
2\xi^{\rho a \iSb\iSc}\,
(\nabla_\iSc R^\mu{}_{\rho a \iSb})
\Big)
\phi_\mu \,d\sigma
\\&=
-\tfrac12 \int \xi^{\mu (a \iSb\iSc)}\,
\nabla^3_{a b c}
\phi_\mu
\,d\sigma
+
\tfrac12\int \nabla_0(\xi^{\mu 0\iSb\iSc})\,
\nabla^2_{\iSc\iSb}\,
\phi_\mu
\,d\sigma
\\&\qquad
+
\int 
\Big( 
\tfrac12
\xi^{\mu 0\iSb\iSc}\,
R^{a}{}_{\iSb 0\iSc} 
+
 \xi^{\rho0\iSa\iSc}\, R^{\mu}{}_{\rho0\iSc} 
R^\mu{}_{\rho c \iSb}
+\tfrac16\xi^{\mu d \iSb\iSc}\,
R^{a}{}_{\iSb d \iSc} 
+
\tfrac12 \xi^{\rho b \iSa\iSc}\,
R^{\mu}{}_{\rho b \iSc} 
\Big)
\nabla_\iSa\,\phi_\mu 
\,d\sigma
\\&\qquad
+
\int 
\Big(
-\tfrac12\nabla_{0} \big( \xi^{\mu 0\iSb\iSc}\,
R^{0}{}_{\iSb 0\iSc} \big) \,
-\tfrac12
\xi^{\rho0\iSb\iSc}\,
(\nabla_\iSc R^\mu{}_{\rho0\iSb})
-\tfrac16\nabla_{0} \big( \xi^{\mu a \iSb\iSc}\,
R^{0}{}_{\iSb a \iSc} \big) \,
+\tfrac13
\xi^{\rho a \iSb\iSc}\,
(\nabla_\iSc R^\mu{}_{\rho a \iSb})
\Big)
\phi_\mu \,d\sigma
\DEfullstop
\end{align*}
Splitting $F^\mu_{(3)}[\phi_\mu]$ gives
\begin{align}
F^\mu_{(3)}[\nabla^3_R\phi_\mu]
&=
-3 \int \xi^{\mu (a \iSb\iSc)}\,
\nabla^3_{a b c}
\phi_\mu
\,d\sigma
\label{DivF_lm_F3_Del3}
\DEcomma
\\
F^\mu_{(3)}[\nabla^2_R\phi_\mu]
&=
\tfrac13 F^\mu_{(3)}[\nabla^3_R \phi_\mu]
+
\int \nabla_0(\xi^{\mu 0\iSb\iSc})\,
\nabla^2_{\iSc\iSb}\,
\phi_\mu
\,d\sigma
\label{DivF_lm_F3_Del2}
\DEcomma
\\
F^\mu_{(3)}[\nabla_R\phi_\mu]
&=
\tfrac12 F^\mu_{(3)}[\nabla^2_R \phi_\mu]
+
\int 
\Big( 
\tfrac12
\xi^{\mu 0\iSb\iSc}\,
R^{a}{}_{\iSb 0\iSc} 
+
 \xi^{\rho0\iSa\iSc}\, R^{\mu}{}_{\rho0\iSc} 
+\tfrac16\xi^{\mu d \iSb\iSc}\,
R^{a}{}_{\iSb d \iSc} 
+
\tfrac12 \xi^{\rho b \iSa\iSc}\,
R^{\mu}{}_{\rho b \iSc} 
\Big)
\nabla_\iSa\,\phi_\mu 
\,d\sigma 
\label{DivF_lm_F3_Del1}
\DEcomma
\\
F^\mu_{(3)}[\phi_\mu]
&=
F^\mu_{(3)}[\nabla_R \phi_\mu]
+
\int 
\Big(
-\tfrac12\nabla_{0} \big( \xi^{\mu 0\iSb\iSc}\,
R^{0}{}_{\iSb 0\iSc} \big) \,
-\tfrac12
\xi^{\rho0\iSb\iSc}\,
(\nabla_\iSc R^\mu{}_{\rho0\iSb})
-\tfrac16\nabla_{0} \big( \xi^{\mu a \iSb\iSc}\,
R^{0}{}_{\iSb a \iSc} \big) \,
\notag
\\&\qquad\qquad\qquad\qquad
+\tfrac13
\xi^{\rho a \iSb\iSc}\,
(\nabla_\iSc R^\mu{}_{\rho a \iSb})
\Big)
\phi_\mu \,d\sigma
\label{DivF_lm_F3_Del0}
\DEfullstop
\end{align}


Since $F^\mu=0$, then from corollary \ref{lm_J_0}, 
\begin{align}
0&=F^\mu[\nabla^3_R\phi_\mu]
=F^\mu_{(3)}[\nabla^3_R\phi_\mu]
\DEcomma
\label{DivF_lm_F3}
\\
0&=
F^\mu[\nabla^2_R\phi_\mu]
=F^\mu_{(3)}[\nabla^2_R\phi_\mu] 
+ F^\mu_{(2)}[\nabla^2_R\phi_\mu]
\DEcomma
\label{DivF_lm_F2}
\\
0&=F^\mu[\nabla_R\phi_\mu]=
F^\mu_{(3)}[\nabla_R\phi_\mu]+
F^\mu_{(2)}[\nabla_R\phi_\mu]+
F^\mu_{(1)}[\nabla_R\phi_\mu]
\DEcomma
\label{DivF_lm_F1}
\\
0&=F^\mu[\phi_\mu]=
F^\mu_{(3)}[\phi_\mu]+
F^\mu_{(2)}[\phi_\mu]+
F^\mu_{(1)}[\phi_\mu]
\label{DivF_lm_F0}
\DEfullstop
\end{align}
From (\ref{DivF_lm_F3}) and (\ref{DivF_lm_F3_Del3}) we get
(\ref{DivF_xi_constraint_adapt}).

From (\ref{DivF_lm_F2}), (\ref{DivF_lm_F2_Del2}) and
(\ref{DivF_lm_F3_Del2}) we can calculate $\nabla_0\xi^{\mu 0 \iSa\iSb}$
\begin{align*}
0
=F^\mu_{(3)}[\nabla^2_R\phi_\mu] 
+ F^\mu_{(2)}[\nabla^2_R\phi_\mu]
&=
2 \int \xi^{\mu (a \iSb)}\,
\nabla_{ab}
\phi_\mu
\,d\sigma
+
\int \nabla_0(\xi^{\mu0\iSa\iSb})\,
\nabla^2_{\iSa\iSb}\,
\phi_\mu
\,d\sigma
\DEcomma
\end{align*}
which gives (\ref{DivF_xi4dot_adapt}).

From (\ref{DivF_lm_F1}), (\ref{DivF_lm_F1_Del1}),
(\ref{DivF_lm_F2_Del1}) and (\ref{DivF_lm_F3_Del1}) we can calculate
$\nabla_0\xi^{\mu 0 \iSa}$ 
\begin{align*}
0&=F^\mu[\nabla_R\phi_\mu]=
F^\mu_{(3)}[\nabla_R\phi_\mu]+
F^\mu_{(2)}[\nabla_R\phi_\mu]+
F^\mu_{(1)}[\nabla_R\phi_\mu]
\\&=
-
\int \xi^{\mu a}\,
\nabla_a\phi_\mu \,d\sigma 
+
\tfrac12 F_{(2)}^\mu\big[\nabla^2_R\phi_\mu\big] 
-
\int (\nabla_0\xi^{\mu0\iSa})
\nabla_\iSa\,
\phi_\mu
\,d\sigma
\\&\quad+
\tfrac12 F^\mu_{(3)}[\nabla^2_R \phi_\mu]
+
\int 
\Big( 
\tfrac12
\xi^{\mu 0\iSb\iSc}\,
R^{a}{}_{\iSb 0\iSc} 
+
 \xi^{\rho0\iSa\iSc}\, R^{\mu}{}_{\rho0\iSc} 
+
+\tfrac16\xi^{\mu d \iSb\iSc}\,
R^{a}{}_{\iSb d \iSc} 
\notag
+
\tfrac12 \xi^{\rho b \iSa\iSc}\,
R^{\mu}{}_{\rho b \iSc} 
\Big)
\nabla_\iSa\,\phi_\mu 
\,d\sigma 
\\&=
\int \Big(
-
\xi^{\mu a}\,
-
(\nabla_0\xi^{\mu0\iSa})
+
\tfrac12
\xi^{\mu 0\iSb\iSc}\,
R^{a}{}_{\iSb 0\iSc} 
+
 \xi^{\rho0\iSa\iSc}\, R^{\mu}{}_{\rho0\iSc} 
+\tfrac16\xi^{\mu d \iSb\iSc}\,
R^{a}{}_{\iSb d \iSc} 
+
\tfrac12 \xi^{\rho b \iSa\iSc}\,
R^{\mu}{}_{\rho b \iSc} 
\Big)
\nabla_\iSa\,\phi_\mu 
\,d\sigma 
\DEcomma
\end{align*}
which gives (\ref{DivF_xi3dot_adapt}).

From (\ref{DivF_lm_F0}), (\ref{DivF_lm_F1_Del0}),
(\ref{DivF_lm_F2_Del0}) and (\ref{DivF_lm_F3_Del0}) we have
\begin{align*}
0&=F^\mu[\phi_\mu]=
F^\mu_{(3)}[\phi_\mu]+
F^\mu_{(2)}[\phi_\mu]+
F^\mu_{(1)}[\phi_\mu]
\\&=
F^\mu_{(1)}[\nabla_R \phi_\mu]
+
 \int (\nabla_0\xi^{\mu 0})\,
\phi_\mu
\,d\sigma
\\&\quad+
F_{(2)}^\mu\big[\nabla_R\phi_\mu\big]
-
\int \xi^{\rho0\iSb}\,
R^\mu{}_{\rho0\iSb}\,
\phi_\mu
\,d\sigma 
-
\tfrac12
\int \xi^{\rho a \iSb}\,
R^\mu{}_{\rho a \iSb}\,
\phi_\mu
\,d\sigma 
\\&\quad+
F^\mu_{(3)}[\nabla_R \phi_\mu]
+
\int 
\Big(
-\tfrac12\nabla_{0} \big( \xi^{\mu 0\iSb\iSc}\,
R^{0}{}_{\iSb 0\iSc} \big) \,
-\tfrac12
\xi^{\rho0\iSb\iSc}\,
(\nabla_\iSc R^\mu{}_{\rho0\iSb})
-\tfrac16\nabla_{0} \big( \xi^{\mu a \iSb\iSc}\,
R^{0}{}_{\iSb a \iSc} \big) \,
\notag
\\&\qquad\qquad\qquad\qquad
+\tfrac13
\xi^{\rho a \iSb\iSc}\,
(\nabla_\iSc R^\mu{}_{\rho a \iSb})
\Big)
\phi_\mu \,d\sigma
\\&=
\int \Big(
(\nabla_0\xi^{\mu 0})\,
-
\xi^{\rho0\iSb}\,
R^\mu{}_{\rho0\iSb}\,
-
\tfrac12
\xi^{\rho a \iSb}\,
R^\mu{}_{\rho a \iSb}\,
-\tfrac12\nabla_{0} \big( \xi^{\mu 0\iSb\iSc}\,
R^{0}{}_{\iSb 0\iSc} \big) \,
-\tfrac12
\xi^{\rho0\iSb\iSc}\,
(\nabla_\iSc R^\mu{}_{\rho0\iSb})
-\tfrac16\nabla_{0} \big( \xi^{\mu a \iSb\iSc}\,
R^{0}{}_{\iSb a \iSc} \big) \,
\notag
\\&\qquad\qquad\qquad\qquad
+\tfrac13
\xi^{\rho a \iSb\iSc}\,
(\nabla_\iSc R^\mu{}_{\rho a \iSb})
\Big)
\phi_\mu \,d\sigma
\DEcomma
\end{align*}
which gives
(\ref{DivF_xi2dot_adapt})

\end{proof}

As stated the key advantage with using the Dixon representation is
that the components are tensors. Therefore writing
(\ref{DivF_xi4dot_adapt})-(\ref{DivF_xi_constraint_adapt}) using
spacetime indices, results in tensor equations which are valid for all
coordinate systems. We introduce the spatial projection tensor
\begin{align}
\Proj^{\rho}_{\alpha} = \delta^{\rho}_{\alpha} - \Cdot^{\rho}\,\DixVec_{\alpha}
\label{DivF_def_Proj}
\DEfullstop
\end{align}

\begin{theorem}
\label{thm_DixQuad_Ten}
The \TSchange{divergenceless} condition (\ref{SE_Div_Free}) \TSchange{corresponds} to the
following tensor equations for the components
\begin{align}
\nabla_\Cdot (\DixVec_\nu \,\xi^{\mu\nu\rho\sigma})
&=
-2 \,\Proj^{\rho}_{\beta}\, \Proj^{\sigma}_{\alpha} \,\xi^{\mu(\beta\alpha)}
\label{DivF_xi4dot}
\DEcomma
\\
\nabla_\Cdot (\DixVec_\nu \,\xi^{\mu\nu\rho})
&=
\Proj^{\rho}_{\alpha} 
\big(-\xi^{\mu \alpha} 
+
\tfrac12 \DixVec_\nu \Cdot^{\beta} \,
  \xi^{\mu \nu \lambda \sigma} R^{\alpha}{}_{\lambda \beta \sigma}
+
(\DixVec_\nu \Cdot^\beta + \tfrac12\Proj^\beta_\nu)\, 
\xi^{\lambda \nu \sigma \alpha} 
R^{\mu}{}_{\lambda\beta\sigma} 
+
\tfrac16\Proj^\beta_\nu\, 
\xi^{\alpha \nu \lambda \sigma} R^{\mu}{}_{\lambda \beta \sigma}
\big)
\DEcomma
\label{DivF_xi3dot}
\\
\nabla_\Cdot (\DixVec_\nu \,\xi^{\mu\nu})
&=
(\DixVec_\nu\Cdot^\beta +\tfrac12\Proj^\beta_\nu)
\xi^{\rho \nu \lambda}R^{\mu}{}_{\rho \beta \lambda}
+\nabla_\Cdot \big(
(\tfrac12\DixVec_\nu\Cdot^\beta +\tfrac16\Proj^\beta_\nu)
\DixVec_{\zeta}\, 
\xi^{\mu \nu \lambda\sigma}\,R^{\zeta}{}_{\lambda \beta \sigma} 
\big)
\notag
\\&\qquad\qquad
+(\tfrac12\DixVec_\nu\Cdot^\beta -\tfrac13\Proj^\beta_\nu)\,
\xi^{\rho \nu \lambda\sigma}\,
(\nabla_\lambda R^\mu{}_{\rho \beta \sigma})
\label{DivF_xi2dot}
\DEcomma
\end{align}
together with the constraint
\begin{align}
\Proj^\beta_\nu\,\Proj^\alpha_\rho\,
\Proj^\lambda_\sigma\,\xi^{\mu(\nu\rho\sigma)}
= 0
\label{DivF_xi_constraint}
\DEfullstop
\end{align}
\end{theorem}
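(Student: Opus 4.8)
The plan is to treat (\ref{DivF_xi4dot})–(\ref{DivF_xi_constraint}) as genuine tensor equations along $C$ and to prove them by evaluating both sides in the Dixon adapted coordinate system, where they must collapse onto the component equations (\ref{DivF_xi4dot_adapt})–(\ref{DivF_xi_constraint_adapt}) established in the preceding theorem. Since every term in (\ref{DivF_xi4dot})–(\ref{DivF_xi_constraint}) is built covariantly from the tensors $\xi$, $\Cdot^\mu$, $\DixVec_\mu$, the curvature $R^\mu{}_{\nu\rho\sigma}$ and its covariant derivative, together with the projector $\Proj^\rho_\alpha$, the difference of the two sides of each identity is itself a tensor on $C$; hence it suffices to show that this difference vanishes in a single coordinate system at each point of $C$, and the adapted system is the natural choice.

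First I would record the adapted-coordinate values of the auxiliary objects. Along $C$ one has $\Cdot^\mu=\delta^\mu_0$, and the normalisation (\ref{Intro_Tab_Dixon_N_NonOrth}) together with $\DixVec_\mu V^\mu=0$ for $V^\mu\in\Nspace(\sigma)$ forces $\DixVec_\mu=\delta^0_\mu$. Consequently $\Proj^\rho_\alpha=\delta^\rho_\alpha-\Cdot^\rho\DixVec_\alpha$ reduces to the spatial projector, acting as the identity on Latin indices and annihilating the $0$-index, while the constraints (\ref{Basic_Dix_Cons_quaf}) become $\xi^{\mu\nu 0}=0$ and $\xi^{\mu\nu 0\sigma}=0$. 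Thus contracting any derivative slot with $\DixVec$ kills the component, and projecting such a slot with $\Proj$ merely selects its spatial part.

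The core of the argument is then a term-by-term translation. On the left-hand sides, the worldline integration by parts carried out in the preceding proof acts on a fully contracted scalar, so that transferring $\nabla_0$ onto $\xi$ produces precisely $\nabla_\Cdot$ applied to the $\DixVec$-contracted tensor; hence $\nabla_\Cdot(\DixVec_\nu\xi^{\mu\nu\rho\sigma})$, $\nabla_\Cdot(\DixVec_\nu\xi^{\mu\nu\rho})$ and $\nabla_\Cdot(\DixVec_\nu\xi^{\mu\nu})$ reduce in adapted coordinates to $\nabla_0\xi^{\mu0\rho\sigma}$, $\nabla_0\xi^{\mu0\rho}$ and $\nabla_0\xi^{\mu0}$, matching the left-hand sides of (\ref{DivF_xi4dot_adapt})–(\ref{DivF_xi2dot_adapt}). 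On the right-hand sides I would expand each summed Greek index via the completeness relation $\delta^\beta_\nu=\Cdot^\beta\DixVec_\nu+\Proj^\beta_\nu$, splitting every term into a ``time'' piece (the $\Cdot^\beta\DixVec_\nu$ part, selecting the value $0$) and a ``space'' piece (the $\Proj^\beta_\nu$ part, selecting a Latin index). This is what produces the composite prefactors such as $\DixVec_\nu\Cdot^\beta+\tfrac12\Proj^\beta_\nu$, each coefficient being the weight with which the time and space contributions occur. The constraint (\ref{DivF_xi_constraint}) is the simplest case: projecting all three symmetrised indices of $\xi^{\mu(\nu\rho\sigma)}$ with $\Proj$ isolates the all-spatial part $\xi^{\mu(abc)}$, so (\ref{DivF_xi_constraint}) is equivalent to (\ref{DivF_xi_constraint_adapt}).

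I expect the main obstacle to be the bookkeeping in the curvature-coupling terms of (\ref{DivF_xi3dot}) and especially (\ref{DivF_xi2dot}). One must check that, after the $0$/spatial split, each Riemann factor carries its indices on the correct slot — a lower index frozen to $0$ signalling a contraction with $\Cdot$ (on the third slot) or with $\DixVec$ (via $\DixVec_\zeta R^\zeta{}_{\lambda\beta\sigma}$), and a spatial index signalling a $\Proj$-projected index — and that the numerical coefficients $\tfrac12$, $\tfrac16$, $\tfrac13$ together with the order in which $\nabla_\Cdot$ acts inside terms like $\nabla_\Cdot\big(\DixVec_\zeta\,\xi^{\mu\nu\lambda\sigma}R^\zeta{}_{\lambda\beta\sigma}\big)$ reproduce the adapted expressions exactly. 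Several would-be cross terms vanish because contracting a derivative slot of $\xi$ with $\DixVec$ annihilates it, but verifying that no spurious time-contraction survives, and keeping careful track of the symmetries (\ref{SE_Xi_Symm_quad}) of $\xi$ in its derivative slots when relabelling dummy Latin indices, is the step that demands the most care.
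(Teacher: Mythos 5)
Your proposal is correct and takes essentially the same route as the paper: the paper's own proof likewise promotes the adapted-coordinate equations of the preceding theorem to tensor form, checks that in Dixon adapted coordinates (where $\Cdot^\beta=\delta^\beta_0$ and $\DixVec_\nu=\delta^0_\nu$) the tensor equations \eqref{DivF_xi4dot}--\eqref{DivF_xi_constraint} collapse back onto \eqref{DivF_xi4dot_adapt}--\eqref{DivF_xi_constraint_adapt}, and then invokes tensoriality to conclude validity in all coordinate systems. Your observation that contracting a derivative slot of $\xi$ with $\DixVec$ annihilates it is the same point the paper records when noting that projection with $\Proj^\beta_\nu$ is unnecessary on the third or fourth index of $\xi^{\rho\nu a}$ and $\xi^{\rho\nu ab}$.
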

\begin{proof}
The equations
(\ref{DivF_xi4dot_adapt})-(\ref{DivF_xi_constraint_adapt}) are
replaced by 
(\ref{DivF_xi4dot})-(\ref{DivF_xi_constraint}) as follows. 
$\nabla_0$ is replaced by $\nabla_\Cdot$.
Each lower index $0$ it is necessary to contract with
$\Cdot^\beta$ and each upper index $0$ it is necessary to contract with
$\DixVec_\nu$. Each spatial index $a,b,\ldots$ it is necessary to
project out using $\Proj^\beta_\nu$. However if the spatial index is
one the third or fourth index of $\xi^{\rho \nu a}$ or  
$\xi^{\rho \nu a b}$ the projection is not necessary.

As a result, in the adapted coordinate system where
$\Cdot^\beta=\delta^\beta_0$ and $\DixVec_\nu=\delta^0_\nu$ then
(\ref{DivF_xi4dot})-(\ref{DivF_xi_constraint})
become
(\ref{DivF_xi4dot_adapt})-(\ref{DivF_xi_constraint_adapt}).
However, (\ref{DivF_xi4dot})-(\ref{DivF_xi_constraint}) are clearly
tensorial equations and therefore true for all coordinate systems.
\end{proof}

As was observed in \cite{gratus2020distributional}, the equations for the components arising
from the \TSchange{divergenceless} condition of the stress-energy tensor are
insufficient to completely determine the dynamics of the components.
To see this, from
(\ref{SE_Xi_Symm_quad}) we have  (10+40+100=150) components. From
(\ref{Basic_Dix_Cons_quaf}) this reduces to (150-10-40=100)
components. Using either (\ref{DivF_xi_constraint_adapt}) or
(\ref{DivF_xi_constraint}) gives us 40 constraints, hence there are 60
components. Equations
(\ref{DivF_xi4dot_adapt})-(\ref{DivF_xi2dot_adapt}) or
(\ref{DivF_xi4dot})-(\ref{DivF_xi2dot}) give 40 ODEs. Hence there are
20 free components.
These still have to be determined via
constitutive relations from a model of the underlying material.

These are similar to the equations have been found by Steinhoff and
Puetzfeld \cite{steinhoff2010multipolar}. However their equations are
implicit. On the right  (\ref{DivF_xi2dot}) we see that there is a
covariant derivative of $\xi^{\rho \nu \lambda\sigma}$. However this
term can be expanded out to identified $\nabla_\Cdot\xi^{\rho \nu
  \lambda\sigma}$. One can then substitute in (\ref{DivF_xi4dot}) and
the appropriate constitutive relations.

\begin{corrol}

As a simple consistency check we see if we reduce to a dipole. Set 
\begin{align}
\xi^{\rho \nu \lambda\sigma} = 0
,\quad
\xi^{\rho \nu \lambda} 
= X^{\lambda}\dot{C}^{\rho}\dot{C}^{\nu}+S^{\lambda ( \rho}\dot{C}^{\nu )}
\quadand
\xi^{\rho \nu} = 
2 P^{( \rho}\dot{C}^{\nu )} -2 m \dot{C}^{\rho} \dot{C}^{\nu} 
\label{DivF_dipole_xi}
\DEcomma
\end{align}
where $\DixVec_\nu=-\Cdot_\nu$,
$m$ is the rest mass, 
$X^{\iMa}$ is the displacement vector,
$P^{\iMa}$ is the  rate of change of the displacement vector and 
$S^{\iMa\iMb}$ is the spin tensor satisfy
\begin{align}
X_{\iMa}\,\Cdot^{\iMa}=0
,\quad
P_{\iMa}\,\Cdot^{\iMa}=0
,\quad
\Cdot_{\iMa}\,S^{\iMa\iMb}=0
\quadand
S^{\iMa\iMb}+S^{\iMb\iMa}=0
\label{DivF_diploe_constr}
\DEfullstop
\end{align}
Then we get the Mathisson-Papapetrou-Tulczyjew-Dixon for a dipole
along a geodesics.
\begin{align}
\dot{m}=0
\,,\quad
\nabla_\Cdot{X^{\iMa}}
= -P^{\iMa}
\,,\quad
\nabla_\Cdot{P^{\iMa}} 
= 
\tfrac12 R^{\iMa}{}_{\iMb\iMc\iMd}\,\Cdot^\iMb\, S^{\iMd\iMc} 
+R^{\iMa}{}_{\iMb\iMc\iMd}\,\Cdot^\iMb\, \Cdot^\iMc\,X^\iMd
\quadand
\nabla_\Cdot{S^{\iMa\iMb}}
&=
0
\label{DivF_diploe_DSab}
\DEcomma
\end{align}
\end{corrol}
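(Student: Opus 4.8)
The plan is to substitute the dipole ansatz \eqref{DivF_dipole_xi} into the tensorial equations of Theorem~\ref{thm_DixQuad_Ten} and read off \eqref{DivF_diploe_DSab}. Since $\xi^{\rho\nu\lambda\sigma}=0$, every quadrupole term drops out, so \eqref{DivF_xi4dot} and the constraint \eqref{DivF_xi_constraint} collapse to purely algebraic consistency conditions. I would first dispose of these. With $\DixVec_\nu=-\Cdot_\nu$ the relation \eqref{Intro_Tab_Dixon_N_NonOrth} forces $\Cdot_\mu\Cdot^\mu=-1$, and the projector \eqref{DivF_def_Proj} becomes $\Proj^\rho_\beta=\delta^\rho_\beta+\Cdot^\rho\Cdot_\beta$, i.e.\ the spatial projector orthogonal to $\Cdot$. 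Feeding the ansatz into the right-hand side of \eqref{DivF_xi4dot}, the $X$-term dies against $\Proj$ and the two surviving $S$-terms cancel by the antisymmetry $S^{\mu\nu}=-S^{\nu\mu}$ in \eqref{DivF_diploe_constr}, so $\Proj^\rho_\beta\Proj^\sigma_\alpha\,\xi^{\mu(\beta\alpha)}=0$ holds identically; \eqref{DivF_xi_constraint} is then trivial.

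The next step is to precompute the two contractions appearing on the left-hand sides of the remaining equations. Using the orthogonality relations \eqref{DivF_diploe_constr} one finds $\DixVec_\nu\xi^{\mu\nu}=P^\mu-2m\,\Cdot^\mu$ and $\DixVec_\nu\xi^{\mu\nu\rho}=X^\rho\Cdot^\mu+\tfrac12 S^{\rho\mu}$, together with $\Proj^\beta_\nu\xi^{\rho\nu\lambda}=\tfrac12 S^{\lambda\beta}\Cdot^\rho$. I would take the worldline to be an affinely parametrised geodesic, $\nabla_\Cdot\Cdot^\mu=0$ (this is the ``along a geodesic'' hypothesis), which keeps $\Cdot_\mu\Cdot^\mu$ constant and guarantees, via the constraints \eqref{DivF_diploe_constr}, that $\nabla_\Cdot X^\mu$, $\nabla_\Cdot P^\mu$ and $\nabla_\Cdot S^{\mu\nu}$ all stay orthogonal to $\Cdot$.

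For the dipole level \eqref{DivF_xi3dot}, dropping the quadrupole leaves the right-hand side $\Proj^\rho_\alpha(-\xi^{\mu\alpha})=-P^\rho\Cdot^\mu$, while the left-hand side is $\nabla_\Cdot(X^\rho\Cdot^\mu+\tfrac12 S^{\rho\mu})=(\nabla_\Cdot X^\rho)\Cdot^\mu+\tfrac12\nabla_\Cdot S^{\rho\mu}$. Splitting this identity into its part along $\Cdot^\mu$ (contract the free index with $\Cdot_\mu$) and its part orthogonal to $\Cdot^\mu$ (apply $\Proj$) separates the content: the parallel piece gives $\nabla_\Cdot X^\mu=-P^\mu$ and the orthogonal piece gives $\nabla_\Cdot S^{\mu\nu}=0$. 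For the monopole level \eqref{DivF_xi2dot}, again only the $\xi^{\rho\nu\lambda}$ terms survive, and after substitution the left side is $\nabla_\Cdot P^\mu-2\dot{m}\,\Cdot^\mu$. Contracting the free index with $\Cdot_\mu$ annihilates the curvature terms, since each then carries a symmetric factor $\Cdot^\mu\Cdot^\nu$ contracted against $R_{\mu\nu\rho\sigma}$, which is antisymmetric in $\mu\nu$; this yields $\dot{m}=0$, and the orthogonal projection supplies the momentum law.

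The main obstacle I expect is the curvature bookkeeping in this last step. The three spin--curvature contributions generated by the $\DixVec_\nu\Cdot^\beta$ and $\tfrac12\Proj^\beta_\nu$ pieces must be reorganised, using the antisymmetry of $R^\mu{}_{\rho\beta\lambda}$ in $\beta\lambda$ and the first Bianchi identity $R^\mu{}_{[\rho\beta\lambda]}=0$, into the single term $\tfrac12 R^\mu{}_{\nu\rho\sigma}\Cdot^\nu S^{\sigma\rho}$ with the correct coefficient, while the $X$-curvature contribution $\Cdot^\beta X^\lambda\Cdot^\rho R^\mu{}_{\rho\beta\lambda}$ matches $R^\mu{}_{\nu\rho\sigma}\Cdot^\nu\Cdot^\rho X^\sigma$ directly after relabelling. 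Everything else is routine index manipulation using \eqref{DivF_diploe_constr} and the geodesic condition.
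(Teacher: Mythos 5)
Your proposal is correct and follows essentially the same route as the paper: substitute the ansatz \eqref{DivF_dipole_xi} into the tensorial equations of theorem \ref{thm_DixQuad_Ten}, use the geodesic condition and the constraints \eqref{DivF_diploe_constr} to evaluate the contractions $\DixVec_\nu\xi^{\mu\nu}$, $\DixVec_\nu\xi^{\mu\nu\rho}$ and $\Proj^\beta_\nu\xi^{\rho\nu\lambda}$ (your values match the paper's), and then split each equation into its parts along and orthogonal to $\Cdot^\mu$. One small caveat: in the $\dot{m}=0$ step, the spin term $\tfrac12\Cdot^\beta S^{\lambda\rho}R^\mu{}_{\rho\beta\lambda}$ puts the two $\Cdot$'s on the first and third slots of the Riemann tensor, so its vanishing under $\Cdot_\mu$--contraction requires the pair-exchange symmetry (equivalently, performing your Bianchi reorganisation into the form $\tfrac12 R^\mu{}_{\nu\rho\sigma}\Cdot^\nu S^{\sigma\rho}$ first), not merely antisymmetry of the leading index pair as stated.
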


\begin{proof}
Since $C$ is a geodesic and $\DixVec_\nu=-\Cdot_\nu$ then $\nabla_\Cdot
\DixVec_\nu=0$. From (\ref{DivF_dipole_xi}) we have
\begin{align*}
\nabla_\Cdot (\DixVec_\nu \,\xi^{\mu\nu\rho})
=
\nabla_\Cdot
\Big(\DixVec_\nu \big(X^{\rho}\dot{C}^{\mu}\dot{C}^{\nu}+S^{\rho ( \mu}\dot{C}^{\nu )}\big)\Big)
=
\nabla_\Cdot
\big(X^{\rho}\dot{C}^{\mu}
+\tfrac12 S^{\rho  \mu}
\big)
=
\dot{C}^{\mu} \nabla_\Cdot X^{\rho} 
+
\tfrac12 \nabla_\Cdot S^{\rho  \mu}
\end{align*}
While from (\ref{DivF_xi3dot}) we have
\begin{align*}
\nabla_\Cdot (\DixVec_\nu \,\xi^{\mu\nu\rho})
&=
-\Proj^{\rho}_{\alpha} \xi^{\mu \alpha} 
=
2\Proj^{\rho}_{\alpha} 
\big( m \dot{C}^{\mu} \dot{C}^{\alpha}- P^{( \mu}\dot{C}^{\alpha )}\big)
=
-P^{\rho}\dot{C}^{\mu}
\end{align*}
giving
\begin{align*}
\dot{C}^{\mu} \nabla_\Cdot X^{\rho} 
+
\tfrac12 \nabla_\Cdot S^{\rho  \mu}
=
-P^{\rho}\dot{C}^{\mu}
\end{align*}
Projecting this out with respect to $\Cdot_\mu$ and $\pi_\mu^\alpha$
gives (\ref{DivF_diploe_DSab}.2)
and  (\ref{DivF_diploe_DSab}.4).

From (\ref{DivF_dipole_xi}.3) we have
\begin{align*}
 \n_{\dot{C}}(N_{\nu}\xi^{\mu \nu})
=
\n_{\dot{C}}\Big(N_{\nu} 
\big(P^{( \mu}\dot{C}^{\nu )}  -2 m \dot{C}^{\mu} \dot{C}^{\nu}\big)\Big)
=
\n_{\dot{C}}P^{\mu}  
-
2 \dot{C}^{\mu} \n_{\dot{C}} m 
=
\n_{\dot{C}}P^{\mu}  
-
2 \dot{C}^{\mu} \dot{m}
\n_{\dot{C}}P^{\mu} 
\end{align*}
Substituting (\ref{DivF_dipole_xi}.1) into (\ref{DivF_xi2dot}) gives
\begin{align*}
 \n_{\dot{C}}(N_{\nu}\xi^{\mu \nu})
&=
 (\DixVec_\nu\Cdot^\beta +\tfrac12\Proj^\beta_\nu)
\xi^{\rho \nu \lambda}R^{\mu}{}_{\rho \beta \lambda} 
=
 (\DixVec_\nu\Cdot^\beta +\tfrac12\Proj^\beta_\nu)
\LF X^{\lambda}\dot{C}^{\rho}\dot{C}^{\nu}
+S^{\lambda ( \rho}\dot{C}^{\nu )} \RF R^{\mu}{}_{\rho \beta \lambda}  
\\&=
 \dot{C}^{\beta}X^{\lambda}\dot{C}^{\rho}R^{\mu}{}_{\rho \beta \lambda} 
+ \frac{1}{2}N_{\nu}\dot{C}^{\beta}\dot{C}^{\nu}S^{\lambda \rho}R^{\mu}{}_{\rho \beta \lambda}
+\frac{1}{2}N_{\nu}\dot{C}^{\beta}\dot{C}^{\rho}S^{\lambda \nu}R^{\mu}{}_{\rho \beta \lambda} 
\\&\qquad\qquad
+\tfrac12\Proj^\beta_\nu \LF X^{\lambda}\dot{C}^{\rho}\dot{C}^{\nu}
+S^{\lambda ( \rho}\dot{C}^{\nu )} \RF R^{\mu}{}_{\rho \beta \lambda}  
\\&=
 X^{\lambda} \dot{C}^{\beta}\dot{C}^{\rho}R^{\mu}{}_{\rho \beta \lambda} 
+ \frac{1}{2}\dot{C}^{\beta}S^{\lambda \rho}R^{\mu}{}_{\rho \beta \lambda} 
+\tfrac12\Proj^\beta_\nu \LF X^{\lambda}\dot{C}^{\rho}\dot{C}^{\nu}
+S^{\lambda ( \rho}\dot{C}^{\nu )} \RF R^{\mu}{}_{\rho \beta \lambda}  
\\&=
 X^{\lambda} \dot{C}^{\beta}\dot{C}^{\rho}R^{\mu}{}_{\rho \beta \lambda} 
+ \frac{1}{2}\dot{C}^{\beta}S^{\lambda \rho}R^{\mu}{}_{\rho \beta \lambda} 
+ \frac{1}{4}\dot{C}^{\rho}S^{\lambda \beta}R^{\mu}{}_{\rho \beta \lambda}  
\\&=
 X^{\lambda} \dot{C}^{\beta}\dot{C}^{\rho}R^{\mu}{}_{\rho \beta \lambda} 
+ \frac{1}{4}\dot{C}^{\rho}S^{\lambda \beta}R^{\mu}{}_{\rho \beta \lambda} 
+ \frac{1}{4}\dot{C}^{\rho}S^{\lambda \beta}R^{\mu}{}_{\rho \beta \lambda}  
\\&=
 X^{\lambda} \dot{C}^{\beta}\dot{C}^{\rho}R^{\mu}{}_{\rho \beta \lambda} 
+ \frac{1}{2}\dot{C}^{\rho}S^{\lambda \beta}R^{\mu}{}_{\rho \beta \lambda}
\DEfullstop
\end{align*}
hence
\begin{align*}
\n_{\dot{C}}P^{\mu}  
-
2 \dot{C}^{\mu} \dot{m}
=
 X^{\lambda} \dot{C}^{\beta}\dot{C}^{\rho}R^{\mu}{}_{\rho \beta \lambda} 
+ \frac{1}{2}\dot{C}^{\rho}S^{\lambda \beta}R^{\mu}{}_{\rho \beta \lambda}
\end{align*}
Projecting this out with respect to $\Cdot_\mu$ and $\pi_\mu^\alpha$
gives (\ref{DivF_diploe_DSab}.1)
and  (\ref{DivF_diploe_DSab}.3).
\end{proof}



\section{Comparison with Dixon's results}
\label{ch_Dix}

\def\DelM#1{{\stackrel{M}\nabla_{\Cdot}}#1}

In his work, Dixon makes two conjectures for the dynamics of the
components of a quadrupole. Neither of these equations correspond to
the \TSchange{divergenceless} condition. Thus they are not the generalisation of
the Mathisson-Papapetrou–Tulczyjew–Dixon equations for the
quadrupole. 

Recall from \TSchange{\cite{gratus2020distributional}}, that we can identify
$\xi^{\mu\nu}=I^{\mu\nu}$,
$\xi^{\mu\nu\kappa}=-I^{\kappa\mu\nu}$ 
and $\xi^{\mu\nu\kappa\lambda}=I^{\kappa\lambda\mu\nu}$. Using
\cite[eqn. (1.37)]{DixonIII} and \cite[eqn. (2.4)]{dixon1967description} we 
can\footnote{In \cite{gratus2020distributional}, we identified
  $J^{\kappa\lambda\mu\nu}$ incorrectly.} write
$J^{\kappa\lambda\mu\nu}=I^{[\kappa[\lambda\mu]\nu]}
=
\tfrac14(
I^{\kappa\lambda\mu\nu}
-I^{\mu\lambda\kappa\nu}
-I^{\kappa\nu\mu\lambda}
+I^{\mu\nu\kappa\lambda})$.

In \cite[eqns. (7.34)-(7.37)]{dixon1970dynamics}  Dixon proposes a 
simple rotational dynamics. Here he introduces a connection (7.18), $\DelM$
which he writes as $\stackrel{(m)}{\frac{\delta}{ds}}$
\begin{align}
\DelM{B^\kappa} 
=
\nabla_\Cdot B^\kappa - 
(\dot{u}^\kappa\,u^\lambda -\dot{u}^\lambda\,u^\kappa )
B^\lambda
\label{Dix_def_nabla_M}
\DEcomma
\end{align}
where $u^\kappa$ is described as the body's dynamical velocity.
Using this connection we can define a rotation tensor
$\chi\Omega_{\lambda\kappa}$ where (7.34)
\begin{align}
\DelM{B^\kappa} 
=
\chi\Omega^\kappa{}_{\lambda}\, B^\lambda
,\quad
\Omega_{(\kappa\lambda)}=0
\quadand
\Omega^\kappa{}_{\lambda}\, u^\lambda = 0
\label{Dix_def_Omega}
\DEfullstop
\end{align}
From this the dynamical equation for a non-rotating quadrupole is
given by (7.36)
\begin{align}
\DelM{J^{\kappa\lambda\mu\nu}}
=
-\chi\Omega^\kappa{}_{\rho} J^{\lambda\rho\mu\nu}
+\chi\Omega^\lambda{}_{\rho} J^{\kappa\rho\mu\nu}
-\chi\Omega^\nu{}_{\rho} J^{\kappa\lambda\rho\mu}
+\chi\Omega^\mu{}_{\rho} J^{\kappa\lambda\rho\nu}
\label{Dix_J_Dyn}
\DEfullstop
\end{align}
It is clear that such a non-rotating quadrupole would not satisfy
(\ref{DivF_xi4dot})--(\ref{DivF_xi_constraint}) and therefore not
correspond to a \TSchange{divergenceless} stress-energy tensor.

\vspace{1em}

By contrast in \cite[eqn. (4.11)]{DixonII} Dixon proposes a non-dynamical equation based on symmetry. In our language this becomes
\begin{align}
\xi^{\mu(\nu\rho\sigma)}
= 0
\label{Dix_xi_constraint}
\DEfullstop
\end{align}
This is very similar to the constraint (\ref{DivF_xi_constraint}) but
without the projections. Interestingly such a constraint gives the
same number of free components as the dynamical equations arising from
the divergenceless condition.
Again such conditions on $J^{\mu\nu\rho\sigma}$ do not correspond to
the divergenceless condition.


\section{Discussion and Conclusion}
\label{ch_Concl}

The key result of this article is derivation of the dynamics of the
Dixon quadrupole, as given in section \ref{ch_DivF}.  
It is clear looking at the method used to derive the
equations, that this could be extended to arbitrary order
multipoles, giving rise to higher covariant derivatives of the
curvature. In section \ref{ch_Dix} we compared these equations with
two which had be conjectured by Dixon. We observe that since Dixon's
equations do not couple to curvature \TSchange{as they} were not compatible with
the \TSchange{divergenceless} property of the stress-energy tensor.

In order to get to derive the dynamical equations it was necessary to
establish many properties of general Dixon multipoles, 
presented in section \ref{ch_J}. These
include the fact that all multipoles can be represented as Dixon
multipoles, that the components were unique and that they can be
extracted using \TSchange{particular} test tensors. We concluded this
section showing the link between the moments of \TSchange{regular tensors and
multipoles}.

\begin{figure}
\begin{subfigure}{0.4\textwidth}
\begin{tikzpicture}
\draw [ultra thick,black,->-=0.8] (0.5,-2) to[out=80,in=-80] 
(0,0) to[out=100,in=-100] (-0.5,2) ;
\shadedraw (0,0) [rotate={-10},yscale=0.3] circle (1.5) ;
\draw [ultra thick,black,dashed] (0.5,-2) to[out=80,in=-80] (0,0) ;
\draw [ultra thick,black] (0,0) to[out=100,in=-100] (-0.5,2) ;
\draw [ultra thick,black!40!green,->] (0,0) -- node
      [pos=1.1]{$\,\,\DixVec_{\iMa}$} (100:1.5) ;
\end{tikzpicture}
\caption{The Dixon vector given by the tangent to the worldline.}
\label{fig_DV_tangent}
\end{subfigure}
\qquad
\begin{subfigure}{0.5\textwidth}
\begin{tikzpicture}
\draw [very thick,shift={(0,0)}] (-135:1.7) -- ++(45:4)
      node[above]{$q$} -- +(-45:4) ;
\draw [very thick,dashed,shift={(0,-1.25)}] (1.63,0) [yscale=0.2] circle (2.8) ;
\draw [ultra thick,black,->-=0.8] (0.5,-2) to[out=80,in=-80] 
(0,0) to[out=100,in=-100] (-0.5,2) ;
\shadedraw (0,0) [rotate={45},yscale=0.3] circle (1.5) ;
\draw [ultra thick,black,dashed] (0.5,-2) to[out=80,in=-80] (0,0) ;
\draw [ultra thick,black] (0,0) to[out=100,in=-100] (-0.5,2) ;
\draw [ultra thick,black!40!green,->] (0,0) -- node
      [above={4pt},pos=0.8]{$\,\,\DixVec_{\iMa}$} (45:1.7) ;
\end{tikzpicture}
\caption{The Dixon vector given by the backward lightcone of a distant
  observer event $q$.}
\label{fig_DV_back_light}
\end{subfigure}

\caption{The Dixon vector $\DixVec_{\iMa}$ and the corresponding
  Dixon \TSchange{geodesic hypersurfaces}, for two different scenarios.}
\label{fig_DV}
\end{figure}
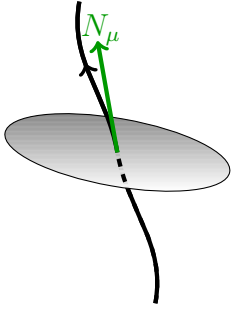
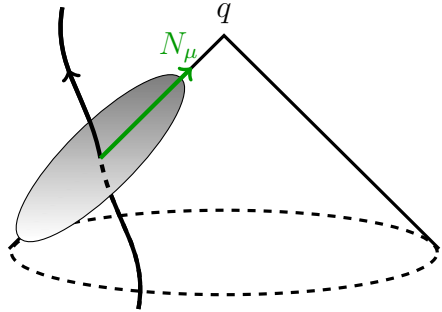

As noted throughout this article, all the results depend on the choice
of the Dixon vector.  There are multiple choices of such Dixon
vectors, and there may not be a natural one. In figure \ref{fig_DV} we
see two such choice. In figure \ref{fig_DV_tangent} we show the Dixon
vector given by the tangent of the worldline. This is a natural
choice if one is interested in the multipole dynamics as observed by
the particle itself. 

Alternatively, one may wish to model the dynamics of the
multipole as we, as distant observers, see it. As seen in figure
\ref{fig_DV_back_light}, this involves constructing the backward
lightcone from each event $q$ in our worldline. This give rise to a
lightlike Dixon vector which points from the multipole worldline, $C$
to $q$. The corresponding Dixon geodesic hypersurface is then tangent
to backward lightcone. It does not, however, coincide with the
backward lightcone. For a very distant object the two would be very
close and this may be a sufficiently good approximation. However if the
object and the observer were closer this discrepancy may become
important. It would then be necessary to use the Ellis representation
of multipoles and a coordinate system adapted to backward lightcones.
As was observed, the Dixon vector $\DixVec_{\iMa}$ in figure
\ref{fig_DV_back_light} is lightlike. This does not affect any of the
calculation as the only constraint on the Dixon vector is 
(\ref{Intro_Tab_Dixon_N_NonOrth}).

As noted in section \ref{ch_DivF}, there are 20 free components of
quadrupole. \TSchange{The equations for these components, also known as constitutive relations, will need to use additional information}. This could
be knowledge of the underlying matter which makes up the extended
object, or they could arise from observation. Now that the
tensorial expression of the dynamical equations for the moments are
known, it will help in
establishing the constitutive relations for different objects.

Once the additional equations have been chosen, one could ask how
several quadrupoles gravitationally interact. As stated in
\cite{gratus2020distributional} this interaction can only be
perturbative, where the quadrupoles interact via gravitational waves. There
is still the issue of gravitational radiation reaction. However, this
can be avoided if we only consider the action of one quadrupoles by
the fields generated by all the other quadrupoles. This is similar to
the electromagnetic case considered in \cite{gratus2022maxwell}.

\section*{Acknowledgements}

JG is grateful for the support provided by STFC (the Cockcroft
Institute ST/P002056/1 and ST/V001612/1). ST would like to thank the Faculty of
Science and Technology, Lancaster University for their support. 

   We would like to thank Alex Warwick, Lancaster University, for
   reading the manuscript and making 
suggestions.


\bibliographystyle{unsrt}
\bibliography{bibliography}

\end{document}